\documentclass[journal]{IEEEtran}

\usepackage{setspace, dsfont}
\usepackage{amssymb, amsmath, mathrsfs, amsthm, stmaryrd, upgreek, mathtools}
\usepackage{pgfplots, caption, tikz, subfigure, graphicx}
\usepackage{url, hyperref}
\usepackage[compress]{cite}
\usepackage{bbm}
\usetikzlibrary{spy}
\usepgflibrary{arrows} 
\usepackage{varwidth}

\usepackage{graphicx,wrapfig,tikz}
\usetikzlibrary{calc,hobby} 

\usepackage{scalerel,stackengine}
\stackMath
\newcommand\reallywidehat[1]{%
\savestack{\tmpbox}{\stretchto{%
  \scaleto{%
    \scalerel*[\widthof{\ensuremath{#1}}]{\kern-.6pt\bigwedge\kern-.6pt}%
    {\rule[-\textheight/2]{1ex}{\textheight}}
  }{\textheight}%
}{0.5ex}}%
\stackon[1pt]{#1}{\tmpbox}%
}

\makeatletter
\def\@IEEEsectpunct{.\ \,}
\makeatother

\usepackage[font=footnotesize]{caption}

\newcommand{\Rd}{\mathbb{R}^d}

\newcommand{\eps}{\varepsilon}

\newcommand{\N}{\mathbb{N}}

\newcommand{\R}{\mathbb{R}}

\newcommand{\QQ}{\mathcal{Q}}

\newcommand{\la}{\lambda}

\newcommand\mydots{\ifmmode\ldots\else\makebox[1em][c]{.\hfil.\hfil.}\fi}
\newcommand{\VV}{\Vert}

\newcommand{\Z}{\mathbb{Z}}

\DeclareMathOperator{\Real}{Re}
\DeclareMathOperator{\Imag}{Im}

\def\ba#1\ea{\begin{align*}#1\end{align*}}	
\def\ban#1\ean{\begin{align}#1\end{align}}	
\def\bac#1\eac{\vspace{\abovedisplayskip}{\par\centering$\begin{aligned}#1\end{aligned}$\par}\addvspace{\belowdisplayskip}}	

\newtheorem{theorem}{Theorem}

\newtheorem{lemma}{Lemma}
\newtheorem{assumption}{Assumption}
\newtheorem{definition}{Definition}
\newtheorem{proposition}{Proposition}
\newtheorem{corollary}{Corollary}

\newtheorem{remark}{Remark}


\usepackage{endnotes}

%

\hyphenation{op-tical net-works semi-conduc-tor}

\begin{document}
\title{Energy Propagation in \\ Deep Convolutional Neural Networks}

\author{Thomas~Wiatowski,
        Philipp~Grohs,
        and~Helmut~B\"olcskei,~\IEEEmembership{Fellow,~IEEE}
\thanks{T. Wiatowski and H. B\"olcskei are with the Department of Information Technology and Electrical Engineering, ETH Zurich,  Switzerland. Email:~\{withomas,~boelcskei\}@nari.ee.ethz.ch}
\thanks{P. Grohs is with the Faculty of Mathematics, University of Vienna, Austria. Email: philipp.grohs@univie.ac.at}
\thanks{The material in this paper was presented in part at the 2017 IEEE International Symposium on Information Theory (ISIT), Aachen, Germany. }
\thanks{Copyright (c) 2017 IEEE. Personal use of this material is permitted.  However, permission to use this material for any other purposes must be obtained from the IEEE by sending a request to pubs-permissions@ieee.org.}
}

\maketitle

\begin{abstract}
Many practical machine learning tasks employ very deep convolutional neural networks. Such large depths pose  formidable computational challenges in training and operating the network. It is therefore important to understand how fast the energy contained in the propagated signals (a.k.a. feature maps) decays across layers. In addition, it is desirable that the feature extractor generated by the network be informative in the sense of the only signal mapping to the all-zeros feature vector being the zero input signal. This ``trivial null-set''   property  can be accomplished by asking for ``energy conservation'' in the sense of the energy in the  feature vector being proportional to  that of the corresponding input signal. This paper establishes conditions for energy conservation (and thus for a trivial null-set) for a wide class of deep convolutional neural network-based feature extractors and characterizes corresponding feature map energy decay rates. Specifically, we consider general scattering networks employing the modulus non-linearity and we find  that under mild analyticity and high-pass conditions on the filters (which encompass, inter alia, various constructions of Weyl-Heisenberg filters, wavelets, ridgelets, ($\alpha$)-curvelets, and shearlets) the feature map energy decays at least polynomially fast. For broad families of wavelets and Weyl-Heisenberg filters, the guaranteed decay rate is shown to be exponential.  Moreover,  we provide handy estimates of the number of layers needed to have at least $((1-\varepsilon)\cdot 100)\%$ of the input signal energy be contained in the  feature vector.
\end{abstract}

\begin{IEEEkeywords}
Machine learning, deep convolutional neural networks, scattering networks, energy decay and conservation, frame theory.
\end{IEEEkeywords}

%
\IEEEpeerreviewmaketitle

\section{Introduction}
%
%
%
%
\IEEEPARstart{F}{eature} extraction based on deep convolutional neural networks (DCNNs) has been applied with significant success in a wide range of practical machine learning tasks \cite{Nature,Goodfellow-et-al-2016,Bengio,LeCunNIPS89,Rumelhart,LeCunProc}. Many of these applications, such as, e.g.,  the classification of images in the ImageNet data set, employ very deep networks with potentially hundreds of layers \cite{he2015deep}. Such network depths entail formidable computational challenges in the training phase due to the large number of parameters to be learned, and in operating the network due  to the large number of convolutions that need to be carried out. It is therefore paramount to understand  how fast the energy contained in the signals generated in the individual network layers, a.k.a. feature maps, decays across layers. In addition, it is important that the feature vector---obtained by aggregating filtered versions of the feature maps---be informative in the sense of the only signal mapping to the all-zeros feature vector being the zero input signal. This ``trivial null-set''   property for the feature extractor can be obtained by asking for the energy in the  feature vector being proportional to  that of the corresponding input signal, a property we shall refer to as ``energy conservation''.

Scattering networks as introduced in \cite{MallatS} and extended in \cite{Wiatowski_journal} constitute an important class of feature extractors based on nodes that implement convolutional transforms with pre-specified or learned filters in each network layer (e.g., wavelets \cite{MallatS,Waldspurger}, uniform covering filters \cite{czaja2016uniform}, or general filters \cite{Wiatowski_journal}), followed by a non-linearity (e.g., the modulus  \cite{MallatS,Waldspurger,czaja2016uniform}, or a general Lipschitz non-linearity \cite{Wiatowski_journal}), and a pooling ope\-ration (e.g., sub-sampling or average-pooling \cite{Wiatowski_journal}).  Scattering network-based feature extractors were shown to yield classification performance competitive with the state-of-the-art on various data sets \cite{Bruna,Anden,Sifre,wiatowski2016discrete,CInC,ICASSP2016}. Moreover, a  mathematical theory exists, which allows to establish formally that   such  feature extractors are---under certain technical conditions---horizontally \cite{MallatS} or vertically \cite{Wiatowski_journal} translation-invariant and deformation-stable in the sense of \cite{MallatS}, or exhibit limited sensitivity to deformations in the sense of \cite{Wiatowski_journal} on input signal classes such as band-limited functions \cite{Wiatowski_journal,balan2017lipschitz}, cartoon functions \cite{grohs_wiatowski}, and Lipschitz functions \cite{grohs_wiatowski}.

It was shown recently that the energy in the feature maps generated by scattering networks employing, in every network layer, the same set of (certain) Parseval wavelets \cite[Section 5]{Waldspurger} or ``uniform covering'' \cite{czaja2016uniform}  filters (both satisfying  analy\-ticity and vanishing moments  conditions), the modulus non-linearity, and no pooling, decays at least exponentially fast and ``strict'' energy conservation (which, in turn, implies a trivial null-set) for the infinite-depth feature vector holds. Specifically,  the feature map energy decay was shown to be at least of order $\mathcal{O}(a^{-N})$, for some \textit{unspecified} $a>1$, where $N$ denotes the network depth. 
  We note that $d$-dimensional uniform covering filters as introduced in \cite{czaja2016uniform} are  functions  whose   Fourier transforms' support sets can be covered by a union of finitely many balls. This covering condition is satisfied by, e.g., Weyl-Heisenberg filters  \cite{Groechenig} with a band-limited prototype function, but fails to hold for multi-scale filters such as wavelets \cite{Daubechies,MallatW}, $(\alpha)$-curvelets \cite{CandesNewTight,CandesDonoho2,Grohs_alpha}, shearlets \cite{OriginShearlets,Shearlets}, or ridgelets \cite{Ridgelet,DonohoCandesRidgelet,Grohs_transport}, see \cite[Remark 2.2 (b)]{czaja2016uniform}.

\paragraph*{Contributions} 
The first main contribution of this paper is a characterization of the feature map energy decay rate in DCNNs employing the modulus non-linearity, no pooling, and \textit{general} filters that constitute a frame \cite{Antoine,Christensen,Kaiser,Daubechies}, but not necessarily a Parseval frame, and are allowed to be different in different network layers. We find that, under mild analyticity and high-pass conditions on the filters, the energy decay rate is at least polynomial in the network depth, i.e., the decay is at least of order $\mathcal{O}(N^{-\alpha})$, and we \emph{explicitly} specify the decay exponent $\alpha>0$. This result encompasses, inter alia, various constructions of Weyl-Heisenberg filters, wavelets, ridgelets, ($\alpha$)-curvelets, shearlets, and learned filters (of course as long as the learning algorithm imposes the analyticity and high-pass conditions   we require). For broad families of wavelets and Weyl-Heisenberg filters, the guaranteed energy decay rate is shown to be exponential in the network depth, i.e., the decay is at least of order $\mathcal{O}(a^{-N})$ with the decay factor given as $a=\frac{5}{3}$  in the wavelet case and $a=\frac{3}{2}$ in the Weyl-Heisenberg case. We hasten to add that our results constitute \emph{guaranteed} decay rates and do not preclude the energy from decaying faster in practice. 

Our second main contribution shows that the energy decay results above are compatible with a trivial null-set for finite- and infinite-depth networks. Specifically, this is accomplished by establishing energy proportionality between the feature vector and the underlying input signal with the proportionality constant lower- and upper-bounded by the frame bounds of the filters employed in the different layers.  We show that this energy conservation result is a consequence of a demodulation effect induced by the modulus non-linearity in combination with the analyticity and high-pass properties of the filters. Specifically, in every network layer, the modulus non-linearity moves the spectral content of each individual feature map to base-band (i.e., to  low frequencies), where it is subsequently extracted (i.e., fed into the feature vector) by a low-pass output-generating filter. 

Finally, for input signals that belong to the class of Sobolev functions\footnote{A wide range of practically relevant signal classes are Sobolev functions, for example,  band-limited functions and---as established in the present paper---cartoon functions \cite{Cartoon}. We note that cartoon functions are widely used in the mathematical signal processing literature \cite{grohs_wiatowski,wiatowski2016discrete,Grohs_alpha,grohs2014parabolic,ShearletsIntro} as a model for natural images such as, e.g., images of handwritten digits \cite{MNIST}.}, our energy decay and conservation results are shown to yield handy estimates of the number of layers needed to have at least $((1-\varepsilon)\cdot 100)\%$ of the input signal energy be contained in the feature vector. For example, in the case of exponential energy decay with $a=\frac{5}{3}$ and for band-limited input signals, only $8$ layers are needed to absorb $95\%$ of the input signal's energy.

We emphasize that throughout  energy decay results pertain to the feature maps, whereas  energy conservation statements apply to the feature vector, obtained by aggregating filtered versions of the feature maps.


\paragraph*{Notation}
The complex conjugate of $z \in \mathbb{C}$ is denoted by $\overline{z}$. We write $\Real(z)$ for the real, and $\Imag(z)$ for the imaginary part of $z \in \mathbb{C}$. The Euclidean inner product of $x,y \in \mathbb{C}^d$ is $\langle x, y \rangle:=\sum_{i=1}^{d}x_i \overline{y_i}$, with associated norm $|x|:=\sqrt{\langle x, x \rangle}$. For $x\in \mathbb{R}$, $(x)_{+} := \max\{0,x\}$ and $\langle x\rangle := (1+|x|^2)^{1/2}$. We denote the open ball of radius $r>0$ centered at $x\in \Rd$ by $B_r(x)\subseteq \Rd$. The first canonical orthant is  $H:=\{ x\in \Rd \ | \ x_k\geq 0,  \ k=1,$\mydots$,d\}$, and we define the rotated orthant $H_A:=\{ Ax \ | \ x\in H \}$  for $A\in O(d)$, where $O(d)$ stands for the orthogonal group of dimension $d\in \mathbb{N}$. The Minkowski sum of sets $A,B\subseteq \R^d$ is $(A+B):=\{ a+b \, | \, a\in A, \ b \in B \}$, and $A\Delta B:=(A\backslash B)\cup (B\backslash A)$ denotes their symmetric difference.  A multi-index $\alpha=(\alpha_1,\dots,\alpha_d) \in \mathbb{N}_0^d$ is an ordered $d$-tuple of non-negative integers $\alpha_i \in \mathbb{N}_0$. 

 For functions $W:\mathbb{N}\to \mathbb{R}$ and $G:\mathbb{N}\to \mathbb{R}$, we say that $W(N)=\mathcal{O}(G(N))$ if  there exist $C>0$ and $N_0\in \mathbb{N}$ such that $W(N)\leq CG(N)$, for all $N\geq N_0$. The support $\text{supp}(f)$ of a function $f:\Rd \to \mathbb{C}$ is the closure of the set $\{ x\in \Rd \ | \ f(x)\neq 0 \}$ in the topology induced by the Euclidean norm $|\cdot|$. For a Lebesgue-measurable function $f:\Rd \to \mathbb{C}$, we write $\int_{\Rd} f(x) \mathrm dx$ for its integral w.r.t. Lebesgue measure. The indicator function of a set $B\subseteq \Rd$ is defined as $\mathds{1}_B(x)=1$, for $x\in B$, and $\mathds{1}_B(x)=0$, for $x\in \Rd\backslash B$.  For a measurable set $B\subseteq \R^d$, we let $\mbox{vol}^{d}(B):=\int_{\R^d}\mathds{1}_B(x)\mathrm dx=\int_{B}1\mathrm dx$, and we write  $\partial B$ for its boundary.  $L^p(\Rd)$, with $p \in [1,\infty)$, stands for the space of Lebesgue-measurable functions $f:\Rd \to \mathbb{C}$ satisfying $\| f \|_p:= (\int_{\Rd}|f(x)|^p \mathrm dx)^{1/p}<\infty.$ $L^\infty(\Rd)$ denotes the space of Lebesgue-measurable functions $f:\Rd \to \mathbb{C}$ such that  $\| f \|_\infty:=\inf \{\alpha>0 \ | \ |f(x)| \leq \alpha \text{ for a.e.}\footnote{Throughout the paper ``a.e.''  is w.r.t. Lebesgue measure.}\  x \in \Rd \}<\infty$.  For a countable set $\QQ$, $(L^2(\Rd))^\QQ$ stands for the space of sets $S:=\{ f_q \}_{q\,\in\, \QQ}$, with $f_q \in L^2(\Rd)$ for all $q \in \QQ$, satisfying $||| S |||:=(\sum_{q \in \QQ} \VV f_q\VV_2^2)^{1/2}<\infty$. We denote the Fourier transform of $f \in L^1(\Rd)$ by $\widehat{f}(\omega):=\int_{\Rd}f(x)e^{-2\pi i \langle x,  \omega \rangle }\mathrm dx$ and extend it in the usual way to $L^2(\Rd)$ \cite[Theorem 7.9]{Rudin}. 
    
$\text{Id}: L^p(\Rd) \to L^p(\Rd)$ stands for the identity operator on $L^p(\Rd)$. The convolution of $f\in L^2(\Rd)$ and $g\in L^1(\Rd)$ is $(f\ast g)(y):=\int_{\Rd}f(x)g(y-x)\mathrm dx$. We write $(T_tf)(x):=f(x-t)$, $t \in \Rd$, for the translation operator, and $(M_\omega f)(x):=e^{2\pi i \langle x , \omega\rangle }f(x)$, $\omega \in \Rd$, for the modulation operator. We set $\langle f,g \rangle :=\int_{\Rd}f(x)\overline{g(x)}\mathrm dx$, for $f,g \in L^2(\Rd)$. 

$H^s(\Rd)$, with $s>0$,  stands for the Sobolev space of functions $f\in L^2(\Rd)$ satisfying $ \| f\|_{H^s}:=(\int_{\Rd}|\widehat{f}(\omega)|^2(1+|\omega|^2)^{s}\mathrm d\omega)^{1/2}<\infty$,   see \cite[Section 6.2.1]{Grafakos2}. Here, the index $s$   reflects the “degree” of smoothness of $f\in H^s(\Rd)$, i.e., larger $s$ entails smoother $f$. For a multi-index $\alpha \in \mathbb{N}_0^d$, $D^\alpha$ denotes the differential operator $D^\alpha:=(\partial/\partial x_1)^{\alpha_{1}}\dots (\partial/\partial x_d)^{\alpha_{d}}$, with order $|\alpha|:=\sum_{i=1}^d \alpha_i$. The space of functions $f:\Rd \to \mathbb{C}$ whose derivatives $D^\alpha f$ of order at most $k\in \mathbb{N}_0$ are continuous is designated by $C^k(\Rd,\mathbb{C})$. Moreover, we denote the gradient of a function $f:\Rd \to \mathbb{C}$ as $\nabla f$.


\begin{figure*}[t!]
\centering
\begin{tikzpicture}[scale=2.5,level distance=10mm,>=angle 60]

  \tikzstyle{every node}=[rectangle, inner sep=1pt]
  \tikzstyle{level 1}=[sibling distance=30mm]
  \tikzstyle{level 2}=[sibling distance=10mm]
  \tikzstyle{level 3}=[sibling distance=4mm]
  \node {$f$}
	child[grow=90, level distance=.45cm] {[fill=gray!50!black] circle (0.5pt)
		child[grow=130,level distance=0.5cm] 
        		child[grow=90,level distance=0.5cm] 
        		child[grow=50,level distance=0.5cm]
		child[level distance=.25cm,grow=215, densely dashed, ->] {}  
	}
        child[grow=150] {node {$|f\ast g_{\lambda_1^{(j)}}|$}
	child[level distance=.75cm,grow=215, densely dashed, ->] {node {$|f\ast g_{\lambda_1^{(j)}}|\ast\chi_{_2}$}
	}
	child[grow=83, level distance=0.5cm] 
	child[grow=97, level distance=0.5cm] 
        child[grow=110] {node {$||f\ast g_{\lambda_1^{(j)}}|\ast g_{\lambda_2^{(l)}}|$}
	child[level distance=.9cm,grow=215, densely dashed, ->] {node {$||f\ast g_{\lambda_1^{(j)}}|\ast g_{\lambda_2^{(l)}}|\ast\chi_{_3}$}
	}
        child[grow=130] {node {$|||f\ast g_{\lambda_1^{(j)}}|\ast g_{\lambda_2^{(l)}}|\ast g_{\lambda_3^{(m)}}|$}
	child[level distance=0.75cm,grow=215, densely dashed, ->] {node[align=left]{\\ $\cdots$}}}
        child[grow=90,level distance=0.5cm] 
 	child[grow=50,level distance=0.5cm]
       }
       child[grow=63, level distance=1.05cm] {[fill=gray!50!black] circle (0.5pt)
	child[grow=130,level distance=0.5cm] 
       child[grow=90,level distance=0.5cm] 
       child[grow=50,level distance=0.5cm] 
       child[level distance=.25cm,grow=325, densely dashed, ->] {}     
       }
       }
       child[grow=30] {node {$|f\ast g_{\lambda_1^{(p)}}|$}
       child[level distance=0.75cm, grow=325, densely dashed, ->] {node {$|f\ast g_{\lambda_1^{(p)}}|\ast\chi_{_2}$}
	}
	child[grow=83, level distance=0.5cm] 
	child[grow=97, level distance=0.5cm] 
        child[grow=117, level distance=1.05cm] {[fill=gray!50!black] circle (0.5pt)
        child[grow=130,level distance=0.5cm] 
        child[grow=90,level distance=0.5cm] 
        child[grow=50,level distance=0.5cm] 
        child[level distance=.25cm,grow=215, densely dashed, ->] {}  
	 }
        child[grow=70] {node {$||f\ast g_{\lambda_1^{(p)}}|\ast g_{\lambda_2^{(r)}}|$}
	 child[level distance=0.9cm,grow=325, densely dashed, ->] {node {$||f\ast g_{\lambda_1^{(p)}}|\ast g_{\lambda_2^{(r)}}|\ast\chi_{_3}$}}
	child[grow=130,level distance=0.5cm] 
         child[grow=90,level distance=0.5cm] 
             child[grow=50] {node {$|||f\ast g_{\lambda_1^{(p)}}|\ast g_{\lambda_2^{(r)}}|\ast g_{\lambda_3^{(s)}}|$}
             child[level distance=0.75cm,grow=325, densely dashed, ->] {node[align=left]{\\ $\cdots$}}}
	}
     }
	child[level distance=0.75cm, grow=215, densely dashed, ->] {node {$f\ast \chi_{_1}$}};
\end{tikzpicture}
\caption{Network architecture underlying the feature extractor  \eqref{ST}. The index $\lambda_n^{(k)}$ corresponds to the $k$-th filter $g_{\lambda_n^{(k)}}$ of the collection $\Psi_n$ associated with the $n$-th network layer. The function $\chi_{n+1}$ is the output-generating filter of the $n$-th network layer. The root of the network corresponds to $n=0$.} 
\label{fig:gsn}
\end{figure*}
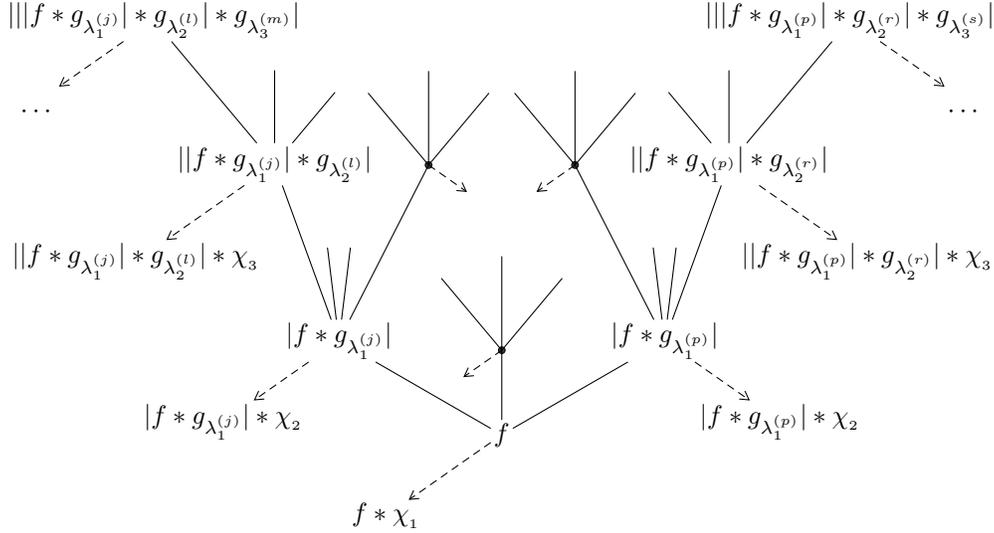

\section{dcnn-based feature extractors}\label{architecture}
Throughout the paper, we use the terminology of \cite{Wiatowski_journal}, consider (unless explicitly stated otherwise)  input signals $f\in L^2(\Rd)$, and employ the module-sequence
\begin{equation}\label{mods}
\Omega:=\big((\Psi_n,|\cdot|,\text{Id})\big)_{n\in \mathbb{N}},
\end{equation}
i.e., each network layer is associated with (i) a collection of filters  $\Psi_n:= \{\chi_n \}\cup\{ g_{\lambda_n}\}_{\lambda_n \in \Lambda_n}\subseteq L^1(\Rd) \cap L^2(\Rd)$, where $\chi_n$,  referred to as  output-generating filter, and the $g_{\lambda_n}$, indexed by a countable  set $\Lambda_n$,  satisfy the  frame condition \cite{Daubechies,Antoine,Kaiser}
\begin{equation}\label{PFP}
\vspace{0.1cm}
A_n\| f\|^2_2\leq \| f\ast\chi_n\|_2^2 +  \sum_{\lambda_n \in \Lambda_n}\| f\ast g_{\lambda_n}\|^2\leq B_n\| f\|_2^2, 
\end{equation} for all $f\in L^2(\Rd)$,  for some $A_n,B_n>0$, (ii) the modulus non-linearity $|\cdot|:L^2(\Rd)\to L^2(\Rd)$, $|f|(x):=|f(x)|$, and (iii) no pooling, which, in the terminology of \cite{Wiatowski_journal}, corresponds to pooling through the identity operator with pooling factor equal to one. Associated with the module $(\Psi_n,|\cdot|,\text{Id})$, the operator $U_n[\lambda_n]$ defined in \cite[Eq. 12]{Wiatowski_journal} particularizes to
\vspace{0.13cm}
 \begin{equation}\label{eq:1}
\vspace{0.2cm}
U_n[\lambda_n]f=\big|f\ast g_{\la_n}\hspace{-0.05cm} \big|.
\end{equation}   
We extend \eqref{eq:1} to paths on index sets $$q=(\lambda_1,\lambda_2,\dots, \lambda_n) \in \Lambda_1\times \Lambda_2\times\dots\times \Lambda_n=:\Lambda^n,\quad n \in \mathbb{N},$$ according to
\begin{align}
U[q]f&\,=\, U[(\lambda_1,\lambda_2,\text{\mydots}, \lambda_n)]f\nonumber\\
&:=\, U_n[\lambda_n] \cdots U_2[\lambda_2]U_1[\lambda_1]f\label{aaaab},
\end{align}
where, for the empty path $e:=\emptyset$, we set $\Lambda^0:=\{ e \}$ and $U[e]f:=f$. The signals $U[q]f$, $q\,\in\, \Lambda^n$, associated with the $n$-th network layer, are often referred to  as  feature maps in the deep learning literature. The feature vector $\Phi_\Omega(f)$ is obtained by aggregating filtered versions of the feature maps. More formally, $\Phi_\Omega(f)$ is defined as \cite[Definition 3]{Wiatowski_journal}
\begin{equation}\label{ST}
\Phi_\Omega (f):=\bigcup_{n=0}^\infty\Phi^n_\Omega(f),
\end{equation}
where \vspace{0.1cm}$$\Phi^n_\Omega(f):=\{ (U[q]f) \ast \chi_{n+1} \}_{q \in \Lambda^n}\vspace{0.1cm}$$ are the features generated in the $n$-th network layer, see Figure \ref{fig:gsn}. Here, $n = 0$ corresponds to the root of the network. The function $\chi_{n+1}$ is the output-generating filter of the $n$-th network layer. The feature extractor 
$$
\Phi_\Omega:L^2(\Rd) \to \big(L^2(\Rd)\big)^{\bigcup_{n=0}^\infty\Lambda^n}
$$ 
was shown in \cite[Theorem 1]{Wiatowski_journal} to be vertically translation-invariant, provided although that pooling is  employed, with pooling factors $S_n\geq1$, $n\in \mathbb{N}$, (see \cite[Eq. 6]{Wiatowski_journal} for the definition of the general pooling operator) such that $\lim\limits_{N\to \infty} \prod_{n=1}^N S_n=\infty$. Moreover, $\Phi_\Omega$ exhibits limited sensitivi\-ty to certain non-linear deformations on (input) signal classes such as band-limited functions \cite[Theorem 2]{Wiatowski_journal}, cartoon functions \cite[Theorem 1]{grohs_wiatowski}, and Lipschitz functions \cite[Corollary 1]{grohs_wiatowski}.

\section{Energy decay and trivial null-set}\label{Sec:prop}
The first central goal of this paper is to understand how fast the energy contained in the feature maps decays across layers. Specifically, we shall study the decay of
 \begin{equation}\label{ene}
 W_N(f):=\sum_{q\,\in\, \Lambda^N}\| U[q]f\|_2^2, \quad f\in L^2(\Rd),
 \end{equation}
as a function of network depth $N$.  Moreover, it is desirable that the infinite-depth feature vector $\Phi_\Omega(f)$ be informative in the sense of the only signal mapping to the all-zeros feature vector being the zero input signal, i.e., $\Phi_\Omega$ has a trivial null-set 
\begin{equation}\label{kdjkdjkdj}
\mathcal{N}(\Phi_\Omega):=\{ f \in L^2(\Rd) \ | \ \Phi_\Omega(f)=0  \} \stackrel{!}{=} \{ 0\}.
\end{equation}
Figure \ref{fig:binary} illustrates the practical ramifications of a non-trivial null-set in a binary classification task. $\mathcal{N}(\Phi_\Omega)=\{ 0\}$ can be  guaranteed by asking for ``energy conservation'' in the sense of 
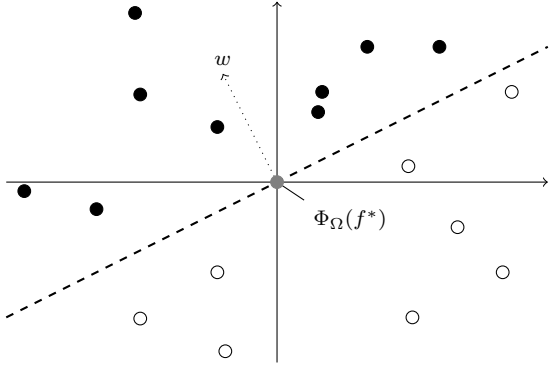
\begin{figure}[t]
\begin{center}\vspace{0.1cm}
    \begin{tikzpicture}[scale=1.2] 
	\tikzstyle{inti}=[draw=none,fill=none]
	\draw[->] (-3,0) -- (3,0);
	\draw[->] (0,-2) -- (0,2);

\draw[dashed,thick] (-3,-1.5) -- (3,1.5);			

\draw[dotted,->] (0,0) -- (-.6,1.2)  node[anchor=south] {\footnotesize$w$};	
\filldraw [black]    (-1.5154 ,   0.9713) circle (2pt)
			 (1.7993   ,1.5) circle (2pt)
			
			 (-1.5728  ,  1.8754) circle (2pt)
			 
			 (.5,1) circle (2pt)
			 (-2,-0.3) circle (2pt)
			 (1,1.5) circle (2pt)
			 (-2.8,-0.1) circle (2pt)
			(-0.6611   , 0.6107) circle (2pt)
			(0.4554   , 0.7763) circle (2pt)
			 ;	
\draw    
(-1.5154 ,   -1.513) circle (2pt)
			 (2 ,-.5) circle (2pt)
			
			 (-.5728  ,  -1.8754) circle (2pt)
			 
			 (2.5,-1) circle (2pt)
			  (2.6,1.) circle (2pt)
			 (1.5,-1.5) circle (2pt)
	
			(-0.6611   , -1) circle (2pt)
			(1.4554   , 0.1763) circle (2pt);
	
\draw (0cm, 0) -- (.3cm,-.2cm) node[anchor=north west] {\footnotesize$\Phi_\Omega(f^\ast)$};
\filldraw [gray]    
(0 ,0) circle (2pt);
                		\end{tikzpicture}
\end{center}
\vspace{0.3cm}
\caption{Impact of a non-trivial null-set $\mathcal{N}(\Phi_\Omega)$ in a binary classification task. The feature vector $\Phi_\Omega(f)$ is fed into a linear classifier \cite{Bishop}, which determines  set membership based on the sign of the inner product $\langle w,\Phi_\Omega(f)\rangle$. The (learned) weight vector $w$ is perpendicular to the separating hyperplane (dashed line). If the null-set of the feature extractor $\Phi_\Omega$ is non-trivial, there exist input signals $f^\ast \neq 0$ that are mapped to the origin in feature space, i.e., $\Phi_\Omega(f^\ast)=0$ (gray circle), and therefore lie---independently of the weight vector $w$---on the separating hyperplane. These input signals $f^\ast \neq 0$ are therefore unclassifiable.}
		\label{fig:binary}
\end{figure}
 \begin{equation}\label{eq:lbeee}
A_{\Omega} \| f\|^2_2\leq  |||\Phi_\Omega(f)|||^2\leq B_\Omega \| f\|_2^2, \quad \forall f\in L^2(\Rd),
 \end{equation}
 for some constants $A_\Omega,B_\Omega>0$ (possibly depending on the module-sequence $\Omega$) and with the feature space norm $||| \Phi_\Omega(f)|||:=\big(\sum_{n=0}^{\infty}|||\Phi^n_\Omega(f)|||^2\big)^{1/2}$, where $|||\Phi^n_\Omega(f)|||:= \big(\sum_{q\,\in\, \Lambda^n}\| (U[q]f)\ast \chi_{n+1}\|_2^2\big)^{1/2}$. Indeed,  \eqref{kdjkdjkdj} follows from \eqref{eq:lbeee} as the upper bound in \eqref{eq:lbeee} yields $\{ 0\}\subseteq \mathcal{N}(\Phi_\Omega)$, and   the lower bound  implies $\{ 0\} \supseteq \mathcal{N}(\Phi_\Omega)$. We emphasize that, as $\Phi_\Omega$ is a non-linear operator (owing to the modulus non-linearities), characterizing its null-set is non-trivial in general.  The  upper bound in \eqref{eq:lbeee} was established in  \cite[Appendix E]{Wiatowski_journal}. While the existence of this upper bound  is implied by the filters $\Psi_n$,  $n\in \mathbb{N}$, satisfying the frame property \eqref{PFP} \cite[Appendix E]{Wiatowski_journal}, perhaps surprisingly, this is not enough to guarantee $A_\Omega>0$ (see Appendix \ref{counterexmp} for an example). We refer the reader to Section \ref{sec:nlay} for results on the null-set of the \emph{finite-depth} feature extractor $\bigcup_{n=0}^N \Phi_\Omega^n$.

\begin{figure*}[t]
\begin{center}
    \begin{tikzpicture}[scale=1.3] 
	\tikzstyle{inti}=[draw=none,fill=none]

	\draw[->] (-4,0) -- (4,0) node[right] {\footnotesize $\omega$};
	\draw[->] (0,0) -- (0,1.7);	
	
	\draw[thick,rounded corners=4pt]	
	(-4,1.2)--(-3,1.4)--(-2,1.1)--(-1.,1.3)--(-.5,1.2)--(0,1.4)--(.5,1.)--(1.,1.3)--(2,1.1)--(3,.6)--(4,1.4);
	
	\draw[thick,dotted, domain=0:4,samples=100] plot (\x,{1-((exp(-(\x^(2))/.4)))});
	\draw[thick,dotted, domain=-4:0,samples=100] plot (\x,{1-((exp(-(-\x^(2))/.4)))});
	
	\draw (-2pt,1cm) -- (2pt,1 cm) node[anchor=east] {\small$1 \ $};
	\draw (1 cm,2pt) -- (1 cm,-2pt) node[anchor=north] {\small$ a^{N-1} $};
	\draw (-1 cm,2pt) -- (-1 cm,-2pt) node[anchor=north] {\small$- a^{N-1}$};
	\draw (3 cm,2pt) -- (3 cm,-2pt) node[anchor=north] {\small$ a^N$};
	\draw (-3 cm,2pt) -- (-3 cm,-2pt) node[anchor=north] {\small$- a^N$};		
	
	\draw  (-3 cm,.95cm)--(-2.7 cm,.7cm)  node[anchor=north] {\small$ \ {h_{N}}(\omega)$};
	\draw  (-2.2 cm,1.2cm)--(-1.9 cm,1.4cm)  node[anchor=south west] {\small$|\widehat{f}(\omega)|^2$};
\end{tikzpicture}
\begin{tikzpicture}[scale=1.3] 
	\draw  (-3.2 cm,1.3cm)--(-3 cm,1cm)  node[anchor=north] {\small$ \ |\widehat{f}(\omega)|^2\cdot {h_{N}}(\omega)$};
	\tikzstyle{inti}=[draw=none,fill=none]
	\draw[->] (-4,0) -- (4,0) node[right] {\footnotesize $\omega$};
	\draw[->] (0,0) -- (0,1.7);		
	 \draw[thick,dashed, rounded corners=8pt]	
	(-4,1.2)--(-3,1.4)--(-2,1.1)--(-1.,1.3)--(-.5,.4)--(0,-0.09)--(.5,.4)--(1.,1.3)--(2,1.1)--(3,.6)--(4,1.4);

	\draw (-2pt,1cm) -- (2pt,1 cm) node[anchor=east] {\small$1 \ $};
	\draw (1 cm,2pt) -- (1 cm,-2pt) node[anchor=north] {\small$ a^{N-1} $};
	\draw (-1 cm,2pt) -- (-1 cm,-2pt) node[anchor=north] {\small$- a^{N-1}$};
	\draw (3 cm,2pt) -- (3 cm,-2pt) node[anchor=north] {\small$ a^N$};
	\draw (-3 cm,2pt) -- (-3 cm,-2pt) node[anchor=north] {\small$- a^N$};		
\end{tikzpicture}
\begin{tikzpicture}[scale=1.3] 
	\tikzstyle{inti}=[draw=none,fill=none]

	\draw[->] (-4,0) -- (4,0) node[right] {\footnotesize $\omega$};
	\draw[->] (0,0) -- (0,1.7);	
	
	\draw[thick,rounded corners=4pt]	
	(-4,1.2)--(-3,1.4)--(-2,1.1)--(-1.,1.3)--(-.5,1.2)--(0,1.4)--(.5,1.)--(1.,1.3)--(2,1.1)--(3,.6)--(4,1.4);
	\draw  (-2.2 cm,1.2cm)--(-1.9 cm,1.4cm)  node[anchor=south west] {\small$|\widehat{f}(\omega)|^2$};
	\draw  (-3 cm,.9cm)--(-2.7 cm,.7cm)  node[anchor=north] {\small$ \ {h_{N+1}(\omega)}$};
	\draw[thick,dotted, domain=0:4,samples=100] plot (\x,{1-((exp(-(\x^(2))/3)))});
	\draw[thick,dotted, domain=-4:0,samples=100] plot (\x,{1-((exp(-(-\x^(2))/3)))});
	
	\draw (-2pt,1cm) -- (2pt,1 cm) node[anchor=east] {\small$1 \ $};
	\draw (1 cm,2pt) -- (1 cm,-2pt) node[anchor=north] {\small$ a^{N-1} $};
	\draw (-1 cm,2pt) -- (-1 cm,-2pt) node[anchor=north] {\small$- a^{N-1}$};
	\draw (3 cm,2pt) -- (3 cm,-2pt) node[anchor=north] {\small$ a^N$};
	\draw (-3 cm,2pt) -- (-3 cm,-2pt) node[anchor=north] {\small$- a^N$};		
	
\end{tikzpicture}
\begin{tikzpicture}[scale=1.3] 
	\draw (-2.1 cm,.65cm)--(-1.8 cm,.95cm)  node[anchor=south] {\small$\ \  \ |\widehat{f}(\omega)|^2\cdot {h_{N+1}}(\omega)$};
	\tikzstyle{inti}=[draw=none,fill=none]
	\draw[->] (-4,0) -- (4,0) node[right] {\footnotesize $\omega$};
	\draw[->] (0,0) -- (0,1.7);		
	 \draw[thick,dashed,rounded corners=6pt]	
	(-4,1.2)--(-3,1.4)--(-2.4,.7)--(-1.4,.3)--(-1.,.25)--(-.5,.1)--(0,-.01)--(.5,.05)--(1.,.2)--(2,.3)--(3,.6)--(4,1.4);

	\draw (-2pt,1cm) -- (2pt,1 cm) node[anchor=east] {\small$1 \ $};
	\draw (1 cm,2pt) -- (1 cm,-2pt) node[anchor=north] {\small$ a^{N-1} $};
	\draw (-1 cm,2pt) -- (-1 cm,-2pt) node[anchor=north] {\small$- a^{N-1}$};
	\draw (3 cm,2pt) -- (3 cm,-2pt) node[anchor=north] {\small$ a^N$};
	\draw (-3 cm,2pt) -- (-3 cm,-2pt) node[anchor=north] {\small$- a^N$};	;	
\end{tikzpicture}
\end{center}
\caption{Illustration of the impact of network depth $N$ on the upper bound on $W_N(f)$ in \eqref{erghekj}, for $\eps=1$ and $a>1$. The function ${h_{N}}(\omega):=\big(1-\widehat{r_g}\big(\frac{\omega}{\eps a^{N-1}}\big)\big)$, where $\widehat{r_g}(\omega)=e^{-\omega^2}$, is of increasing high-pass nature as $N$ increases, which makes  the upper bound in \eqref{erghekj}  decay in $N$. }
		\label{fig:low}
\end{figure*}
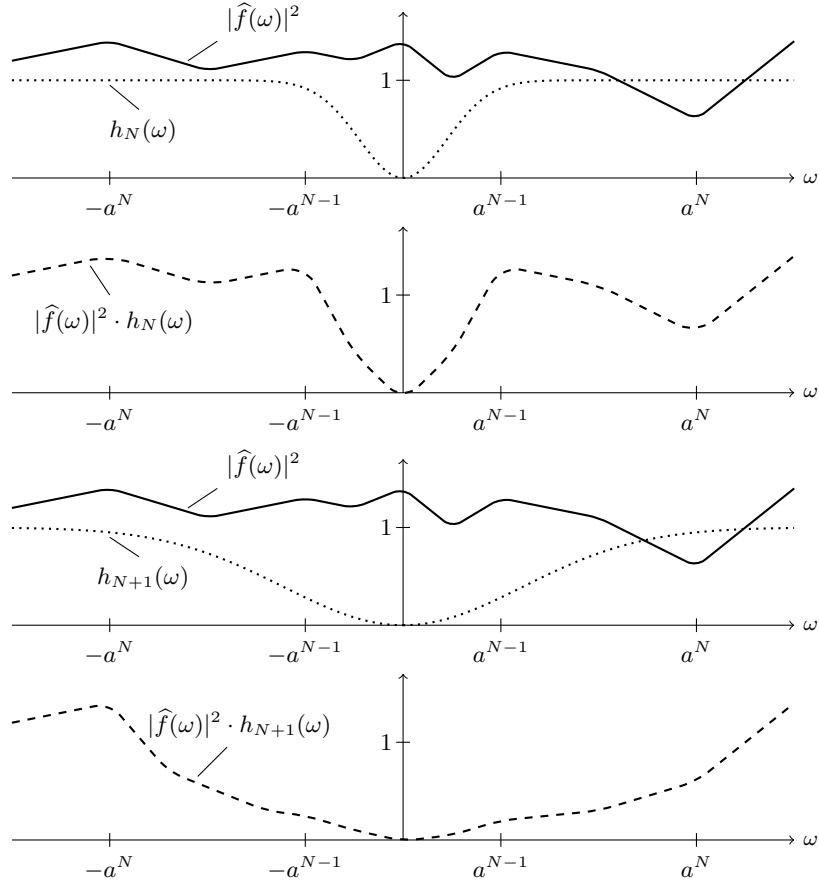

Previous work on the decay rate of $W_N(f)$ in \cite[Section 5]{Waldspurger} shows that for wavelet-based networks (i.e., in every network layer the filters $\Psi=\{\chi\}\cup\{g_{\lambda}\}_{\lambda\in \Lambda}$ in \eqref{mods} are taken to be (specific) $1$-D wavelets that constitute a Parseval frame, with  $\chi$ a low-pass filter) there exist $\varepsilon >0$ and $a>1$ (both constants unspecified)  such that 
\begin{equation}\label{erghekj}
W_N(f)\leq \int_{\mathbb{R}}|\widehat{f}(\omega)|^2\Big(1-\Big| \reallywidehat{r_g}\Big(\frac{\omega}{\varepsilon a^{N-1}}\Big)\Big|^2\Big)\mathrm d\omega,
\end{equation}
for real-valued $1$-D signals $f\in L^2(\R)$ and  $N\geq 2$, where $\reallywidehat{r_g}(\omega):=e^{-\omega^2}$\hspace{-0.1cm}. To see that this result indicates energy decay, Figure \ref{fig:low} illustrates the influence of  network depth $N$ on the upper bound in \eqref{erghekj}. Specifically, we can see that increasing the network depth results in cutting out increasing amounts of energy of $f$ and thereby making the upper bound in \eqref{erghekj} decay as a function of $N$. Moreover, it is interesting to note that the upper bound on $W_N(f)=\sum_{q\,\in\,\Lambda^N}\| U[q]f\|_2^2$  is independent of the wavelets generating the feature maps $U[q]f$, $q\in \Lambda^N$.  For scattering networks that  employ, in every network layer, uniform covering filters $\Psi=\{\chi\}\cup\{g_{\lambda}\}_{\lambda\in \Lambda}\subseteq L^1(\Rd)\cap L^2(\Rd)$ forming a Parseval frame (where $\chi$, again, is a low-pass filter), exponential energy decay according to \begin{equation}\label{czaja}
W_N(f)= \mathcal{O}(a^{-N}), \quad \forall f\in L^2(\Rd),
\end{equation}
for an unspecified $a>1$, was established in \cite[Proposition 3.3]{czaja2016uniform}. Moreover, \cite[Section 5]{Waldspurger} and \cite[Theorem 3.6 (a)]{czaja2016uniform} state---for the respective module-sequences---that \eqref{eq:lbeee} holds with $A_\Omega=B_{\Omega}=1$ and hence   
\begin{equation}\label{eco}
|||\Phi_\Omega(f)|||^2= \| f \|_2^2.
\end{equation}
The first main goal of the present paper is to establish i) for $d$-dimensional complex-valued input signals that $W_N(f)$ decays polynomially according to 
\begin{equation}\label{main_h:up1_man}
W_N(f)\leq B_\Omega^N \int_{\Rd}\big|\widehat{f}(\omega)\big|^2\Big(1-\Big|\widehat{r_l}\Big(\frac{\omega}{N^{\alpha}}\Big)\Big|^2\Big)\mathrm d\omega,
\end{equation}
for   $f\in L^2(\Rd)$ and  $N\geq 1$, where $\alpha=1$, for $d=1$, and $\alpha=\log_2(\sqrt{d/(d-1/2)}),$ for $d\geq 2$, $B_\Omega^N=\prod_{k=1}^N \max\{1,B_k\}$, and $\widehat{r_l}:\Rd\to \R$, $\widehat{r_l}(\omega)=(1-|\omega|)^{l}_{+}$, with $l>\lfloor d/2 \rfloor+1$, for networks based on general filters $\{\chi_n\}\cup\{ g_{\lambda_n}\}_{\lambda_n\in \Lambda_n}$ that satisfy mild analyticity and high-pass conditions and  are allowed to be different in different network layers (with the proviso that $\chi_n$,  $n\in \mathbb{N}$, is of low-pass nature in a sense to be made precise),  and ii) for $1$-D complex-valued  input signals that \eqref{ene} decays exponentially according to  
\begin{equation}\label{main_h:wave_man}
W_N(f)\leq \int_{\R}\big|\widehat{f}(\omega)\big|^2\Big(1-\Big|\widehat{r_l}\Big(\frac{\omega}{a^{N-1}}\Big)\Big|^2\Big)\mathrm d\omega,
\end{equation}
for   $f\in L^2(\R)$ and $N\geq 1$, for networks that are based, in every network layer, on a broad family of wavelets, with the decay factor given explicitly as $a=\frac{5}{3}$, or on a broad family of Weyl-Heisenberg filters \cite[Appendix B]{Wiatowski_journal}, with decay factor $a=\frac{3}{2}$. Thanks to the right-hand side (RHS) of \eqref{main_h:up1_man} and \eqref{main_h:wave_man} not depending on the specific filters $\{ \chi_n\} \cup \{ g_{\lambda_n}\}_{\lambda_n\in \Lambda_n}$, we will be able to establish---under smoothness assumptions on the input signal $f$---universal energy decay results.  Specifically,  particularizing the RHS expressions in \eqref{main_h:up1_man} and \eqref{main_h:wave_man} to Sobolev-class input signals $f\in H^s(\Rd)$, $s> 0$, where 
\begin{equation*}\label{eq:soooo}
H^s(\Rd)=\Big\{f\in L^2(\Rd) \ \Big| \ \| f\|_{H^s}<\infty \Big\},
\end{equation*}
with $\| f\|_{H^s}:= \big(\int_{\Rd}(1+|\omega|^2)^{s}|\widehat{f}(\omega)|^2\mathrm d\omega\big)^{1/2}$,
we show that \eqref{main_h:up1_man} yields  polynomial energy decay according to 
\begin{equation}\label{som3}
W_N(f)= \mathcal{O}\big(N^{-\gamma \alpha}\big),
\end{equation}
and \eqref{main_h:wave_man}  exponential energy decay  
\begin{equation}\label{som2}
W_N(f) = \mathcal{O}\big(a^{-\gamma N}\big),
\end{equation}
where $\gamma:=\min\{1,2s\}$ in both cases.
Sobolev spaces $H^s(\Rd)$ contain a wide range of practically relevant signal classes such as, e.g., 
\begin{itemize}
\item[--]{the space $L^2_L(\Rd):=\{ f\in L^2(\Rd) \ | \ \text{supp}(\widehat{f}\,)\subseteq B_L(0)\}$, $L\geq0$, of $L$-band-limited functions according to $L^2_L(\Rd)\subseteq H^s(\Rd)$, for  $L\geq0$ and   $s>0$. This follows from 
\begin{align*}\label{sdjhekjgfh}
\int_{\Rd}(1+|\omega|^2)^s|\widehat{f}(\omega)|^2\mathrm d\omega&=\int_{B_L(0)}(1+|\omega|^2)^s|\widehat{f}(\omega)|^2\mathrm d\omega\\
&\leq (1+|L|^2)^s\| f\|_2^2<\infty,
\end{align*}
for  $f\in L^2_L(\Rd)$, $L\geq0$,  and  $s>0$, where we used Parseval's formula and the fact that $\omega \mapsto (1+|\omega|^2)^s$,  $\omega \in \Rd$,  is  monotonically increasing in $|\omega|$, for  $s>0$,
 }
\item[--]{the space $\mathcal{C}^{K}_{\mathrm{CART}}$ of cartoon functions of size $K,$ introduced in \cite{Cartoon}, and widely used in the mathematical signal processing literature \cite{grohs_wiatowski,wiatowski2016discrete,Grohs_alpha,grohs2014parabolic,ShearletsIntro} as a model for natural images such as, e.g., images of handwritten digits \cite{MNIST} (see Figure \ref{fig:data}). For a formal definition of $\mathcal{C}^{K}_{\mathrm{CART}}$, we refer the reader to Appendix \ref{sec:proofcart}, where we also show that $\mathcal{C}^{K}_{\mathrm{CART}}\subseteq H^s(\Rd)$, for  $K>0$ and  $s\in (0,1/2)$. }
\end{itemize}

Moreover, Sobolev functions are contained in the space of $k$-times continuously differentiable functions $C^k(\Rd,\mathbb{C})$ according to $H^s(\Rd)\subseteq C^k(\Rd,\mathbb{C})$, for  $s>k+\frac{d}{2}$ \cite[Section 4]{adams1975sobolev}. 

Our second central goal is to prove energy conservation according to \eqref{eq:lbeee} (which, as explained above, implies $\mathcal{N}(\Phi_\Omega)=\{ 0\}$) for the network confi\-gurations corresponding to the energy decay results \eqref{main_h:up1_man} and \eqref{main_h:wave_man}. Finally, we provide handy estimates of the number of layers needed to have at least $((1-\varepsilon)\cdot 100)\%$ of the input signal energy be contained in the feature vector.

\section{Main results}\label{sec:mainresults}
Throughout the paper, we make the following  assumptions on the filters $\{ g_{\lambda_n}\}_{\lambda_n \in \Lambda_n}$.
\begin{assumption}\label{ass}
The $\{ g_{\lambda_n}\}_{\lambda_n \in \Lambda_n}$, $n \in \mathbb{N}$, are analytic in the following sense:  For every layer index $n\in \mathbb{N}$, for every $\lambda_n\in \Lambda_n$, there exists an orthant $H_{A_{\lambda_n}}\subseteq \Rd$, with  $A_{\lambda_n}\in O(d)$, such that  
\begin{equation}\label{eq:ass1}
\text{supp}(\widehat{g_{\lambda_n}})\subseteq H_{A_{\lambda_n}}. 
\end{equation}
Moreover, there exists $\delta>0$  so that 
\begin{equation}\label{eq:ass2}
\sum_{\lambda_n\in \Lambda_n}|\widehat{g_{\lambda_n}}(\omega)|^2=0, \quad \text{a.e. } \omega \in B_{\delta}(0).
\end{equation}
\end{assumption}
In the $1$-D case, i.e., for $d=1$, Assumption \ref{ass} simply amounts to every filter $g_{\lambda_n}$ satisfying either $$\text{supp}(\widehat{g_{\lambda_n}})\subseteq (-\infty,-\delta] \quad \text{or} \quad \text{supp}(\widehat{g_{\lambda_n}})\subseteq [\delta,\infty),$$ which constitutes an ``analyticity'' and ``high-pass''  condition. For dimensions $d\geq2$, Assumption \ref{ass}  requires that every filter $g_{\lambda_n}$ be of high-pass nature and have a Fourier transform supported in a (not necessarily canonical) orthant. Since the frame condition \eqref{PFP} is equivalent to the Littlewood-Paley condition \cite{Paley} 
\begin{equation}\label{eq:tight}
A_n\leq |\widehat{\chi_n}(\omega)|^2+\sum_{\lambda_n \in \Lambda_n}|\widehat{g_{\lambda_n}}(\omega)|^2\leq B_n,\hspace{0.25cm} \text{a.e. } \omega \in \Rd,
\end{equation} 
\eqref{eq:ass2} implies low-pass characteristics for $\chi_n$ to fill the spectral gap $B_{\delta}(0)$ left by the filters $\{ g_{\lambda_n}\}_{\lambda_n \in \Lambda_n}$.

\begin{figure}
\vspace{-0.5cm}
\centering
	\includegraphics[width = .22\textwidth]{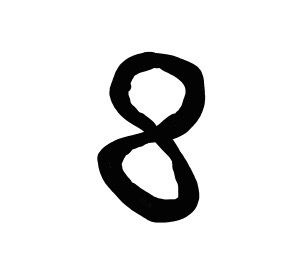}
	\caption{An image of a handwritten digit is modeled by a $2$-D cartoon function. }
	\label{fig:data}
\end{figure}
The conditions \eqref{eq:ass1} and \eqref{eq:ass2} we impose on the $\Psi_n$, $n\in \mathbb{N}$, are not overly restrictive as they encompass, inter alia, various constructions of  Weyl-Heisenberg filters (e.g., a $1$-D $B$-spline as prototype function  \cite[Section 1]{Gaborexmpl}), wavelets (e.g., analytic Meyer wavelets \cite[Section 3.3.5]{Daubechies}  in $1$-D, and Cauchy wavelets \cite{Vander} in $2$-D), and specific constructions of ridgelets \cite[Section 2.2]{Grohs_transport}, curvelets \cite[Section 4.1]{CandesDonoho2}, $\alpha$-curvelets \cite[Section 3]{Grohs_alpha}, and shearlets (e.g., cone-adapted shearlets \cite[Section 4.3]{ShearletsIntro}). We refer the reader to \cite[Appendices B and C]{Wiatowski_journal} for a brief review of some of these filter structures. 

We are now ready to state our  main result on energy decay and energy conservation.
\begin{figure*}[t]
\begin{center}

      \begin{tikzpicture}[scale=1.3] 

	\filldraw[black!10](-1,0) rectangle (1,1);
	\filldraw[black!10](-3.8,0) rectangle (-2.4,1);
	\draw (-2pt,1cm) -- (2pt,1 cm) node[anchor=south east] {\small$1 \ $};
	\draw[->] (-4,0) -- (4,0) node[right] {\footnotesize $\omega$};
	\draw[->] (0,0) -- (0,1.7);	
	
	\draw (1 cm,2pt) -- (1 cm,-2pt) node[anchor=north] {\small$\delta $};
	\draw (-1 cm,2pt) -- (-1 cm,-2pt) node[anchor=north] {\small$-\delta$};

	 \draw[thick,rounded corners=6pt]
		(-4,.1) -- (-3.8,.1)--(-3.4,0.05)--(-2.4,.8) -- (-2,0.8)--(-1.5,0.) -- (-1.,-0.7)--
		(-.7,-.8) --( -0.5,-0.25) -- (0,.5)--(0.5,0.2) --
		(1,0) --(1.5,0.7) --(1.7,1) -- (2.2,0.4) -- (4,-0.2);

	\draw[dashed] (-2.4,0) rectangle (-1,1);
	\draw[dashed] (-3.8,0) rectangle (-2.4,1);

	\node at (-4.2,0.5) {$\cdots$};
	
	\draw[dashed] (2.4,0) rectangle (1,1);
	\draw[dashed] (3.8,0) rectangle (2.4,1);

	\node at (4.3,0.5) {$\cdots$};

	\draw[dashed] (1,0.) rectangle (-1,1);
	
	\draw  (-3. cm,1.05cm)--(-3.3 cm,1.2cm)  node[anchor=south] {\small$ \ \widehat{g_{\lambda_n}}(\omega)$};
	
	\draw  (.5 cm,1.05cm)--(0.7 cm,1.2cm)  node[anchor=south] {\small$ \ \ \widehat{\chi_{n+1}}(\omega)$};
	\draw  (3cm,.175cm)--(3.2cm,.375cm)  node[anchor=south] {\small$ \ \ \widehat{f}(\omega)$};
	
	\end{tikzpicture}

      \begin{tikzpicture}[scale=1.3] 

	\draw (-2pt,1cm) -- (2pt,1 cm) node[anchor=south east] {\small$1 \ $};
	\draw[->] (-4,0) -- (4,0) node[right] {\footnotesize $\omega$};
	\draw[->] (0,0) -- (0,1.7);	
	\draw (1 cm,2pt) -- (1 cm,-2pt) node[anchor=north] {\small$\delta $};
	\draw (-1 cm,2pt) -- (-1 cm,-2pt) node[anchor=north] {\small$-\delta$};

	\draw  (-2.95 cm,.55cm)--(-3.25 cm,.7cm)  node[anchor=south] {\small$ \ \widehat{f}(\omega) \cdot \widehat{g_{\lambda_n}}(\omega)$};
		
	 \draw[thick,rounded corners=4pt]	
	(-3.8,.1)--(-3.4,0.05)--(-2.6,.7)--(-2.4,.8);

			 \draw[thick]	
	(-3.8,.11)--(-3.8,0);
				 \draw[thick]	
	(-4,0)--(-3.8,0);
				 \draw[thick]	
	(-2.4,0)--(4,0);

	 \draw[thick]	
	(-2.4,.815)--(-2.4,0);

	\end{tikzpicture}

\vspace{0.3cm}
    \begin{tikzpicture}[scale=1.3] 
	\tikzstyle{inti}=[draw=none,fill=none]

	\filldraw[black!10](-1,0) rectangle (1,1);
	\draw[dashed] (1,0.) rectangle (-1,1);
	\draw (-2pt,1cm) -- (2pt,1 cm) node[anchor=south east] {\small$1 \ $};
	\draw[->] (-4,0) -- (4,0) node[right] {\footnotesize $\omega$};
	\draw[->] (0,0) -- (0,1.7);	
	\draw (1 cm,2pt) -- (1 cm,-2pt) node[anchor=north] {\small$\delta $};
	\draw (-1 cm,2pt) -- (-1 cm,-2pt) node[anchor=north] {\small$-\delta$};

	\draw  (-1.6 cm,.2cm)--(-1.9 cm,.35cm)  node[anchor=south] {\small$ \reallywidehat{|f\ast g_{\lambda_n}|}(\omega)$};

		 \draw[thick,rounded corners=4pt]	
	(-4,.0)--(-2,.0)--(-1.4,0.2)--(-1.,0.3)--(-.4,0.6)--(0,.7)--(.4,0.6)--(1.,0.3)--(1.4,0.2)--(2,.0)--(4,.0);
	
	\draw  (.5 cm,1.05cm)--(0.7 cm,1.2cm)  node[anchor=south] {\small$ \ \ \widehat{\chi_{n+1}}(\omega)$};

	\end{tikzpicture}
	
\end{center}
\caption{Illustration of the demodulation effect of the modulus non-linearity.  The $\{ g_{\lambda_n}\}_{\lambda_n\in \Lambda_n}$ are taken as perfect band-pass filters (e.g., band-limited analytic Weyl-Heisenberg filters) and hence trivially satisfy the conditions in Assumption \ref{ass}. The modulus operation in combination with the analyticity and  the high-pass nature of the filters $\{ g_{\lambda_n}\}_{\lambda_n \in \Lambda_n}$ ensures that---in every network layer---the spectral content of each individual feature map  is moved to base-band (i.e., to low frequencies), where it is extracted by the (low-pass) output-generating filter $\chi_{n+1}$.}
		\label{fig:low3}
\end{figure*}
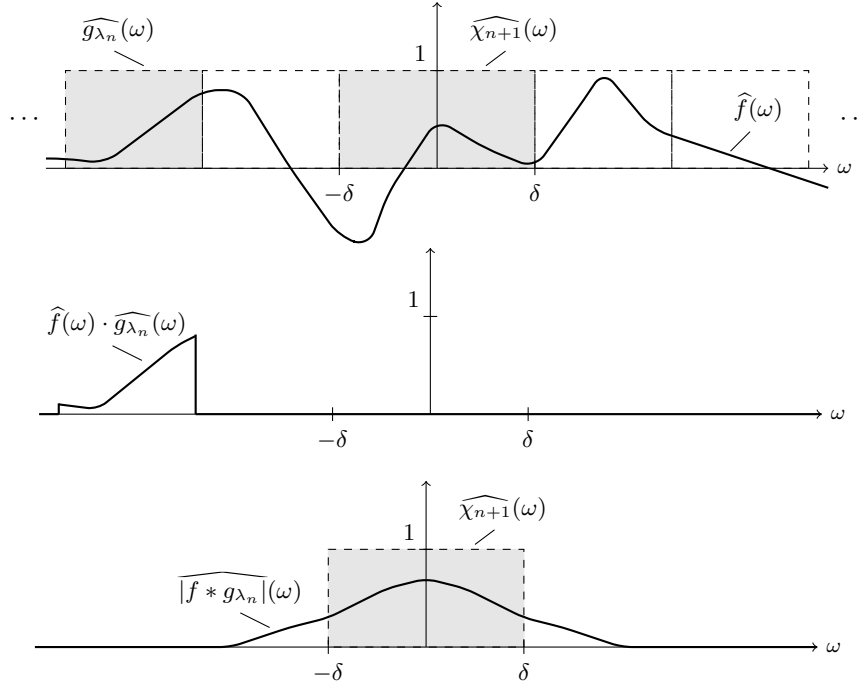

\begin{theorem}\label{thm3}
Let $\Omega$ be the module-sequence \eqref{mods} with filters $\{ g_{\lambda_n}\}_{\lambda_n \in \Lambda_n}$ satisfying the conditions  in Assumption \ref{ass}, and let $\delta>0$ be the radius of the spectral gap $B_{\delta}(0)$ left by the filters $\{ g_{\lambda_n}\}_{\lambda_n \in \Lambda_n}$ according to \eqref{eq:ass2}. Furthermore, let  $s\geq0$,\\[.5ex] $A_\Omega^N:=\prod_{k=1}^N \min\{1,A_k\}$,  $B_\Omega^N:=\prod_{k=1}^N \max\{1,B_k\}$, and 
\begin{equation}\label{defalp}
\alpha:=
\begin{cases}
1,& d=1,\\
\log_2(\sqrt{d/(d-1/2)}), &d\geq2.
\end{cases}
\end{equation}
 \begin{itemize}
 \item[\textit{i)}]{We have 
\begin{equation}\label{main_h:up1}
W_N(f)\leq B_\Omega^N \int_{\Rd}\big|\widehat{f}(\omega)\big|^2\Big(1-\Big|\widehat{r_l}\Big(\frac{\omega}{N^{\alpha}\delta}\Big)\Big|^2\Big)\mathrm d\omega,
\end{equation}
for  $f\in L^2(\Rd)$ and  $N\geq 1$, where $\widehat{r_l}:\Rd\to \R$, $\widehat{r_l}(\omega):=(1-|\omega|)^{l}_{+}$, with $l>\lfloor d/2 \rfloor+1$.}
\item[\textit{ii)}]{For every Sobolev function $f\in H^s(\Rd)$, $s>0$, we have 
\begin{equation}\label{main_so0}
W_N(f)= \mathcal{O}\big(B_{\Omega} ^N N^{-\gamma \alpha}\big),
\end{equation}
where $\gamma:= \min\{ 1,2s\}$.
}
\item[\textit{iii)}]{If, in addition to Assumption \ref{ass}, 
\begin{equation}\label{conditfb}0< A_\Omega:=\lim\limits_{N\to \infty} A_\Omega^N\leq B_\Omega:=\lim\limits_{N\to \infty} B_\Omega^N <\infty,
\end{equation} then we have energy conservation according to 
\begin{equation}\label{main}
A_\Omega\| f\|_2^2\leq |||\Phi_\Omega(f)|||^2\leq B_\Omega \|f\|_2^2,
\end{equation}
for all $f\in L^2(\Rd)$.
 }
 \end{itemize}
\end{theorem}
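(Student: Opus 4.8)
The plan is to reduce everything to a single per-layer energy relation and then iterate it. Writing $P_n:=|||\Phi^n_\Omega(f)|||^2$ for the energy the network extracts into the feature vector at depth $n$, and using that the modulus is isometric on $L^2(\Rd)$ so that $\|U[(q,\lambda_n)]f\|_2^2=\|(U[q]f)\ast g_{\lambda_n}\|_2^2$, I would first apply the frame/Littlewood--Paley inequality \eqref{eq:tight} (equivalently \eqref{PFP}) to each feature map $U[q]f$, $q\in\Lambda^{n-1}$, and sum over $q$. Since $\sum_{q\in\Lambda^{n-1}}\sum_{\lambda_n\in\Lambda_n}\|(U[q]f)\ast g_{\lambda_n}\|_2^2=W_n(f)$ and $\sum_{q\in\Lambda^{n-1}}\|(U[q]f)\ast\chi_n\|_2^2=P_{n-1}$, this yields the two-sided per-layer relation $A_nW_{n-1}(f)\le P_{n-1}+W_n(f)\le B_nW_{n-1}(f)$, valid for all $n\ge1$ with $W_0(f)=\|f\|_2^2$. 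Every statement of the theorem is extracted from this relation together with one frequency-domain input.

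For part \textit{i)} the frame relation alone only gives $W_N(f)\le B_\Omega^N\|f\|_2^2$; the gain encoded in the high-pass integral must come from a \emph{demodulation lemma}. The point is that each feature map is $U[(q,\lambda_n)]f=|(U[q]f)\ast g_{\lambda_n}|$, where $(U[q]f)\ast g_{\lambda_n}$ is \emph{analytic} by \eqref{eq:ass1}, i.e. its Fourier transform is supported in a single orthant $H_{A_{\lambda_n}}$ and, by \eqref{eq:ass2}, outside $B_\delta(0)$. Taking the modulus moves spectral mass toward base-band: using $\widehat{|u|^2}=\widehat{u}\ast\widehat{\overline{u}}$ with $\supp(\widehat{\overline{u}})\subseteq -H_{A_{\lambda_n}}$, the spectral content of $|(U[q]f)\ast g_{\lambda_n}|$ is controlled by the orthant difference set. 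I would make this quantitative against the compactly-supported low-pass window $\widehat{r_l}(\cdot/R)$ (the hypothesis $l>\lfloor d/2\rfloor+1$ guarantees $r_l\in L^1(\Rd)$, so it is a legitimate convolution kernel), proving by induction on $N$ that the part of $W_N(f)$ not yet extracted by the $\chi_n$ originates only from frequencies of $f$ with $|\omega|\gtrsim N^\alpha\delta$; equivalently, a component of $f$ at radius $\rho$ is driven below $\delta$, and hence extracted, within about $(\rho/\delta)^{1/\alpha}$ layers. The orthant geometry in $\Rd$ is exactly what converts the per-layer radius gain into the exponent $\alpha$ of \eqref{defalp} (additive, hence linear $N^1$, for $d=1$, and the slower $N^\alpha$ for $d\ge2$), while the factor $B_\Omega^N$ is carried along from the upper half of the per-layer relation. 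Making this demodulation estimate rigorous---in particular establishing the $d$-dimensional orthant bound producing $\alpha=\log_2\sqrt{d/(d-1/2)}$---is the main obstacle and the technical heart of the proof.

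Part \textit{ii)} is then routine from \eqref{main_h:up1}. Setting $R:=N^\alpha\delta$ and splitting the integral at $|\omega|=R$, on $\{|\omega|>R\}$ I bound $1-|\widehat{r_l}(\omega/R)|^2\le1$ and use $|\omega|^{2s}\ge R^{2s}$ to get a contribution $O(R^{-2s})\|f\|_{H^s}^2$; on $\{|\omega|\le R\}$ the elementary inequality $1-(1-t)_+^{2l}\le 2l\,t$ gives $1-|\widehat{r_l}(\omega/R)|^2\le 2l\,|\omega|/R\le 2l\,(|\omega|/R)^{\gamma}$ with $\gamma=\min\{1,2s\}$, whence that contribution is $O(R^{-\gamma})\|f\|_{H^{\gamma/2}}^2=O(R^{-\gamma})\|f\|_{H^s}^2$ (using $\gamma/2\le s$). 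Since $2s\ge\gamma$, both pieces are $O(R^{-\gamma})=O((N^\alpha\delta)^{-\gamma})$, and multiplying by $B_\Omega^N$ yields \eqref{main_so0}.

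For part \textit{iii)} I would first observe that \eqref{main_h:up1} forces $W_N(f)\to0$ for every $f\in L^2(\Rd)$: as $N\to\infty$ the integrand tends to $0$ pointwise and is dominated by $|\widehat{f}|^2$, so the integral vanishes while $B_\Omega^N\uparrow B_\Omega<\infty$ by \eqref{conditfb}. The upper bound in \eqref{main} then follows from a clean induction showing $\sum_{n=0}^{N-1}P_n+W_N(f)\le B_\Omega^N\|f\|_2^2$ (the inductive step combines $P_N+W_{N+1}\le\max\{1,B_{N+1}\}W_N$ with $W_N\le B_\Omega^N\|f\|_2^2$), followed by $N\to\infty$. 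For the lower bound I would telescope $\|f\|_2^2=\sum_{n\ge1}(W_{n-1}(f)-W_n(f))$---legitimate precisely because $W_N(f)\to0$---and feed in $W_n\ge A_nW_{n-1}-P_{n-1}$ from the per-layer relation, with the convergence $A_\Omega=\lim_N\prod_{k=1}^N\min\{1,A_k\}>0$ from \eqref{conditfb} controlling the accumulated ``leakage'' from layers with $A_n<1$. I expect pinning the \emph{exact} constant $A_\Omega$ in the lower bound to be the delicate bookkeeping step here, since the naive frame telescoping is lossy and the base-band concentration underlying part \textit{i)} must be invoked to prevent feature-map energy from persisting just below the spectral gap.
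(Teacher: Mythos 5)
Your treatment of parts ii) and iii) is sound and essentially coincides with the paper's. Part ii) is the same split of the integral at $|\omega|\asymp N^\alpha\delta$ using $1-(1-t)_+^{2l}\le 2lt$ and the Sobolev weight. Part iii) follows, exactly as in the paper, from the per-layer frame relation $A_nW_{n-1}(f)\le |||\Phi^{n-1}_\Omega(f)|||^2+W_n(f)\le B_nW_{n-1}(f)$ (which, iterated, is the decomposition \eqref{main_eq:decomp}) together with $W_N(f)\to0$; your dominated-convergence argument for the latter is a perfectly good substitute for the paper's use of Dini's theorem, and the lower-bound bookkeeping you worry about is handled cleanly by $\sum_{n=0}^{N-2}|||\Phi^n_\Omega(f)|||^2+A_NW_{N-1}(f)\ge\min\{1,A_N\}\big(\sum_{n=0}^{N-2}|||\Phi^n_\Omega(f)|||^2+W_{N-1}(f)\big)$, so no input beyond $W_N\to0$ and the definition $A_\Omega^N=\prod_k\min\{1,A_k\}$ is needed.

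The genuine gap is in part i), and you flag it yourself: the entire content of the theorem is the inductive step, and your proposal contains no mechanism for it. Worse, the mechanism you gesture at --- controlling the spectrum of the modulus via $\widehat{|u|^2}=\widehat{u}\ast\widehat{\overline{u}}$ and the orthant difference set --- does not transfer from $|u|^2$ to $|u|$: the Fourier transform of $|u|$ admits no such support localization, and this is precisely the trap the paper's argument is built to avoid. The actual device is Mallat's modulation lemma (Lemma \ref{lemMallat}): since $\widehat{r_l}$ is a positive definite radial basis function, $r_l\ge0$ pointwise, hence $\||u|\ast r_{l,N-1,\alpha,\delta}\|_2^2\ge\|u\ast(M_{\nu}r_{l,N-1,\alpha,\delta})\|_2^2$ for any modulation frequency $\nu$. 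One then chooses $\nu_{\lambda_n}=A_{\lambda_n}\nu$ with $\nu_k=(1+2^{-1/2})\delta/d$ pointing into the orthant supporting $\widehat{g_{\lambda_n}}$ and proves the purely geometric inequality \eqref{help1}, namely that the shifted window at scale $(N-1)^\alpha\delta$ dominates the unshifted window at scale $N^\alpha\delta$ on $H_{A_{\lambda_n}}\backslash B_\delta(0)$ (Lemma \ref{help}); the value $\alpha=\log_2\sqrt{d/(d-1/2)}$ is exactly what makes the associated concave quadratic $p_N(\tau)$ nonnegative at both endpoints $\tau=\delta$ and $\tau=N^\alpha\delta$, and the improvement to $\alpha=1$ for $d=1$ comes from replacing rotated orthants by the two half-lines with $\nu_{\lambda_n}=\pm\delta$. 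None of this --- the positive definiteness of $\widehat{r_l}$, the modulation lemma, the choice of $\nu$, or the quadratic analysis --- appears in your proposal, so part i), on which parts ii) and iii) rest, is unproved.
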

\vspace{-.5cm}
\begin{proof}[Proof]
For the proofs of i) and ii), we refer to Appendices \ref{app:prop2}  and \ref{Sob1Ap}, respectively. The proof of statement iii) is based on two key ingredients. First, we establish---in Proposition \ref{prop1} in Appendix \ref{app:prop1}---that the feature extractor $\Phi_\Omega$ satisfies the energy decomposition identity
\begin{equation}\label{eq:decomp}
A_\Omega^N\|f\|_2^2 \leq\sum_{n=0}^{N-1}||| \Phi^n_\Omega(f)|||^2+W_N(f)\leq B_\Omega^N\|f\|_2^2,
\end{equation}
for all $f\in L^2(\Rd)$ and all $N\geq 1$. Second, we show---in Proposition \ref{prop2} in Appendix \ref{bum}---that the integral on the RHS of \eqref{main_h:up1} goes to zero as $N\to \infty$ which, thanks to $\lim\limits_{N\to \infty} B_\Omega^N=B_\Omega<\infty$,  implies  that $W_N(f)\to 0$ as $N\to \infty$. We note that while the decomposition \eqref{eq:decomp} holds for general filters $\Psi_n$ satisfying the  frame property \eqref{PFP}, it is the upper bound \eqref{main_h:up1} that makes use of the analyticity and high-pass conditions in Assumption \ref{ass}. The final energy conservation result \eqref{main} is obtained by letting $N\to \infty$ in \eqref{eq:decomp}. 
\end{proof}
The strength of the results in Theorem \ref{thm3}  derives itself from the fact that the only condition we need to impose on the filters $\Psi_n$ is Assumption \ref{ass}, which, as already mentioned, is met by a wide array of filters. Moreover, condition \eqref{conditfb} is easily satisfied by normalizing the filters $\Psi_n$,  $n\in \mathbb{N}$, appropriately (see, e.g., \cite[Proposition 3]{Wiatowski_journal}). We note that this normalization, when applied to filters that satisfy Assumption \ref{ass}, yields filters that still meet Assumption \ref{ass}.

The identity \eqref{main_so0} establishes, upon normalization \cite[Proposition 3]{Wiatowski_journal} of the $\Psi_n$ to get $B_n\leq 1$,  $n\in \mathbb{N}$, that the energy decay rate, i.e., the decay rate of $W_N(f)$, is at least polynomial in $N$. We hasten to add that \eqref{main_h:up1} does not preclude the energy from decaying faster in practice. 

Underlying the energy conservation result \eqref{main} is the following demodulation effect induced by the modulus non-linearity in combination with the analyticity and high-pass properties of the filters $\{ g_{\lambda_n}\}_{\lambda_n \in \Lambda_n}$. In every network layer, the spectral content of each individual feature map  is moved to base-band (i.e., to low frequencies), where it is extracted by the low-pass output-generating atom $\chi_{n+1}$, see Figure \ref{fig:low3}. The components not collected by  $\chi_{n+1}$ (see Figure \ref{fig:low3}, bottom row) are  captured by the analytic high-pass filters $\{ g_{\lambda_{n+1}}\}_{\lambda_{n+1} \in \Lambda_{n+1}}$ in the next layer and, thanks to the modulus non-linearity, again moved to low frequencies and extracted by $\chi_{n+2}$. Iterating this process ensures that the null-set of the feature vector (be it for the infinite-depth network or, as established in  Section \ref{sec:nlay}, for finite network depths) is trivial. It is interesting to observe that the sigmoid, the rectified linear unit, and the hyperbolic tangent non-linearities---all widely used in the deep learning literature---exhibit very different behavior in this regard, namely, they  do not  demodulate in the way the modulus non-linearity does \cite[Figure 6]{SPIEWiatowski}.  It is therefore unclear whether the proof machinery for energy conservation developed in this paper extends to these non-linearities or, for that matter, whether one gets energy decay and conservation at all. 

The feature map energy decay result \eqref{main_so0} relates to the feature vector energy conservation result \eqref{main} via the energy decomposition identity \eqref{eq:decomp}. Specifically, particularizing \eqref{eq:decomp} for Parseval frames, i.e.,  $A_n=B_n=1$, for all $n\in \mathbb{N}$, we get 
\begin{equation}\label{fdljslkkjjs}
\sum_{n=0}^{N-1}||| \Phi^n_\Omega(f)|||^2+W_N(f)= \|f\|_2^2.
\end{equation}
This shows that the input signal energy contained in the network layers $n\geq N$ is precisely given by $W_N(f)$. Thanks to $W_N(f)\to 0$ as $N\to \infty$ (established in Proposition \ref{prop2} in Appendix \ref{bum}) this residual energy  will eventually be collected in the infinite-depth feature vector $\Phi_\Omega(f)$ so that no input signal energy is ``lost''  in the network. In Section \ref{sec:nlay}, we shall answer the question of how many layers are needed to absorb $((1-\varepsilon)\cdot 100)\%$ of the input signal energy.

The next result shows that, under additional structural assumptions on the filters $\{ g_{\lambda_n}\}_{\lambda_n\in \Lambda}$, the guaranteed energy decay rate can be improved from polynomial to exponential. Specifically, we can get exponential energy decay for  broad families of wavelets and  Weyl-Heisenberg filters. For conceptual reasons, we consider the $1$-D case and, for simplicity of exposition, we employ filters that constitute Parseval frames and are identical across network layers.

\begin{theorem}\label{thm2}
Let $\widehat{r_l}:\R\to \R$, $\widehat{r_l}(\omega):=(1-|\omega|)^{l}_{+}$, with $l>1$. 
\begin{itemize}
\item[\textit{i)}]{Wavelets: Let the mother and father wavelets $\psi,\phi \in L^1(\R)\cap L^2(\R)$ satisfy $\text{supp}(\widehat{\psi})\subseteq [1/2,2]$ and 
\begin{equation}\label{eq:Waveletcondi}
 |\widehat{\phi}(\omega)|^2+\sum_{j=1}^{\infty} |\widehat{\psi}(2^{-j}\omega)|^2=1,\hspace{0.5cm} a.e. \ \omega\geq 0.\vspace{-0.1cm}
 \end{equation}
Moreover, let $g_j(x):=2^{j}\psi(2^jx)$, for $x\in \R$, $j\geq1$, and $g_j(x):=2^{|j|}\psi(-2^{|j|}x)$, for  $x\in \R$, $j\leq-1$, and set $\chi(x):=\phi(x)$, for $x \in \R$. Let $\Omega$ be the module-sequence \eqref{mods} with filters $\Psi=\{ \chi\}\cup\{g_j\}_{j\in \mathbb{Z}\backslash\{ 0\}}$ in every network layer. Then, 
\begin{equation}\label{main_h:wave}
W_N(f)\leq \int_{\R}\big|\widehat{f}(\omega)\big|^2\Big(1-\Big|\widehat{r_l}\Big(\frac{\omega}{(5/3)^{N-1}}\Big)\Big|^2\Big)\mathrm d\omega,
\end{equation}
for  $f\in L^2(\R)$ and  $N\geq 1$. Moreover, for every Sobolev function $f\in H^s(\R)$, $s>0$, we have 
\begin{equation}\label{main_sow1}
W_N(f)=\mathcal{O}\big((5/3)^{-\gamma N}\big),
\end{equation}
where $\gamma:= \min\{1,2s \}$.}
\item[\textit{ii)}]{Weyl-Heisenberg filters: For $R\in \mathbb{R}$, let the functions $g,\phi \in L^1(\R)\cap L^2(\R)$ satisfy $\text{supp}(\widehat{g})\subseteq [-R,R]$, $\widehat{g}(-\omega)=\widehat{g}(\omega)$, for $\omega \in \R$, and 
\begin{equation}\label{eq:Gaborcondi}
 |\widehat{\phi}(\omega)|^2+\sum_{k=1}^{\infty} |\widehat{g}(\omega-R(k+1))|^2=1,
 \end{equation}
a.e. $\omega\geq 0$. Moreover, let $g_k(x):=e^{2\pi i (k+1)Rx}g(x)$, for $x\in \R$, $k\geq1$, and $g_k(x):=e^{-2\pi i (|k|+1)Rx}g(x)$, for $x\in \R$, $k\leq-1$, and set $\chi(x):=\phi(x)$, for $x \in \R$. Let $\Omega$ be the module-sequence \eqref{mods} with filters $\Psi=\{ \chi\}\cup\{g_k\}_{k\in \mathbb{Z}\backslash\{ 0\}}$ in every network layer. Then, 
\begin{equation}\label{main_g:wave}
\hspace{-0.16cm}W_N(f)\leq \int_{\R}\big|\widehat{f}(\omega)\big|^2\Big(1-\Big|\widehat{r_l}\Big(\frac{\omega}{(3/2)^{N-1}R}\Big)\Big|^2\Big)\mathrm d\omega,
\end{equation}
for  $f\in L^2(\R)$ and  $N\geq 1$. Moreover, for every Sobolev function $f\in H^s(\R)$, $s>0$, we have
\begin{equation}\label{main_sog1}W_N(f) = \mathcal{O}\big((3/2)^{-\gamma N}\big),
\end{equation}
where $\gamma := \min \{1,2s\}$.}
\end{itemize}
\end{theorem}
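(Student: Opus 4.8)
The plan is to reuse the architecture of the proof of Theorem \ref{thm3} i) and to upgrade only the rate at which the high-pass ``cutoff'' grows with depth. First observe that the wavelet filters $\Psi=\{\chi\}\cup\{g_j\}_{j\in\mathbb{Z}\setminus\{0\}}$ satisfy Assumption \ref{ass} with $\delta=1$: the supports $\supp(\widehat{g_j})\subseteq[2^{j-1},2^{j+1}]$ for $j\ge1$, together with their reflections for $j\le-1$, leave the gap $(-1,1)$ empty as required by \eqref{eq:ass2}, and \eqref{eq:Waveletcondi} is precisely the Littlewood--Paley identity, so $\Psi$ is a Parseval frame ($A_n=B_n=1$, hence $B_\Omega^N=1$). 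The analogous statement holds for the Weyl--Heisenberg family with $\delta=R$ via \eqref{eq:Gaborcondi}. Consequently Theorem \ref{thm3} i) already applies and yields \eqref{main_h:up1} with $B_\Omega^N=1$ and $d=1$ (so $\alpha=1$), i.e.\ a high-pass cutoff growing only linearly as $N\delta$. The entire task is therefore to replace this linear growth by the exponential growth $\delta\,a^{N-1}$ claimed in \eqref{main_h:wave} and \eqref{main_g:wave}.

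To this end I would revisit the one-layer step at the heart of Theorem \ref{thm3} i). That step propagates a bound of the form $W_N(f)\le\int_{\mathbb{R}}|\widehat{f}(\omega)|^2\,\Theta_{K_N}(\omega)\,\mathrm d\omega$, with high-pass weight $\Theta_{K}(\omega):=1-|\widehat{r_l}(\omega/K)|^2$, by establishing for each feature map $h$ a demodulation estimate that trades the modulus against the high-pass filters: the analyticity and high-pass conditions in Assumption \ref{ass} force $h\ast g_\lambda$ to have one-sided, band-limited spectrum, whence $|h\ast g_\lambda|$ is concentrated near the origin and convolving it with the next layer's high-pass filters captures only a controlled tail. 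For \emph{general} high-pass filters the only quantitative input is the gap radius $\delta$, and the cutoff can be guaranteed to advance only additively, $K_N=K_{N-1}+\delta$ with $K_1=\delta$, reproducing $K_N=N\delta$. The point of Theorem \ref{thm2} is that the geometry of the two families turns this additive step into a multiplicative one.

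Concretely, for the wavelet family I would track, layer by layer, the frequency scale $K_N$ below which the input spectrum no longer contributes to the depth-$N$ feature-map energy, starting from the base case $K_1=\delta$. Energy surviving to frequency $F$ must have been produced by a filter $g_j$ whose dyadic support $[2^{j-1},2^{j+1}]$ overlaps the band of its parent; the multiplicative (dyadic) support geometry of $\{g_j\}$, combined with the demodulation estimate applied to the polynomial weight $\widehat{r_l}(\omega)=(1-|\omega|)_+^{l}$, advances the cutoff by the factor $5/3$ per layer, i.e.\ $K_N=\tfrac{5}{3}K_{N-1}$, whence $K_N=(5/3)^{N-1}\delta=(5/3)^{N-1}$. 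For the Weyl--Heisenberg family the uniformly translated supports $[kR,(k+2)R]$ (bandwidth $2R$, spacing $R$) yield instead the factor $3/2$, giving $K_N=(3/2)^{N-1}R$. Substituting these cutoffs into the master bound produces \eqref{main_h:wave} and \eqref{main_g:wave}. The Sobolev statements \eqref{main_sow1} and \eqref{main_sog1} then follow verbatim from the computation behind Theorem \ref{thm3} ii): on the effective support of $\Theta_{K_N}$ one has $|\omega|\gtrsim K_N$, so $|\widehat{f}(\omega)|^2\le(1+|\omega|^2)^{-s}\|f\|_{H^s}^2$ contributes a factor $K_N^{-\min\{1,2s\}}$ (the exponent $1$ coming from the linear vanishing of $\Theta_K$ at the origin, the $2s$ from the Sobolev decay), giving $\mathcal{O}(a^{-\gamma N})$ with $\gamma=\min\{1,2s\}$.

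The main obstacle is the sharp one-layer demodulation estimate that produces the \emph{exact} constants $5/3$ and $3/2$. The difficulty is that, unlike $|h\ast g_\lambda|^2$ (whose spectrum sits in an interval of length equal to the filter bandwidth, by autocorrelation), the modulus $|h\ast g_\lambda|$ itself has spectrum spread over all of $\mathbb{R}$, so a naive support argument collapses and a crude Bernstein bound on $\nabla|h\ast g_\lambda|$ is off by a multiplicative constant exceeding one. Extracting the correct factor requires pairing the demodulated signal against the smooth high-pass weight $1-|\widehat{r_l}(\cdot/K)|^2$ and optimizing the interplay between the filter bandwidth, the gap radius $\delta$, and the order $l$ of the weight; it is this optimization, carried out separately for the dyadic and the uniform support geometries, that yields $5/3$ and $3/2$ respectively.
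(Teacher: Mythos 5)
Your overall skeleton matches the paper's: induction over depth, the base case from the Littlewood--Paley identity and the support conditions, Mallat's modulation lemma $\||h|\ast r\|_2^2\geq\|h\ast(M_\nu r)\|_2^2$ to demodulate each feature map, and the Sobolev rates recycled verbatim from the computation behind Theorem~\ref{thm3}~ii). But the proposal has a genuine gap exactly where the theorem's content lies: you never establish the one-layer estimate that produces the factors $5/3$ and $3/2$, and you explicitly defer it as ``the main obstacle'' to be resolved by an unspecified ``optimization''. In the paper this step is concrete and elementary, and it does not involve Bernstein-type bounds on $\nabla|h\ast g_\lambda|$ or any pairing of the demodulated signal against the weight. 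One chooses explicit modulation centers $\nu_j=\tfrac{4}{5}\,2^{j}$ (wavelets) resp.\ $\nu_k=Rk+\tfrac{2}{3}R$ (Weyl--Heisenberg), observes that on each band $[2^{j-1},2^j]$ (resp.\ $[Rk,R(k+1)]$) only the two adjacent filters $g_{j-1},g_j$ are active and their squared moduli form a sub-partition of unity, and then bounds the resulting convex combination $h_{l,N-2}(\omega)=\sum_j|\widehat{g_j}(\omega)|^2\bigl(1-|\widehat{r_l}((\omega-\nu_j)/a^{N-2}\delta)|^2\bigr)$ pointwise by $1-|\widehat{r_l}(\omega/(a^{N-1}\delta))|^2$. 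The mechanism is a sign split: the band is cut at $\tfrac{3}{5}2^j$ (resp.\ $R(k+1/6)$), on the left piece the difference $s(\omega)$ of the two shifted weights is nonnegative so the bound reduces to the center $\nu_j$, on the right piece it is nonpositive so it reduces to $\nu_{j-1}$, and in either case the relevant distance satisfies $|\omega-\nu|\leq\tfrac{3}{5}\omega$ (resp.\ $\tfrac{2}{3}\omega$), which is precisely where $a=5/3$ (resp.\ $a=3/2$) enters. Without this choice of $\nu$ and the pointwise convexity argument, the constants in \eqref{main_h:wave} and \eqref{main_g:wave} are asserted rather than derived.

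A secondary inaccuracy: your narrative of ``tracking a frequency scale $K_N$ below which the input spectrum no longer contributes'' misrepresents the induction. The weight $1-|\widehat{r_l}(\cdot/K)|^2$ is not an indicator and the argument is not a support-propagation statement; it is an integral inequality whose inductive step multiplies the argument of $\widehat{r_l}$ by $a$ through the pointwise bound above. Likewise, your claim that for general filters ``the cutoff can be guaranteed to advance only additively, $K_N=K_{N-1}+\delta$'' is not what Theorem~\ref{thm3} proves in $d=1$ either --- there the advance comes from the same modulation trick with $\nu_{\lambda_n}=\pm\delta$ and the elementary inequality $\tfrac{\omega-\delta}{(N-1)\delta}\leq\tfrac{\omega}{N\delta}$ on $[\delta,N\delta]$; the exponential improvement in Theorem~\ref{thm2} is obtained by letting the modulation center scale with the filter's band rather than staying at the gap edge, which is only possible because at each frequency the partition of unity involves just two neighbouring, geometrically placed filters.
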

\begin{proof}
See Appendix \ref{majabum}.
\end{proof}
The conditions we impose on the mother and father wavelet $\psi,\phi$ in i)  are satisfied, e.g., by analytic Meyer wavelets \cite[Section 3.3.5]{Daubechies}, and those on the prototype function $g$ and low-pass filter $\phi$ in ii) by B-splines \cite[Section 1]{Gaborexmpl}. Moreover, as shown in \cite[Theorem 3.1]{SPIEWiatowski},  the exponential energy decay results in \eqref{main_sow1} and \eqref{main_sog1} can be generalized to $\mathcal{O}(a^{-N})$ with \emph{arbitrary} decay factor $a>1$ realized through suitable choice of the mother wavelet or the Weyl-Heisenberg prototype function. 

We note that in the presence of pooling by sub-sampling (as defined in \cite[Eq. 9]{Wiatowski_journal}), say with pooling factors $S_n:=S\in [1,a)$, for all $n\in \mathbb{N}$, (where $a=\frac{5}{3}$ in the wavelet case and $a=\frac{3}{2}$ in the Weyl-Heisenberg case) the effective decay factor in \eqref{main_sow1} and \eqref{main_sog1} becomes $\frac{5}{3S}$ and $\frac{3}{2S}$, respectively. Exponential energy decay is hence compatible with vertical translation invariance according to \cite[Theorem 1]{Wiatowski_journal}, albeit at the cost of a slower (exponential) decay rate.  The proof of this statement is structurally very similar to that of Theorem \ref{thm2} and will therefore not be  given  here. Finally, we note that the energy decay and conservation results in Theorems \ref{thm3} and \ref{thm2} are compatible with the feature extractor $\Phi_\Omega$ being deformation-insensitive according to \cite[Theorem 2]{Wiatowski_journal}, simply by noting that \cite[Theorem 2]{Wiatowski_journal} applies to general semi-discrete frames and general Lipschitz-continuous non-linearities.

We next put the results in Theorems \ref{thm3} and \ref{thm2}  into perspective with respect to the literature.   

\textit{Relation to} \cite[Section 5]{Waldspurger}: The basic philosophy of our proof technique for \eqref{main_h:up1}, \eqref{main}, \eqref{main_h:wave}, and \eqref{main_g:wave} is inspired by the proof in \cite[Section 5]{Waldspurger}, which establishes \eqref{erghekj} and \eqref{eco} for scattering networks based on certain wavelet filters and with $1$-D real-valued input signals $f\in L^2(\R)$. Specifically, in \cite[Section 5]{Waldspurger}, in every network layer,  the filters $\Psi_W=\{\chi\}\cup\{g_{j}\}_{j\in \Z}$ (where $g_j(\omega):=2^{j}\psi(2^j\omega)$, $j\in \Z$, for some mother wavelet $\psi\in L^1(\R)\cap L^2(\R)$) are $1$-D functions  satisfying the frame property \eqref{PFP} with $A_n=B_n=1$,  $n\in \mathbb{N}$, a mild analyticity condition\footnote{At the time of completion of the present paper, I. Waldspurger kindly sent us a preprint \cite{SamptaIrene} which shows that the analyticity condition \cite[Eq. 5.5]{Waldspurger} on the mother wavelet is not needed for \eqref{erghekj} to hold.} \cite[Eq. 5.5]{Waldspurger} in the sense of $|\widehat{g_{j}}(\omega)|$,  $j\in \mathbb{Z}$, being larger for positive frequencies $\omega$ than for the corresponding negative ones, and a  vanishing moments condition \cite[Eq. 5.6]{Waldspurger} which controls the behavior of $\widehat{\psi}(\omega)$ around the origin according to $|\widehat{\psi}(\omega)|\leq C|\omega|^{1+\varepsilon}$, for $\omega \in \R$, for some $C,\varepsilon>0$. Similarly to the proof of \eqref{eco}  as given in  \cite[Section 5]{Waldspurger}, we base our proof of \eqref{main} on the energy decomposition identity \eqref{eq:decomp} and on an upper bound on $W_N(f)$ (see \eqref{erghekj} for the corresponding upper bound established in \cite[Section 5]{Waldspurger}) shown to go to zero as $N\to \infty$. The exponential energy decay results \eqref{main_so0}, \eqref{main_sow1}, and \eqref{main_sog1} for Sobolev functions $f\in H^s(\Rd)$ are entirely new. The major differences between \cite[Section 5]{Waldspurger} and our results are (i) that \eqref{erghekj} (reported in \cite[Section 5]{Waldspurger}) depends on an unspecified $a>1$, whereas our results in \eqref{main_h:up1}, \eqref{main_so0}, \eqref{main_h:wave},  \eqref{main_sow1}, \eqref{main_g:wave}, and \eqref{main_sog1} make the decay factor $a$ and the decay exponent $\alpha$ explicit, (ii) the technical elements employed to arrive at the upper bounds on $W_N(f)$; specifically, while the proof in \cite[Section 5]{Waldspurger} makes explicit use of the algebraic structure of the filters, namely, the multi-scale structure of wavelets, our proof of \eqref{main_h:up1} is oblivious to the algebraic structure of the filters, which is why it applies to general (possibly unstructured) filters that, in addition, can be different in different network layers, (iii) the assumptions imposed on  the filters, namely the analyticity and vanishing moments conditions in \cite[Eq. 5.5--5.6]{Waldspurger}, in contrast to our Assumption \ref{ass}, and (iv) the class of input signals $f$ the results apply  to, namely $1$-D real-valued signals in \cite[Section 5]{Waldspurger}, and $d$-dimensional complex-valued signals in our Theorem \ref{thm3}.

\textit{Relation to}  \cite{czaja2016uniform}: For scattering networks that are based on so-called uniform covering filters \cite{czaja2016uniform},   \eqref{czaja} and \eqref{eco} are established in \cite{czaja2016uniform} for $d$-dimensional complex-valued signals $f\in L^2(\Rd)$. Specifically, in \cite{czaja2016uniform}, in every network layer,  the $d$-dimensional filters $\{\chi\}\cup\{g_{\lambda}\}_{\lambda\in \Lambda}$ are taken to satisfy i) the frame property \eqref{PFP} with $A=B=1$ and hence $A_n=B_n=1$,  $n\in \mathbb{N}$, see \cite[Definition 2.1 (c)]{czaja2016uniform}, ii) a vanishing moments condition \cite[Definition 2.1 (a)]{czaja2016uniform} according to $\widehat{g_\lambda}(0)=0$,  for  $\lambda\in \Lambda$, and iii) a uniform covering condition \cite[Definition 2.1 (b)]{czaja2016uniform} which says that  the filters' Fourier transform support sets can be covered by a union of finitely many balls. The major differences between \cite{czaja2016uniform} and our results are as follows:  (i) the results in \cite{czaja2016uniform} apply exclusively to filters satisfying the uniform covering condition such as, e.g., Weyl-Heisenberg filters with a band-limited prototype function \cite[Proposition 2.3]{czaja2016uniform}, but do not apply to multi-scale filters such as wavelets, $(\alpha)$-curvelets, shearlets, and ridgelets (see \cite[Remark 2.2 (b)]{czaja2016uniform}), (ii) \eqref{czaja} as established in \cite{czaja2016uniform} leaves the decay factor $a>1$ unspecified, whereas our results in \eqref{main_sow1} and \eqref{main_sog1} make  the decay factor $a$ explicit (namely, $a=5/3$ in the wavelet case and  $a=3/2$ in the Weyl-Heisenberg case), (iii) the exponential energy decay result in \eqref{czaja} as established in \cite{czaja2016uniform} applies to all $f\in L^2(\Rd)$ and thus, in particular, to  Sobolev input signals (owing to $H^s(\Rd)\subseteq L^2(\Rd)$, for all $s>0$), whereas our decay results in \eqref{main_so0}, \eqref{main_sow1}, and \eqref{main_sog1} pertain to Sobolev input signals $f\in H^s(\Rd)$, $s>0$, only, (iv) the technical elements employed to arrive at the upper bounds on $W_N(f)$, specifically, while the proof in  \cite{czaja2016uniform}  makes explicit use of the uniform covering property of the filters, our proof of \eqref{main_h:up1} is completely oblivious to the (algebraic) structure of the filters, (v) the assumptions imposed on  the filters, i.e., the vanishing moments and uniform covering condition in \cite[Definition 2.1 (a)-(b)]{czaja2016uniform}, in contrast to our Assumption \ref{ass}, which is less restrictive, and thereby makes our results in Theorem \ref{thm3} apply to general (possibly unstructured) filters that, in addition, can be different in different network layers.

\section{Number of layers needed}\label{sec:nlay}
DCNNs  used in practice employ potentially hundreds of layers \cite{he2015deep}. Such network depths entail formidable computational challenges both in training and in operating the network. It is therefore important to understand how many layers are needed to have most of the input signal energy be contained in the feature vector. This will be done by considering Parseval frames in all layers, i.e., frames with frame bounds $A_n=B_n=1$, $n\in \mathbb{N}$. The energy conservation result \eqref{main} then implies that the infinite-depth feature vector $$\Phi_\Omega(f) = \bigcup_{n=0}^{\infty}\Phi^n_\Omega(f)$$ contains the entire input signal energy according to $
|||\Phi_\Omega(f)|||^2 = \sum_{n=0}^\infty |||\Phi^n_\Omega(f)|||^2 = \|f\|_2^2. 
$ 
Now, the decomposition \eqref{fdljslkkjjs} reveals that thanks to $\lim\limits_{N\to \infty} W_N(f)\to 0$, increasing the network depth $N$ implies that the feature vector $\bigcup_{n=0}^{N}\Phi^n_\Omega(f)$ progressively contains a larger fraction of the input signal energy. We formalize the question on the number of layers needed by asking for bounds of the form 
\begin{equation}\label{bubu_main}
(1-\varepsilon)\leq \frac{ \sum_{n=0}^{N} |||\Phi_\Omega^n(f)|||^2}{\|f\|^2_2}\leq1,
\end{equation}
i.e., by determining the network depth $N$  guaranteeing  that at least $((1-\varepsilon)\cdot 100)\%$ of the input signal energy are captured by the corresponding depth-$N$ feature vector  $\bigcup_{n=0}^{N}\Phi^n_\Omega(f)$. Moreover, \eqref{bubu_main} ensures that the depth-$N$ feature extractor $\bigcup_{n=0}^{N}\Phi^n_\Omega$ exhibits a trivial null-set.

The following results establish handy estimates of the number $N$ of layers  needed to guarantee \eqref{bubu_main}. For pedagogical reasons, we start with the case of band-limited input signals and then proceed to a more general statement  pertaining to Sobolev functions.  
 
\begin{corollary}\label{cor1}
\begin{itemize}
\item[]
\item[\textit{i)}]{Let $\Omega$ be the module-sequence \eqref{mods} with filters $\{ g_{\lambda_n}\}_{\lambda_n \in \Lambda_n}$ satisfying the conditions in Assumption \ref{ass}, and let the corresponding frame bounds be $A_n=B_n=1$,  $n\in \mathbb{N}$.  Let $\delta>0$ be the radius of the spectral gap $B_{\delta}(0)$ left by the filters $\{ g_{\lambda_n}\}_{\lambda_n \in \Lambda_n}$ according to \eqref{eq:ass2}. Furthermore, let $l>\lfloor d/2 \rfloor+1$, $\eps \in (0,1)$, $\alpha$ as defined in \eqref{defalp}, and  $f\in L^2(\Rd)$  $L$-band-limited. If
\begin{equation}\label{hihih}
N\geq \Bigg\lceil \Bigg(\frac{L}{(1-(1-\varepsilon)^{\frac{1}{2l}})\delta}\Bigg)^{1/\alpha}-1\Bigg\rceil,
\end{equation}
then \eqref{bubu_main} holds.
}
\item[\textit{ii)}]{ Assume that the conditions in Theorem \ref{thm2} i) and ii) hold. For the wavelet case, let $a=\frac{5}{3}$ and $\delta=1$ \big(where $\delta$ corresponds to the radius of the spectral gap left by the wavelets $\{ g_{j}\}_{j\in \mathbb{Z}\backslash\{0\}}$\big). For the Weyl-Heisenberg case, let $a=\frac{3}{2}$ and $\delta=R$ \big(here, $\delta$ corresponds to the radius of the spectral gap left by the Weyl-Heisenberg filters $\{ g_{k}\}_{k\in \mathbb{Z}\backslash\{0\}}$\big). Moreover,  let $l>1$, $\eps \in (0,1)$, and  $f\in L^2(\R)$  $L$-band-limited. If
\begin{equation}\label{hihih2}
N\geq \Bigg\lceil \log_a\hspace{-0.1cm}\Bigg(\frac{L}{(1-(1-\varepsilon)^{\frac{1}{2l}})\delta}\Bigg)\Bigg\rceil,
\end{equation}
then \eqref{bubu_main} holds in both cases.
}
\end{itemize}
\end{corollary}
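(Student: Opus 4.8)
The plan is to reduce both parts of the corollary to a single quantitative statement about the residual energy $W_{N+1}(f)$, and then to convert that statement into the stated thresholds on $N$ by exploiting band-limitedness together with the explicit decay bounds already proved. Throughout I would use the Parseval normalization $A_n=B_n=1$, which gives $A_\Omega^N=B_\Omega^N=1$ for all $N$.

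First I would invoke the energy decomposition identity \eqref{fdljslkkjjs} at depth $N+1$ (it is valid for every $N\geq 0$ in the Parseval case), which immediately gives
\[
\frac{\sum_{n=0}^{N}|||\Phi^n_\Omega(f)|||^2}{\|f\|_2^2}=1-\frac{W_{N+1}(f)}{\|f\|_2^2}.
\]
Since $W_{N+1}(f)\geq 0$, the upper bound in \eqref{bubu_main} is automatic, and the lower bound becomes equivalent to the single requirement $W_{N+1}(f)\leq \eps\|f\|_2^2$. The whole corollary therefore collapses to estimating $W_{N+1}(f)$.

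For part i) I would apply the polynomial bound \eqref{main_h:up1} at depth $N+1$, where $B_\Omega^{N+1}=1$, to obtain
\[
W_{N+1}(f)\leq \int_{\Rd}\big|\widehat{f}(\omega)\big|^2\Big(1-\big|\widehat{r_l}\big(\tfrac{\omega}{(N+1)^\alpha\delta}\big)\big|^2\Big)\mathrm d\omega.
\]
Because $f$ is $L$-band-limited, the integral runs only over $B_L(0)$, where $|\omega|\leq L$. Since $\widehat{r_l}(\omega)=(1-|\omega|)^l_+$ and the map $|\omega|\mapsto \big(1-|\omega|/c\big)_+^{2l}$ is non-increasing, the integrand factor attains its maximum at the band edge $|\omega|=L$, so Parseval's identity yields
\[
W_{N+1}(f)\leq \Big(1-\big(1-\tfrac{L}{(N+1)^\alpha\delta}\big)_+^{2l}\Big)\|f\|_2^2.
\]
Demanding that the prefactor be at most $\eps$ gives $\big(1-\tfrac{L}{(N+1)^\alpha\delta}\big)_+\geq (1-\eps)^{1/(2l)}$; since $\eps\in(0,1)$ forces the right-hand side into $(0,1)$, the positive part is active, and solving for $N$ and taking the ceiling reproduces exactly the threshold \eqref{hihih}.

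Part ii) follows the identical template, with \eqref{main_h:up1} replaced by the exponential bounds \eqref{main_h:wave} and \eqref{main_g:wave}; evaluated at depth $N+1$ their denominators become $a^N\delta$ with $(a,\delta)=(5/3,1)$ in the wavelet case and $(a,\delta)=(3/2,R)$ in the Weyl-Heisenberg case. The same band-edge estimate leads to $\tfrac{L}{a^N\delta}\leq 1-(1-\eps)^{1/(2l)}$, and applying $\log_a$ followed by the ceiling yields \eqref{hihih2}. I do not anticipate a genuine obstacle: the argument is essentially a pointwise domination of the decay integrand plus algebra. The only points demanding care are the off-by-one bookkeeping between the sum $\sum_{n=0}^{N}$ in \eqref{bubu_main} and the residual $W_{N+1}(f)$ supplied by \eqref{fdljslkkjjs}, and checking that the chosen $N$ keeps $L/((N+1)^\alpha\delta)$ (resp. $L/(a^N\delta)$) strictly below $1$, so that the positive part in $\widehat{r_l}$ is genuinely active rather than collapsing to zero and destroying the bound.
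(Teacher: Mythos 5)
Your proposal is correct and follows essentially the same route as the paper's proof in Appendix H: the energy decomposition identity at depth $N+1$ reduces everything to the bound $W_{N+1}(f)\leq\varepsilon\|f\|_2^2$, which is then obtained from \eqref{main_h:up1} (resp. \eqref{main_h:wave}, \eqref{main_g:wave}) by restricting the integral to $B_L(0)$, using the monotonicity of $\widehat{r_l}$ to evaluate at the band edge, and solving the resulting inequality for $N$. Your closing remarks on the off-by-one bookkeeping and on verifying that the positive part in $\widehat{r_l}$ is active are exactly the points the paper also checks.
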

\begin{proof}
See Appendix \ref{aca}. 
\end{proof}
\begin{table}
\vspace{0.2cm}
\centering
\setlength{\tabcolsep}{5pt}
{\small
\begin{tabular}{ l | c c c c c c | }
 & \multicolumn{6}{ c | }{$(1-\varepsilon)$}\\
 & 0.25 & 0.5 & 0.75& 0.9 & 0.95 & 0.99 \\ \hline
wavelets & 2 & 3 & 4 &  6 & 8  & 11 \\
Weyl-Heisenberg filters& 2 & 4 & 5 & 8 & 10 & 14   \\
 general filters & 2 & 3 & 7 & 19 & 39  & 199\\
\hline
\end{tabular}}
\caption{\label{tab:svmclass}  Number $N$ of layers needed to ensure that $((1-\varepsilon)\cdot 100)\%$ of the input signal energy are contained in the features generated in the first $N$ network layers.}
\label{tab2}
\end{table}
Corollary \ref{cor1} nicely shows how the description complexity of the signal class under consideration, namely the bandwidth $L$ and the dimension $d$ through the decay exponent $\alpha$ defined in \eqref{defalp} determine the number $N$ of layers  needed. Specifically, \eqref{hihih} and \eqref{hihih2} show that larger bandwidths $L$ and larger dimension $d$ render the input signal $f$ more ``complex'', which requires deeper networks to capture most of the energy of $f$.  The dependence of the lower bounds in \eqref{hihih} and \eqref{hihih2} on the network properties, through the module-sequence $\Omega$, is through the decay factor $a>1$ and the radius $\delta$ of the spectral gap left by the filters $\{ g_{\lambda_n}\}_{\lambda_n \in \Lambda_n}$.  

The following numerical example provides quantitative insights on the influence of the parameter $\eps$ on \eqref{hihih} and \eqref{hihih2}. Specifically, we set $L=1$, $\delta=1$, $d=1$ (which implies $\alpha=1$, see \eqref{defalp}), $l=1.0001$, and show in Table  \ref{tab2} the number $N$ of layers needed according to \eqref{hihih} and \eqref{hihih2} for different values of $\varepsilon$. The results show that $95\%$ of the input signal energy are contained in the first $8$ layers in the wavelet case and in the first $10$ layers in the Weyl-Heisenberg case. We can therefore conclude that in practice a relatively small number of layers is needed to have most of the input signal energy be contained in the feature vector. In contrast, for general filters, where we can guarantee  polynomial energy decay only,  $N=39$ layers are needed to absorb $95\%$ of the input signal energy. We hasten to add, however, that \eqref{main_h:up1} simply \emph{guarantees} polynomial energy decay and does not preclude the energy from decaying faster in practice.

We proceed with the estimates  for Sobolev-class input signals.

\begin{corollary}\label{cor3}
\begin{itemize}
\item[]
\item[\textit{i)}]{Let $\Omega$ be the module-sequence \eqref{mods} with filters $\{ g_{\lambda_n}\}_{\lambda_n \in \Lambda_n}$ satisfying the conditions in Assumption \ref{ass}, and let the corresponding frame bounds be $A_n=B_n=1$,  $n\in \mathbb{N}$. Let $\delta>0$ be the radius of the spectral gap $B_{\delta}(0)$ left by the filters $\{ g_{\lambda_n}\}_{\lambda_n \in \Lambda_n}$ according to \eqref{eq:ass2}. Furthermore, let $l>\lfloor d/2 \rfloor+1$, $\eps \in (0,1)$, $\alpha$ as defined in \eqref{defalp}, and $f\in H^s(\Rd)\backslash\{ 0\}$, for  $s>0$. If
\begin{equation}\label{hihihs}
N\geq \Bigg\lceil \Bigg(\frac{2l\, \| f\|_{H^s}^{2/{\gamma}}}{\varepsilon^{1/\gamma} \,\delta \| f\|_2^{2/{\gamma}}} \Bigg)^{1/\alpha}-1\Bigg\rceil,
\end{equation}
where $\gamma:=\min\{1,2s\}$, 
then \eqref{bubu_main} holds.}
\item[\textit{ii)}]{	Assume that the conditions in Theorem \ref{thm2} i) and ii) hold. For the wavelet case, let $a=\frac{5}{3}$ and $\delta=1$ \big(where $\delta$ corresponds to the radius of the spectral gap left by the wavelets $\{ g_{j}\}_{j\in \mathbb{Z}\backslash\{0\}}$\big). For the Weyl-Heisenberg case, let $a=\frac{3}{2}$ and $\delta=R$ \big(here, $\delta$ corresponds to the radius of the spectral gap left by the Weyl-Heisenberg filters $\{ g_{k}\}_{k\in \mathbb{Z}\backslash\{0\}}$\big). Furthermore, let $l>1$, $\eps \in (0,1)$, and  $f\in H^s(\R)\backslash\{ 0\}$, for $s>0$. If 
\begin{equation}\label{hihihs2}
N\geq \Bigg\lceil \log_a\Bigg(\frac{2l\, \| f\|_{H^s}^{2/{\gamma}}}{\varepsilon^{1/\gamma} \,\delta \| f\|_2^{2/{\gamma}}} \Bigg)\Bigg\rceil,
\end{equation}
where $\gamma:=\min\{1,2s\}$, 
then \eqref{bubu_main} holds.}
\end{itemize}
\end{corollary}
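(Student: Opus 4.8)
The plan is to reduce \eqref{bubu_main} to a single decay estimate on $W_{N+1}(f)$ and then invoke the energy upper bounds already at our disposal. Since both parts assume Parseval frames $A_n=B_n=1$, the energy decomposition \eqref{fdljslkkjjs}, applied with $N+1$ in place of $N$, gives $\sum_{n=0}^{N}|||\Phi^n_\Omega(f)|||^2=\|f\|_2^2-W_{N+1}(f)$. The upper bound in \eqref{bubu_main} is then immediate from $W_{N+1}(f)\geq 0$, while the lower bound $(1-\eps)$ is \emph{equivalent} to $W_{N+1}(f)\leq \eps\|f\|_2^2$. Hence the whole corollary rests on showing that the stated lower bounds on $N$ force $W_{N+1}(f)\leq \eps\|f\|_2^2$; the hypothesis $f\neq 0$ guarantees $\|f\|_2>0$, so all quotients are well defined.

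First I would feed the decay bounds into this criterion. For part i) I use \eqref{main_h:up1} with $B_\Omega^{N+1}=1$ at depth $N+1$, obtaining $W_{N+1}(f)\leq \int_{\Rd}|\widehat{f}(\omega)|^2\big(1-|\widehat{r_l}(\omega/T)|^2\big)\mathrm d\omega$ with $T:=(N+1)^\alpha\delta$; for part ii) I use \eqref{main_h:wave} (resp.\ \eqref{main_g:wave}) to get the same integral but with $T:=a^{N}\delta$, where $a=5/3,\ \delta=1$ (resp.\ $a=3/2,\ \delta=R$). It then remains to bound this integral by a quantity proportional to $\|f\|_{H^s}^2$ that decays in $T$.

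The key pointwise estimate is that, for $x\geq 0$, one has $1-(1-x)_+^{2l}\leq\min\{1,2lx\}$: the bound by $1$ is clear, and the bound by $2lx$ follows since $x\mapsto 2lx-1+(1-x)^{2l}$ vanishes at $x=0$ and is nondecreasing on $[0,1]$, while for $x\geq 1$ one has $2lx\geq 2l>1$ (recall $l>1$ in both statements). Writing $x=|\omega|/T$, so that $1-|\widehat{r_l}(\omega/T)|^2=1-(1-|\omega|/T)_+^{2l}\in[0,1]$, and interpolating the two bounds with exponent $\gamma=\min\{1,2s\}\in(0,1]$ yields $1-|\widehat{r_l}(\omega/T)|^2\leq(2l|\omega|/T)^{\gamma}$. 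Because $\gamma/2\leq s$ (with equality when $2s\leq 1$), we have $|\omega|^{\gamma}\leq(1+|\omega|^2)^{\gamma/2}\leq(1+|\omega|^2)^{s}$, whence $\int_{\Rd}|\widehat{f}(\omega)|^2|\omega|^{\gamma}\mathrm d\omega\leq\|f\|_{H^s}^2$. Combining these estimates gives $W_{N+1}(f)\leq(2l/T)^{\gamma}\|f\|_{H^s}^2$.

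Finally I would solve $(2l/T)^{\gamma}\|f\|_{H^s}^2\leq\eps\|f\|_2^2$ for $N$. Rearranging gives $T\geq 2l\,\eps^{-1/\gamma}\|f\|_{H^s}^{2/\gamma}\|f\|_2^{-2/\gamma}$; substituting $T=(N+1)^\alpha\delta$ and taking the ceiling reproduces exactly \eqref{hihihs}, while substituting $T=a^{N}\delta$ and taking the ceiling reproduces \eqref{hihihs2}. The only genuinely non-mechanical step is the interpolation producing the exponent $\gamma$ together with the verification $\gamma/2\leq s$; these handle the two smoothness regimes $2s\leq 1$ and $2s>1$ in a single stroke, and everything else is bookkeeping.
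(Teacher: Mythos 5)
Your proof is correct and follows the same overall skeleton as the paper's: reduce \eqref{bubu_main} to the single condition $W_{N+1}(f)\leq \eps\|f\|_2^2$ via the Parseval-frame energy decomposition, bound $W_{N+1}(f)$ by $(2l/T)^{\gamma}\|f\|_{H^s}^2$ with $T=(N+1)^{\alpha}\delta$ (resp.\ $T=a^{N}\delta$), and solve for $N$. Where you diverge is in how that Sobolev bound is obtained. The paper's proof of Corollary \ref{cor3} simply cites the estimates \eqref{here1}--\eqref{here2} established in Appendix \ref{Sob1Ap}; those start from the same pointwise inequality $1-(1-x)_{+}^{2l}\leq\min\{1,2lx\}$ (their \eqref{skfkhdjkwhejkhfhjf}) but then treat the regimes $s\in(0,1/2)$ and $s\geq 1/2$ separately, splitting the frequency integral at $|\omega|=T/(2l)$ and exploiting the monotonicity of $|\omega|^{1-2s}$ and $|\omega|^{-2s}$ on the two pieces. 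Your interpolation $\min\{1,y\}\leq y^{\gamma}$ for $\gamma\in(0,1]$, followed by $|\omega|^{\gamma}\leq(1+|\omega|^2)^{s}$ (valid since $\gamma/2\leq s$), collapses both regimes into one line and lands on exactly the same constant $(2l)^{\gamma}\delta^{-\gamma}\|f\|_{H^s}^2$, so the resulting thresholds \eqref{hihihs} and \eqref{hihihs2} come out identically. A second, smaller difference: you invoke the finite-depth identity \eqref{fdljslkkjjs} directly, so you never need $\lim_{N\to\infty}W_N(f)=0$ (Proposition \ref{prop2}), which the paper uses to write $\|f\|_2^2=\sum_{n=0}^{\infty}|||\Phi^n_\Omega(f)|||^2$ before truncating the sum. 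Both simplifications are sound; the paper's route has the advantage of reusing bounds it needs elsewhere, yours that of being self-contained and shorter.
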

\begin{proof}
See Appendix \ref{acas}. 
\end{proof}
As already mentioned in Section \ref{Sec:prop}, Sobolev spaces $H^s(\Rd)$ contain a wide range of practically relevant signal classes.  The results in Corollary \ref{cor3} therefore provide---for a wide variety of input signals---a  picture of how many layers are needed to have most of the input signal energy be contained in the feature vector.

The width of the networks considered throughout the paper is, in principle, infinite as the sets $\Lambda_n$ need to be countably infinite in order to guarantee that the frame property \eqref{PFP} is satisfied. For input signals that exhibit mild spectral decay, the number of ``operationally  significant nodes'' will, however, be finite in practice. For a treatment of this aspect as well as results on  depth-width tradeoffs, the interested reader is referred to  \cite{SPIEWiatowski}.

\appendices
\section{A feature extractor with a non-trivial kernel}\label{counterexmp}
We show, by way of example, that employing filters $\Psi_n$ which satisfy the frame property \eqref{PFP} alone does not guarantee a trivial null-set for the feature extractor $\Phi_\Omega$. Specifically, we construct a feature extractor $\Phi_\Omega$ based on filters satisfying \eqref{PFP} and a corresponding function $f\neq 0$ with $f\in \mathcal{N}(\Phi_\Omega)$.

Our example employs, in every network layer, filters $\Psi=\{ \chi\}\cup\{ g_k\}_{k\in \mathbb{Z}}$ that satisfy the Littlewood-Paley condition \eqref{eq:tight} with $A=B=1$,   and where $g_{0}$ is such that $\widehat{g_0}(\omega)=1$, for $\omega \in \overline{B_{1}(0)}$, and arbitrary else (of course, as long as the Littlewood-Paley condition \eqref{eq:tight} with $A=B=1$ is satisfied). We emphasize that no further restrictions are imposed on the filters $\{ \chi\}\cup\{ g_k\}_{k\in \mathbb{Z}}$, specifically $\chi$ need not be of low-pass nature and the filters $\{ g_k\}_{k\in \mathbb{Z}}$ may be structured (such as wavelets \cite[Appendix B]{Wiatowski_journal}) or unstructured (such as random filters \cite{Jarrett,hierachies}), as long as they satisfy the Littlewood-Paley condition \eqref{eq:tight} with $A=B=1$. Now, consider the input signal $f\in L^2(\Rd)$  according to $$\widehat{f}(\omega):=(1-|\omega|)^{l}_{+},\quad \omega \in \Rd,$$ with $l>\lfloor d/2 \rfloor+1$. Then $f\ast g_0=f,$ owing to $\text{supp}(\widehat{f}\,)= \overline{B_1(0)}$ and $\widehat{g_0}(\omega)=1$, for $\omega \in \overline{B_1(0)}$. Moreover, $\widehat{f}$  is a positive definite radial basis function  \cite[Theorem 6.20]{Wendland} and hence by \cite[Theorem 6.18]{Wendland} $f(x)\geq 0$,  $x \in \Rd$, which, in turn, implies $|f|=f$. This yields
 $$
 U[q^N_0]f= \big|\cdots \big||f\ast g_0|\ast g_0\big|\cdots\ast g_0\big|=f,\vspace{0.08cm}
  $$
for $q^N_{0}:=(0,0,\dots,0)\in \mathbb{Z}^N$ and $N\in \mathbb{N}$. Owing to the energy decomposition identity \eqref{eq:decomp}, together with $A_\Omega^N=B_\Omega^N=1$,  $N\in \mathbb{N}$, which, in turn, is by $A_n=B_n=1$,  $n\in \N$, we have
\begin{align*}
&\|f\|_2^2=\sum_{n=0}^{N-1}||| \Phi^n_\Omega(f)|||^2+W_N(f)\nonumber\\
=&\sum_{n=0}^{N-1}||| \Phi^n_\Omega(f)|||^2+\underbrace{\| U[q^N_{0}] f\|_2^2}_{=\,\|f\|_2^2}+\sum_{q\,\in\, \Z^N\backslash \{q_0^N\}}|| U[q] f||_2^2,\label{labum}
\end{align*}
for $N\in \mathbb{N}$. This implies 
\begin{equation}\label{kfjkfjjfkjss}
\sum_{n=0}^{N-1}||| \Phi^n_\Omega(f)|||^2+\sum_{q\,\in\, \Z^N\backslash \{q_0^N\}}|| U[q] f||_2^2=0.
\end{equation}
As both terms in \eqref{kfjkfjjfkjss} are positive, we can conclude that $\sum_{n=0}^{N-1}||| \Phi^n_\Omega(f)|||^2=0$,  $N\in \mathbb{N}$, and thus $$||| \Phi_\Omega(f)|||^2=\sum_{n=0}^\infty|||\Phi_\Omega^n(f)|||^2=0.$$ Since $||| \Phi_\Omega(f)|||^2=0$ implies $\Phi_\Omega(f)=0$, we have constructed a non-zero $f$, namely $$f(x)=\int_{\Rd}(1-|\omega|)^{l}_{+}e^{\,2\pi i \langle x,\,\omega\rangle}\mathrm d\omega,$$ that maps to the all-zeros feature vector, i.e., $f\in \mathcal{N}(\Phi_\Omega)$.

The point of this example is the following. Owing to the nature of $\widehat{g_0}(\omega)$ (namely, $\widehat{g_0}(\omega)=1$, for $\omega \in \overline{B_{1}(0)}$) and the Littlewood-Paley condition
$$ |\widehat{\chi}(\omega)|^2+\sum_{k \in \mathbb{Z}}|\widehat{g_{k}}(\omega)|^2=1,\hspace{0.25cm} \text{a.e. } \omega \in \Rd,$$
it follows that neither the output-generating filter $\chi$ nor any of the other filters $g_k$, $k\in \mathbb{Z}\backslash\{0\}$, can have spectral support in $\overline{B_1(0)}$. Consequently, the only non-zero contribution to the feature vector can  come from $$U[q^N_{0}]f=f,$$ which, however, thanks to $\text{supp}(\widehat{f}\,)= \overline{B_1(0)}$, is spectrally disjoint from the output-generating filter $\chi$. Therefore, $\Phi_\Omega(f)$ will be identically equal to $0$. Assumption \ref{ass} disallows this situation as it forces the filters $g_k$, $k\in \mathbb{Z}$, to be of high-pass nature which, in turn, implies that $\chi$ must  have low-pass characteristics. The punch-line of our general results on energy conservation, be it for finite $N$ or for $N\to\infty$, is that Assumption \ref{ass} in combination with the frame property  and  the modulus non-linearity prohibit a non-trivial null-set \emph{in general}.

\vspace{-0.1cm}
\section{Sobolev smoothness of cartoon functions}\label{sec:proofcart}
Cartoon functions, introduced in \cite{Cartoon}, satisfy mild decay properties and are piecewise continuously differentiable apart from curved discontinuities along smooth hypersurfaces. This function class has been widely adopted in the mathematical signal processing literature \cite{grohs_wiatowski,wiatowski2016discrete,Grohs_alpha,grohs2014parabolic,ShearletsIntro} as a standard model for natural images such as, e.g., images of handwritten digits \cite{MNIST} (see Figure  \ref{fig:data}). We will work with the following---relative to the definition in \cite{Cartoon}---slightly modified version of cartoon functions. 

\begin{definition}\label{def:cartoon}
The function $f:\R^d \to \mathbb{C}$ is referred to as a cartoon function if it can be written as $f = f_1 + \mathds{1}_Df_2$, where $D\subseteq \R^d$ is a compact domain whose boundary $\partial D$ is a compact topologically embedded $C^2$-hypersurface of $\Rd$ without boundary\footnote{We refer the reader to \cite[Chapter 0]{RiemannianBook} for a review on differentiable manifolds.}, and $f_i\in H^{1/2}(\Rd)\cap C^1(\R^d,\mathbb{C})$, $i=1,2$, satisfy the decay condition 
\begin{align*}
|\nabla f_i(x)|&\leq C \langle x\rangle^{-d},\quad i=1,2,
\end{align*}
for some $C>0$ (not depending on $f_1$,$f_2$). Furthermore, we denote by 
\begin{align*}
&\mathcal{C}^{K}_{\mathrm{CART}}:=\{f_1 + \mathds{1}_Df_2 \ | \  f_i\in H^{1/2}(\Rd)\cap C^1(\R^d,\mathbb{C}), \\ 
 &|\nabla f_i(x)|\le K \langle x\rangle^{-d},\ \emph{vol}^{d-1}(\partial D)\leq K, \ \|f_2\|_\infty \leq K\}
\end{align*} 
the class of cartoon functions of  ``size'' $K>0$.
\end{definition}
Even though cartoon functions are in general discontinuous, they still admit Sobolev smoothness. The following result formalizes this statement. 

\begin{lemma}
\label{remark_sob}
Let $K>0$. Then, $\mathcal{C}^{K}_{\mathrm{CART}} \subseteq H^s(\Rd)$, for all $s\in (0,1/2)$. 
\end{lemma}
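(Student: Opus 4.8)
The plan is to combine linearity of $H^s(\Rd)$ with the Sobolev--Slobodeckij (Gagliardo) description of fractional smoothness. Write $f = f_1 + \mathds{1}_D f_2 \in \mathcal{C}^K_{\mathrm{CART}}$. First I would observe that, since $(1+|\omega|^2)^s \leq (1+|\omega|^2)^{1/2}$ for $s \leq 1/2$, the Fourier definition of $\|\cdot\|_{H^s}$ yields the nesting $H^{1/2}(\Rd) \subseteq H^s(\Rd)$; hence $f_1 \in H^{1/2}(\Rd) \subseteq H^s(\Rd)$ and it suffices to prove $g := \mathds{1}_D f_2 \in H^s(\Rd)$. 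For $0 < s < 1$ I would use the equivalent norm (see, e.g., \cite{adams1975sobolev}) $\|g\|_{H^s}^2 \asymp \|g\|_2^2 + [g]_{H^s}^2$, where $[g]_{H^s}^2 := \int_{\Rd}\int_{\Rd}|g(x)-g(y)|^2|x-y|^{-(d+2s)}\dx\,\dy$. The term $\|g\|_2$ is finite since $|g| \leq |f_2|$ and $f_2 \in H^{1/2}(\Rd) \subseteq L^2(\Rd)$, so the whole problem reduces to bounding the Gagliardo seminorm $[g]_{H^s}$.

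Next I would disentangle the two sources of roughness---the smooth factor $f_2$ and the jump of $\mathds{1}_D$---through the splitting $\mathds{1}_D(x)f_2(x) - \mathds{1}_D(y)f_2(y) = \mathds{1}_D(x)\big(f_2(x)-f_2(y)\big) + \big(\mathds{1}_D(x)-\mathds{1}_D(y)\big)f_2(y)$. Using $|a+b|^2 \leq 2|a|^2+2|b|^2$, the first resulting double integral is at most $2[f_2]_{H^s}^2$, which is finite because $f_2 \in H^{1/2}(\Rd) \subseteq H^s(\Rd)$. For the second, I would invoke $\|f_2\|_\infty \leq K$ and the identity $|\mathds{1}_D(x)-\mathds{1}_D(y)|^2 = |\mathds{1}_D(x)-\mathds{1}_D(y)|$ to bound it by $2K^2[\mathds{1}_D]_{H^s}^2$. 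Everything therefore reduces to the single statement $\mathds{1}_D \in H^s(\Rd)$ for $s < 1/2$, i.e. $[\mathds{1}_D]_{H^s}^2 < \infty$.

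To handle $[\mathds{1}_D]_{H^s}^2 = 2\int_D\int_{\Rd\setminus D}|x-y|^{-(d+2s)}\dy\,\dx$, I would fix $x \in D$ and use that every $y \notin D$ obeys $|x-y| \geq \dist(x,\partial D)=:r$, so the inner integral is bounded by $\int_{|x-y|\geq r}|x-y|^{-(d+2s)}\dy = \frac{\omega_{d-1}}{2s}\,r^{-2s}$, with $\omega_{d-1}$ the surface area of the unit sphere (convergence at infinity uses $s>0$). This reduces matters to proving $\int_D \dist(x,\partial D)^{-2s}\dx < \infty$, which is exactly where the threshold $s<1/2$ and the geometry of $D$ come in.

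The hard part will be this last integral, the only step that uses the regularity of $\partial D$. Since $\partial D$ is a compact $C^2$-hypersurface without boundary, it has positive reach: there is $t_0>0$ so that the inward normal map $(\xi,t)\mapsto \xi + t\,\nu(\xi)$, $\xi\in\partial D$, $t\in(0,t_0)$, is a $C^1$-diffeomorphism onto the collar $\{x\in D:\dist(x,\partial D)<t_0\}$ with Jacobian bounded in terms of the (uniformly bounded) principal curvatures of $\partial D$. Changing variables---equivalently, applying the coarea formula---gives $\int_{\{x\in D\,:\,\dist(x,\partial D)<t_0\}}\dist(x,\partial D)^{-2s}\dx \lesssim \mathrm{vol}^{d-1}(\partial D)\int_0^{t_0}t^{-2s}\,\dt$, and the $t$-integral converges precisely when $2s<1$. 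On the complement $\{x\in D:\dist(x,\partial D)\geq t_0\}$ one has $\dist(x,\partial D)^{-2s}\leq t_0^{-2s}$ and $D$ is bounded, so that contribution is trivially finite. Adding the two regions establishes $\int_D\dist(x,\partial D)^{-2s}\dx<\infty$ for all $s\in(0,1/2)$, which closes the chain of estimates; the bounds $\mathrm{vol}^{d-1}(\partial D)\leq K$ and $\|f_2\|_\infty\leq K$ additionally make the resulting $H^s$-bound uniform over $\mathcal{C}^K_{\mathrm{CART}}$.
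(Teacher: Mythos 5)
Your proof is correct, and it follows the paper's overall skeleton --- reduce to $\mathds{1}_Df_2$ via $H^{1/2}\subseteq H^s$, split the product difference as $\mathds{1}_D(x)(f_2(x)-f_2(y))+(\mathds{1}_D(x)-\mathds{1}_D(y))f_2(y)$, use $\|f_2\|_\infty\le K$, and thereby reduce everything to $\mathds{1}_D\in H^s(\Rd)$ --- but it handles that last, geometric step by a genuinely different route. The paper substitutes $t=x-y$, integrates in $x$ first to obtain $\int_{\Rd}|t|^{-(2s+d)}\,\mathrm{vol}^d\big(D\Delta(D+t)\big)\,\mathrm dt$, and then bounds the symmetric difference by the volume of the two-sided tube $\partial D+B_{|t|}(0)$, which it controls via Weyl's tube formula ($\mathrm{vol}^d\le C_{\partial D}|t|$ for $|t|\le 1$); the crossover between the bounds $2\,\mathrm{vol}^d(D)$ and $C_{\partial D}|t|$ produces exactly the window $0<s<1/2$. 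You instead integrate in $y$ first, reducing the seminorm to $\int_D \mathrm{dist}(x,\partial D)^{-2s}\,\mathrm dx$ (using $s>0$ for convergence at infinity), and then prove integrability of the negative power of the distance function via the normal collar and the coarea formula, where $\int_0^{t_0}t^{-2s}\,\mathrm dt<\infty$ again encodes $s<1/2$. Both arguments ultimately rest on the same geometric fact --- a compact $C^2$-hypersurface admits a tubular neighborhood of volume $O(r)$ --- so neither is more general, but yours avoids invoking Weyl's formula and the curvature integrals $k_{2i}(M)$ in favor of the more elementary one-sided collar, at the cost of needing the normal map to be a diffeomorphism (positive reach) rather than just a volume bound on the tube; the paper's version localizes all the differential geometry into a single reusable lemma (its Lemma \ref{lemma_tube}). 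Your closing remark that the resulting bound is uniform over $\mathcal{C}^K_{\mathrm{CART}}$ is a correct bonus not claimed in the statement.
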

\begin{proof}
Let $(f_1 + \mathds{1}_Df_2) \in \mathcal{C}^{K}_{\mathrm{CART}}$. We first establish $\mathds{1}_D\in H^{s}(\Rd)$, for all $s\in (0,1/2)$. To this end, we define the Sobolev-Slobodeckij semi-norm \cite[Section 2.1.2]{runst1996sobolev}
\begin{equation*}\label{eq:Soboldef}
	|f|_{H^s} :=
	\Big(\int_{\mathbb{R}^d}\int_{\mathbb{R}^d}
	\frac{|f(x)-f(y)|^2}{|x-y|^{2s+d}}\mathrm dx\, \mathrm dy \Big)^{1/s},
\end{equation*}
and note that, thanks to \cite[Section 2.1.2]{runst1996sobolev}, $\mathds{1}_D\in H^s(\Rd)$ if $|\mathds{1}_D|_{H^s}<\infty$. We have 
\begin{align*}
	|\mathds{1}_D|_{H^s}^s &=\int_{\mathbb{R}^d}\int_{\mathbb{R}^d} \frac{|\mathds{1}_D(x)-\mathds{1}_D(y)|^2}{|x-y|^{2s+d}}\mathrm dx \,\mathrm dy \\
	&=\int_{\mathbb{R}^d}\frac{1}{|t|^{2s+d}}\int_{\mathbb{R}^d} |\mathds{1}_D(x)-\mathds{1}_D(x-t)|^2\mathrm dx \, \mathrm dt,
\end{align*}
where we employed the change of variables $t=x-y$. Next, we note that, for fixed $t\in \Rd$, the function $$h_t(x):=|\mathds{1}_D(x) - \mathds{1}_D(x-t)|^2$$ satisfies $h_t(x)=1$, for $x\in S_t$, where
\begin{align}
S_t:=&\,\{ x\in \R^d \, |\, x\in D \  \text{ and }\  x-t\notin D \} \nonumber\\
 \cup& \,\{ x\in \R^d \, |\, x\notin D\  \text{ and }\  x-t\in D \}\nonumber\\
 =&D\Delta (D+t)\label{djhdskjffhkjh},
\end{align}
and $h_t(x)=0$, for $x\in \R^d\backslash S_t$. It follows from \eqref{djhdskjffhkjh}  that 
\begin{equation}\label{fhfjhfjhjfss}
\mbox{vol}^{d}(S_t) \leq 2\,\mbox{vol}^{d}(D), \quad \forall \, t\in \Rd.
\end{equation}
Moreover, owing to  $S_t\subseteq \big(\partial{D} + B_{|t|}(0)\big)$, where $(\partial{D} + B_{|t|}(0))$ is a tube of radius $|t|$ around the boundary $\partial{D}$ of $D$ (see Figure \ref{fig:set}), and Lemma \ref{lemma_tube}, stated below, there exists a  constant $C_{\partial D}>0$  such that 
\begin{equation}\label{fhfjhfjhjfss2}
\mbox{vol}^{d}(S_t)\leq \mbox{vol}^{d}(\partial{D} + B_{|t|}(0)) \leq C_{\partial D}|t|,
\end{equation}
for all $t\in \Rd$ with $|t| \leq 1$.  Next, fix $R$ such that $0<R<1$. Then, 
\begin{align}
	&|\mathds{1}_D|_{H^s}^s = \int_{\mathbb{R}^d}\frac{1}{|t|^{2s+d}}\int_{\mathbb{R}^d} |\mathds{1}_D(x)-\mathds{1}_D(x-t)|^2\mathrm dx \, \mathrm dt \nonumber\\
	&= \int_{\mathbb{R}^d}\frac{1}{|t|^{2s+d}}\int_{\mathbb{R}^d} h_t(x)\mathrm dx \, \mathrm dt \nonumber\\
	&= \int_{\mathbb{R}^d}\frac{1}{|t|^{2s+d}}\int_{S_t} 1\mathrm dx \, \mathrm dt = \int_{\mathbb{R}^d}\frac{\mbox{vol}^d(S_t)}{|t|^{2s+d}} \mathrm dt \nonumber\\
	&\leq \int_{\Rd\backslash B_R(0)}\frac{2\,\mbox{vol}^{d}(D)}{|t|^{2s+d}} \mathrm dt
	+\int_{B_R(0)}\frac{C_{\partial D}}{|t|^{2s+d-1}} \mathrm dt\label{kjhsjfjkdhfdjhjhjj}\\
	&=2\,\mbox{vol}^{d}(D)\,\mbox{vol}^{d-1}(\partial B_1(0)) \underbrace{\int_{R}^\infty r^{-(2s+1)}\mathrm dr}_{=:I_1}\nonumber\\
	&+C_{\partial D}\,\mbox{vol}^{d-1}(\partial B_1(0))\underbrace{\int_{0}^R r^{-2s}\mathrm dr}_{=:I_2}\label{Sobolcharest},
\end{align}
where in \eqref{kjhsjfjkdhfdjhjhjj} we employed \eqref{fhfjhfjhjfss} and \eqref{fhfjhfjhjfss2}, and in the last step we introduced polar coordinates. The  integral $I_1$ is finite for all $s>0$, while $I_2$ is finite for all $s<1/2$. Moreover, $\mbox{vol}^{d}(D)=\int_{D}1\mathrm dx$ is finite owing to $D$ being compact (and thus bounded). We can therefore conclude that \eqref{Sobolcharest} is finite for $s\in (0,1/2)$, and hence $\mathds{1}_D\in H^s(\Rd)$, for  $s \in (0,1/2)$. To see that $(f_1 + \mathds{1}_Df_2) \in H^s(\Rd)$, for $s\in (0,1/2)$, we first note that \begin{align}
|f_1 + \mathds{1}_Df_2|_{H^s}&\leq |f_1|_{H^s} + |\mathds{1}_Df_2|_{H^s},\label{dfjhdjhjdhgfj}
\end{align}
which is thanks to the sub-additivity of the semi-norm $|\cdot|_{H^s}$. Now, the first term on the RHS of \eqref{dfjhdjhjdhgfj} is finite owing to $f_1\in H^{1/2}(\Rd)\subseteq H^s(\Rd)$, for all $s\in (0,1/2)$. For the second term on the RHS, we start by noting that
\begin{align}
&|\mathds{1}_Df_2|_{H^s}^s = \nonumber\\
&\Big(\int_{\mathbb{R}^d}\int_{\mathbb{R}^d}
	\frac{|(\mathds{1}_Df_2)(x)-(\mathds{1}_Df_2)(y)|^2}{|x-y|^{2s+d}}\mathrm dx\, \mathrm dy \Big)\label{shfdksfhjhkss}
\end{align}
and
\begin{align}
&|(\mathds{1}_Df_2)(x)-(\mathds{1}_Df_2)(y)|^2\nonumber\\
&= |(\mathds{1}_D(x)- \mathds{1}_D(y))f_2(x) + (f_2(x) - f_2(y))\mathds{1}_D(y)|^2\nonumber\\
&\leq 2|(\mathds{1}_D(x)- \mathds{1}_D(y))|^2 |f_2(x)|^2 \label{dkjkjdkjd}\\
&+ 2|(f_2(x) - f_2(y))|^2|\mathds{1}_D(y)|^2\label{dkjkjdkjddldkldkdl},
\end{align}
where  \eqref{dkjkjdkjd} and \eqref{dkjkjdkjddldkldkdl} are thanks to $|a+b|^2\leq 2|a|^2 + 2|b|^2$,  for $a,b \in \mathbb{C}$. Substituting \eqref{dkjkjdkjd} and \eqref{dkjkjdkjddldkldkdl} into \eqref{shfdksfhjhkss} and noting that $|f_2(x)|^2\leq \|f_2 \|^2_\infty\leq K^2$, $x\in \Rd$, which is by assumption, and $\mathds{1}_D(y)\leq 1$, $y\in \Rd$, implies
\begin{align}
|\mathds{1}_Df_2|_{H^s}^s \leq 2K^2 |\mathds{1}_D|_{H^s}^s + 2|f_2|_{H^s}^s<\infty,
\end{align}
where in the last step we used $\mathds{1}_D\in H^s(\Rd)$, established above, and $f_2\in H^{1/2}(\Rd)\subseteq H^s(\Rd)$, both for all $s\in (0,1/2)$. This completes the proof.
\end{proof}
It remains to establish the second  inequality in \eqref{fhfjhfjhjfss2}. 

\tikzset{
    circ/.pic={ 
    \fill (0,0) circle (2pt);
    }
}
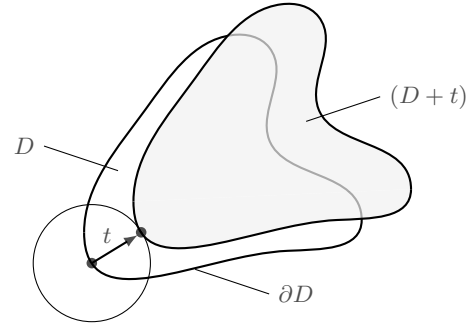
\begin{figure}
\begin{center}
\vspace{0.05cm}
\begin{tikzpicture}
	\begin{scope}[scale=.82]
	\pgfsetfillopacity{.7}

\coordinate (a0) at (0.125,-.5);
\coordinate (a1) at (0,0);
\coordinate (a2) at (3.3,-0.3);
\coordinate (a3) at (4.5,0.2);
\coordinate (a4) at (3,1.5);
\coordinate (a5) at (3,3);
\coordinate (a6) at (2.4,3.2);
\coordinate (a7) at (1,2);

\coordinate (b0) at (0.925,0);
\coordinate (b1) at (0.8,0.5);
\coordinate (b2) at (4.1,.2);
\coordinate (b3) at (5.3,0.7);
\coordinate (b4) at (3.8,2);
\coordinate (b5) at (3.8,3.5);
\coordinate (b6) at (3.2,3.7);
\coordinate (b7) at (1.8,2.5);

\filldraw[fill=black!0!white, draw=black,use Hobby shortcut,thick]([out angle=-90]a1)..(a2)..([in angle=-90]a3); 
\filldraw[fill=black!0!white, draw=black,use Hobby shortcut,thick] ([out angle=90]a3)..(a4)..(a5)..(a6)..(a7)..([in angle=90]a1);

\filldraw[fill=black!5!white, draw=black,use Hobby shortcut,thick]([out angle=-90]b1)..(b2)..([in angle=-90]b3); 
\filldraw[fill=black!5!white, draw=black,use Hobby shortcut,thick] ([out angle=90]b3)..(b4)..(b5)..(b6)..(b7)..([in angle=90]b1);
\draw (3.6cm,1.8cm) -- (4.8 cm,2.2cm) node[anchor= west] {\small$(D+t)$};
\draw (.55 cm,1.cm)--(-.65cm,1.4cm)  node[anchor= east] {\small$D$};
\draw (1.8 cm,-.6cm) -- (3cm,-1.cm) node[anchor= west] {\small$\partial D$};

\draw [->,>=latex,thick] (0.125,-.5) -- (0.87,-.045);

\draw (.15,-.05) node[anchor= west] {\small$t$};

\draw (a0) circle (.95cm);

\foreach \i in  {0} {\draw (a\i) pic {circ};} 
\foreach \i in  {0} {\draw (b\i) pic {circ};} 

	\end{scope}
\end{tikzpicture}
\end{center}
\vspace{0.3cm}
\caption{Illustration in dimension $d=2$. The set $(D+t)$ (grey) is obtained by translating the set $D$ (white) by $t\in \R^2$. The symmetric difference $D\Delta (D+t)$ is contained in $(\partial{D} + B_{|t|}(0))$, the tube  of radius $|t|$ around the boundary $\partial{D}$ of $D$.}
\label{fig:set}
\end{figure}
\begin{lemma}\label{lemma_tube}
Let $M$ be a compact topologically embedded $C^2$-hypersurface of $\Rd$ without boundary and let 
$$
T(M,r):= \big\{ x \in \Rd \ \big| \ \inf_{y\in M}| x - y| \leq r\big\}, \quad r>0,
$$ 
be the tube of radius $r$ around $M$. Then, there exists a  constant $C_M>0$ (that does not depend on $r$) such that for all $r \leq 1$ it holds that
 \begin{equation}\label{dssndjkfkjs}
\mbox{vol}^{d}(T(M,r))\leq C_Mr.
\end{equation}
\end{lemma}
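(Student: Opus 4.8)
The plan is to realize the tube, for small radii, as the diffeomorphic image of a normal disk bundle over $M$ on which the volume distortion is uniformly controlled, and to treat the remaining radii by a crude monotonicity bound. First I would invoke the tubular neighborhood theorem: since $M$ is a compact $C^2$-hypersurface of $\Rd$ without boundary, it carries a rank-one normal bundle and a $C^1$ unit normal field $\nu$ (locally---orientability is not needed, see below), and there exists a threshold $\rho>0$ such that the normal map $\Phi(y,t):=y+t\,\nu(y)$ is a $C^1$-diffeomorphism from $\{(y,t)\,|\,y\in M,\ |t|\le\rho\}$ onto its image, with every point within distance $\rho$ of $M$ possessing a unique nearest point on $M$ lying on the associated normal segment. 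In particular $T(M,r)\subseteq\Phi(\{(y,t)\,|\,y\in M,\ |t|\le r\})$ for all $0<r\le\rho$.

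Next I would bound the Jacobian of $\Phi$. In a local $C^2$-parametrization of $M$, the Jacobian at $(y,t)$ factorizes as $\prod_{i=1}^{d-1}(1-t\,\kappa_i(y))$, where the $\kappa_i(y)$ are the principal curvatures of $M$ at $y$. Since $M$ is $C^2$ and compact, the Gauss map is $C^1$ and the $\kappa_i$ are continuous, hence uniformly bounded by some $\kappa_{\max}$; therefore $|J_\Phi(y,t)|\le(1+\rho\,\kappa_{\max})^{d-1}=:C_1$ for all $y\in M$ and $|t|\le\rho$. Applying the change-of-variables formula and Fubini then yields, for $0<r\le\rho$, the estimate $\mbox{vol}^{d}(T(M,r))\le\int_{M}\int_{-r}^{r}|J_\Phi(y,t)|\,\mathrm dt\,\mathrm d\sigma(y)\le 2C_1\,\mbox{vol}^{d-1}(M)\,r$, where $\mathrm d\sigma$ is the surface measure on $M$ and $\mbox{vol}^{d-1}(M)<\infty$ by compactness. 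For the remaining range $\rho<r\le1$, monotonicity of $r\mapsto T(M,r)$ gives $\mbox{vol}^{d}(T(M,r))\le\mbox{vol}^{d}(T(M,1))\le\rho^{-1}\mbox{vol}^{d}(T(M,1))\,r$, the middle quantity being finite by compactness of $M$. Setting $C_M:=\max\{2C_1\,\mbox{vol}^{d-1}(M),\ \rho^{-1}\mbox{vol}^{d}(T(M,1))\}$ would then establish \eqref{dssndjkfkjs} for all $r\le1$.

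The hard part will be the first step, namely rigorously producing $\rho$ together with the diffeomorphism property, which amounts to showing that the nearest-point projection onto $M$ is single-valued and $C^1$ on $T(M,\rho)$. This is precisely where compactness and $C^2$-regularity are indispensable: $C^2$ furnishes a $C^1$ Gauss map and thus continuous---hence bounded---principal curvatures, and these underlie both the local invertibility of $\Phi$ (the factor $\prod_{i=1}^{d-1}(1-t\,\kappa_i(y))$ stays positive for $|t|$ small) and the Jacobian bound. If $M$ happens to be non-orientable, no global $\nu$ exists, but one circumvents this by covering $M$ with finitely many patches (using compactness) on which a unit normal is defined, or equivalently by performing the construction on the normal line bundle; since the estimate is local in $y$, the conclusion is unaffected. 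I note that \eqref{dssndjkfkjs} is nothing but the leading-order part of Weyl's tube formula, so only the linear-in-$r$ upper bound---and not the full polynomial expansion---needs to be extracted.
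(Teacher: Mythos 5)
Your proof is correct, but it takes a different route from the paper's. The paper invokes Weyl's tube formula as a black box (Gray, \emph{Tubes}, Theorem 8.4(i)): for $r\le\kappa^{-1}$ the volume of the tube equals the exact polynomial $\sum_{i=0}^{\lfloor (d-1)/2\rfloor}\frac{2r^{2i+1}k_{2i}(M)}{\prod_{j=0}^i(1+2j)}$ in odd powers of $r$, whose coefficients $k_{2i}(M)$ are integrated curvatures that are finite by compactness and $C^2$-regularity; since $r^{2i+1}\le r$ for $r\le 1$, the linear bound follows with $C_M$ read off from the coefficients. You instead reprove only the leading-order part of that formula from scratch: tubular neighborhood theorem, the Jacobian factorization $\prod_{i=1}^{d-1}(1-t\,\kappa_i(y))$ of the normal map, a uniform bound $(1+\rho\,\kappa_{\max})^{d-1}$ on it, and a change of variables giving $2C_1\,\mbox{vol}^{d-1}(M)\,r$. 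Both arguments handle the residual range of radii (beyond $\kappa^{-1}$, respectively beyond the tubular-neighborhood threshold $\rho$) by the same crude monotonicity trick, bounding the tube by a fixed ball and absorbing the loss into the constant. What your version buys is self-containedness and economy: you never need the exact expansion or the higher curvature integrals, only an upper bound on the Jacobian, and in fact injectivity of the normal map is not even required for an upper bound, since the area formula with multiplicity already gives $\mbox{vol}^d(\Phi(M\times[-r,r]))\le\int_M\int_{-r}^{r}|J_\Phi|$. What the paper's citation buys is brevity and the avoidance of the tubular-neighborhood and Jacobian bookkeeping. Your remarks on orientability are harmless but not strictly needed, as a compact embedded hypersurface without boundary in $\Rd$ is orientable; the local-patch workaround is nonetheless valid.
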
 
\begin{proof}The proof is based on Weyl's tube formula \cite{WeylTubes}.
	Let 
	$$\kappa:= \max_{i \in \{1,\dots,d-1\}} \kappa_i,$$
	where $\kappa_i$ is the $i$-th principal curvature of the hypersurface $M$ (see \cite[Section 3.1]{TubeBook} for a formal definition). 
 It follows from \cite[Theorem 8.4 (i)]{TubeBook} that
$$
\mbox{vol}^{d}(T(M,r)) = \sum_{i=0}^{\lfloor \frac{d-1}{2} \rfloor} \frac{2r^{2i+1}k_{2i}(M)}{\prod_{j=0}^i(1+2j)},
$$
for all $r \leq \kappa^{-1}$, where $k_{2i}(M) = \int_{M} H_{2i}(x)\mathrm dx$, $i \in \{0, \dots, \lfloor \frac{d-1}{2} \rfloor\}$, with $H_{2i}$ denoting the so-called $(2i)$-th  curvature of $M$, see \cite[Section 4.1]{TubeBook} for a formal definition. Now, thanks to $M$ being a $C^2$-hypersurface, we have that $H_{2i}$, $i \in \{0, \dots, \lfloor \frac{d-1}{2} \rfloor\}$, is bounded (see \cite[Section 4.1]{TubeBook}), which together with $M$ compact (and thus bounded) implies  $|k_{2i}(M)|<\infty$, for all $i \in \{0, \dots, \lfloor \frac{d-1}{2} \rfloor\}$. Moreover, by definition, $k_{2i}(M)$, $i \in \{0, \dots, \lfloor \frac{d-1}{2} \rfloor\}$, is independent of the tube radius $r$. Therefore, setting 
$$C_M:=\Big(\Big\lfloor \frac{d-1}{2} \Big\rfloor + 1\Big)\cdot \max_{i}\frac{2|k_{2i}(M)|}{\prod_{j=0}^i(1+2j)}$$ 
 establishes \eqref{dssndjkfkjs} for $0<r\leq \min\{1,\kappa^{-1}\}$. It remains to prove \eqref{dssndjkfkjs} for  $\min\{1,\kappa^{-1}\}<r\le 1$.
 Let $$R^\ast:=\inf\{R>0:\ M\subseteq B_R(0)\}$$ and $D_{R^\ast}:=\mbox{vol}^d(B_{R^\ast+1}(0))$. Since
 $$
	 \mbox{vol}^d(T(M,r))\le D_{R^\ast},\quad \forall \, 0<r\le 1,
 $$
it follows that
 $$
	 \mbox{vol}^d(T(M,r))< D_{R^\ast}\cdot \,\max\{1,\kappa\}\cdot r 
$$
for all $\min\{1,\kappa^{-1}\}<r\le 1$, which establishes \eqref{dssndjkfkjs} for $\min\{1,\kappa^{-1}\}<r\le 1$ and thereby concludes the proof. 
\end{proof}

\section{Proof of statement  i) in Theorem \ref{thm3}}\label{app:prop2}
We start by establishing \eqref{main_h:up1} with  $\alpha=\log_2(\sqrt{d/(d-1/2)})$, for  $d\geq1$. Then, we sharpen our result in the $1$-D case by proving that \eqref{main_h:up1} holds for $d=1$ with $\alpha=1$. This leads to a significant improvement, in the $1$-D case, of the decay exponent from $\log_2(\sqrt{d/(d-1/2)})=\frac{1}{2}$ to $1$. The idea for the proof of \eqref{main_h:up1} for  $\alpha=\log_2(\sqrt{d/(d-1/2)})$,   $d\geq1$, is to establish that\footnote{We prove the more general result \eqref{main_h:up} for technical reasons, concretely in order to be able to argue by induction over path lengths with flexible starting index $n$.}
\begin{align}
&\sum_{q\,\in\, \Lambda_n \times\Lambda_{n+1}\times\cdots\times \Lambda_{n+N-1}} \|U[q]f\|_2^2\nonumber\\
&\leq 
C_{n}^{n+N-1} \int_{\Rd}\big|\widehat{f}(\omega)\big|^2\Big(1-\Big|\widehat{r_l}\Big(\frac{\omega}{N^{\alpha}\delta}\Big)\Big|^2\Big)\mathrm d\omega, \label{main_h:up}
\end{align}
for  $N\in \mathbb{N}$, where 
$$
C_{n}^{n+N-1}:=\prod_{k=n}^{n+N-1}\max\{1,B_k \}.
$$
Setting $n=1$ in  \eqref{main_h:up} and noting that $C_{1}^N=B_\Omega^N$ yields the desired result \eqref{main_h:up1}. We proceed by induction over the path length $\ell(q):=N$, for $q=(\lambda_{n},\lambda_{n+1},$\mydots$, \lambda_{n+N-1})\in \Lambda_n\times \Lambda_{n+1}\times \dots \times \Lambda_{n+N-1}$. Starting with the base case $N=1$, we have 
\begin{align}
&\sum_{q\,\in\, \Lambda_n}\|U[q]f\|^2_2=\sum_{\lambda_n\in \Lambda_n}\|f\ast g_{\lambda_n}\|^2_2\nonumber\\
&=\int_{\Rd}\sum_{\lambda_n\in \Lambda_n}|\widehat{g_{\lambda_n}}(\omega)|^2|\widehat{f}(\omega)|^2\mathrm d\omega\label{y1}\\
&\leq B_n\int_{\Rd\backslash B_{\delta}(0)}|\widehat{f}(\omega)|^2\mathrm d\omega\label{y2}\\
&\leq \underbrace{\max\{1,B_n\}}_{=C_{n}^{^n}} \int_{\Rd}\big|\widehat{f}(\omega)\big|^2\Big(1-\Big|\widehat{r_l}\Big(\frac{\omega}{\delta}\Big)\Big|^2\Big)\mathrm d\omega\label{y3}, 
\end{align}
for  $N\in \mathbb{N}$, where \eqref{y1} is by  Parseval's formula, \eqref{y2} is thanks to \eqref{eq:ass2} and \eqref{eq:tight}, and \eqref{y3} is due to $\text{supp}(\widehat{r_l})\subseteq B_1(0)$ and $0\leq \widehat{r_l}(\omega)\leq 1$, for $\omega\in \Rd$. The inductive step is established as follows.  Let $N>1$  and suppose that \eqref{main_h:up} holds for all paths $q$ of length $\ell(q)=N-1$, i.e., 
\begin{align}
&\sum_{q\,\in\, \Lambda_n \times\Lambda_{n+1}\times\cdots\times \Lambda_{n+N-2}} \|U[q]f\|_2^2\nonumber\\
&\leq 
C_{n}^{n+N-2} \int_{\Rd}\big|\widehat{f}(\omega)\big|^2\Big(1-\Big|\widehat{r_l}\Big(\frac{\omega}{(N-1)^{\alpha}\delta}\Big)\Big|^2\Big)\mathrm d\omega,\label{indii}
\end{align}
for  $n\in \mathbb{N}$. We start by noting that  every path $\tilde{q} \in \Lambda_n\times\Lambda_{n+1}\times\text{\mydots}\times\Lambda_{n+N-1}$ 
of length $\ell(\tilde{q})=N$,  with arbitrary starting index $n$, can be decomposed into a path $q \in \Lambda_{n+1}\times\text{\mydots}\times\Lambda_{n+N-1}$ of length $\ell(q)=N-1$ and an index $\lambda_{n} \in \Lambda_{n}$ according to $\tilde{q}=(\lambda_{n},q)$. Thanks to \eqref{aaaab} we have  $$U[\tilde{q}]=U[(\lambda_{n},q)]=U[q]U_n[\lambda_n],$$ which yields 
\begin{align}
&\sum_{q\,\in\, \Lambda_n \times\Lambda_{n+1}\times\cdots\times \Lambda_{n+N-1}} \|U[q]f\|_2^2\nonumber\\
&=\sum_{\lambda_n\in\Lambda_n} \sum_{q\,\in\, \Lambda_{n+1}\times \cdots\times \Lambda_{n+N-1}}\|U[q]\big(U_n[\lambda_n]f\big)\|_2^2\label{yf1455}, 
\end{align}
for  $n\in \mathbb{N}$. We proceed by examining the inner sum on the RHS of \eqref{yf1455}. Invoking the induction hypothesis \eqref{indii} with $n$ replaced by $(n+1)$ and employing Parseval's formula, we get 
\begin{align}
&\hspace{-0.2cm}\sum_{q\,\in\, \Lambda_{n+1}\times\cdots\times \Lambda_{n+N-1}}\|U[q]\big(U_n[\lambda_n]f\big)\|_2^2\nonumber\\
&\hspace{-0.2cm} \leq C_{n+1}^{n+N-1}\int_{\Rd}\big|\reallywidehat{U_n[\lambda_n]f}(\omega)\big|^2\Big(1-\Big|\widehat{r_l}\Big(\frac{\omega}{(N-1)^{\alpha}\delta}\Big)\Big|^2\Big)\mathrm d\omega\nonumber\\
&\hspace{-0.2cm}=C_{n+1}^{n+N-1}\big( \| U_n[\lambda_n]f\|_2^2-\|(U_n[\lambda_n]f)\ast r_{l,N-1,\alpha,\delta}\|_2^2\big)\nonumber\\
&\hspace{-0.2cm}=C_{n+1}^{n+N-1}\big( \| f\ast g_{\lambda_n}\|_2^2-\||f\ast g_{\lambda_n}|\ast r_{l,N-1,\alpha,\delta}\|_2^2\big)\label{eq:f898},
\end{align}
for $n\in \N$, where $r_{l,N-1,\alpha,\delta}$ is the inverse Fourier transform of $\widehat{r_l}\big( \frac{\omega}{(N-1)^{\alpha}\delta}\big)$.  Next, we note that $\widehat{r_l}\big( \frac{\omega}{(N-1)^{\alpha}\delta}\big)$ is a positive definite radial basis function  \cite[Theorem 6.20]{Wendland} and hence by \cite[Theorem 6.18]{Wendland} $r_{l,N-1,\alpha,\delta}(x)\geq 0$, for $x \in \Rd$. Furthermore, it follows from Lemma \ref{lemMallat}, stated below, that for   $\{\nu_{\lambda_n}\}_{\lambda_n \in \Lambda_n}\subseteq \Rd$, we have
\begin{align}
&\||f\ast g_{\lambda_n}|\ast r_{l,N-1,\alpha,\delta}\|_2^2 \nonumber\\&\geq \|f\ast g_{\lambda_n}\ast (M_{\nu_{\lambda_n}}r_{l,N-1,\alpha,\delta})\|_2^2.\label{yf6}
\end{align}
Here, we note that choosing the modulation factors  $\{\nu_{\lambda_n}\}_{\lambda_n \in \Lambda_n}\subseteq \Rd$ appropriately (see \eqref{modwei} below) will be  key in establishing  the inductive step.
\begin{lemma}\cite[Lemma 2.7]{MallatS}:\label{lemMallat}
Let $f,g\in L^2(\Rd)$ with $g(x)\geq 0$, for $x\in \Rd$. Then, $
\| |f|\ast g\|_2^2\geq \| f\ast (M_\omega g)\|_2^2, 
$ for $\omega \in \Rd.$
\end{lemma}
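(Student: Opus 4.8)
The plan is to prove the stronger \emph{pointwise} domination
$$
|(f\ast M_\omega g)(y)|\leq (|f|\ast g)(y),\qquad y\in\Rd,
$$
from which the stated inequality between $L^2$-norms follows at once by squaring and integrating. For fixed $f,g\in L^2(\Rd)$, both convolutions are defined for every $y$ (for fixed $y$ the maps $x\mapsto f(x)$ and $x\mapsto g(y-x)$ lie in $L^2(\Rd)$, so their product is integrable by Cauchy--Schwarz), so no regularity issues arise.

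First I would expand the convolution and extract the modulation phase. Using the definition of $M_\omega$,
\begin{align*}
(f\ast M_\omega g)(y)&=\int_{\Rd}f(x)\,e^{2\pi i \langle y-x,\omega\rangle}g(y-x)\,\mathrm dx\\
&=e^{2\pi i \langle y,\omega\rangle}\int_{\Rd}f(x)\,e^{-2\pi i \langle x,\omega\rangle}g(y-x)\,\mathrm dx,
\end{align*}
so that the unimodular prefactor $e^{2\pi i \langle y,\omega\rangle}$ disappears upon taking absolute values,
$$
|(f\ast M_\omega g)(y)|=\Big|\int_{\Rd}f(x)\,e^{-2\pi i \langle x,\omega\rangle}g(y-x)\,\mathrm dx\Big|.
$$
Next I would invoke the triangle inequality for integrals together with the hypothesis $g\geq 0$: since $|e^{-2\pi i \langle x,\omega\rangle}|=1$ and $g(y-x)\geq 0$, the integrand has modulus exactly $|f(x)|\,g(y-x)$, whence
$$
|(f\ast M_\omega g)(y)|\leq \int_{\Rd}|f(x)|\,g(y-x)\,\mathrm dx=(|f|\ast g)(y).
$$
The right-hand side is non-negative, so squaring preserves the inequality; integrating over $y\in\Rd$ then yields $\|f\ast M_\omega g\|_2^2\leq \||f|\ast g\|_2^2$. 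If $\||f|\ast g\|_2^2=\infty$ the claim is vacuous, and otherwise the pointwise bound forces the left-hand side to be finite as well.

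I expect no genuine obstacle here: the entire argument hinges on a single observation made in the second step. The non-negativity of $g$ is precisely what lets $g(y-x)$ be pulled out of the modulus without a sign penalty, while the modulation $M_\omega$ contributes only a phase that cancels. Conceptually, modulating $g$ can only \emph{attenuate} the magnitude of the convolution through destructive interference, whereas replacing $f$ by $|f|$ suppresses all cancellation; hence $|f|\ast g$ dominates every $|f\ast M_\omega g|$. This is exactly the mechanism the main induction exploits in \eqref{yf6}, where the freedom in the choice of the modulation factors $\nu_{\lambda_n}$ will be used to align each $M_{\nu_{\lambda_n}}r_{l,N-1,\alpha,\delta}$ with the relevant spectral band of $f\ast g_{\lambda_n}$.
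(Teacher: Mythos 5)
Your argument is correct: the pointwise bound $|(f\ast M_\omega g)(y)|\leq (|f|\ast g)(y)$, obtained by pulling the unimodular phase out of the convolution integral and using $g\geq 0$, immediately gives the $L^2$ inequality. The paper does not prove this lemma but cites it from \cite[Lemma 2.7]{MallatS}, and your proof is precisely the standard argument given there.
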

Inserting  \eqref{eq:f898} and \eqref{yf6} into the inner sum on the RHS of \eqref{yf1455} yields
\begin{align}
&\sum_{q\,\in\, \Lambda_n \times\Lambda_{n+1}\times\cdots\times \Lambda_{n+N-1}} \|U[q]f\|_2^2\nonumber\\
&\leq C_{n+1}^{n+N-1}\sum_{\lambda_n \in \Lambda_n}\Big( \| f\ast g_{\lambda_n}\|_2^2\nonumber\\
&-\|f\ast g_{\lambda_n}\ast (M_{\nu_{\lambda_n}}r_{l,N-1,\alpha,\delta})\|_2^2\Big)\nonumber\\
&= C_{n+1}^{n+N-1}\int_{\Rd}|\widehat{f}(\omega)|^2  h_{n,N,\alpha,\delta}(\omega)\mathrm d\omega,\quad\forall  N\in \mathbb{N},\label{finald1}
\end{align}
 where we applied Parseval's formula together with $\widehat{M_\omega f}=T_{\omega} \widehat{f}$, for $f\in L^2(\Rd)$, and $\omega\in\Rd$, and set 
 \begin{align}
&h_{n,N,\alpha,\delta}(\omega)\nonumber\\
&:=\sum_{\lambda_n \in \Lambda_n}|\widehat{g_{\lambda_n}}(\omega)|^2\Big(1-\Big|\widehat{r_l}\Big(\frac{\omega-\nu_{\lambda_n}}{(N-1)^{\alpha}\delta}\Big)\Big|^2\Big).\label{ekhfkjehfkrhsssskh}
\end{align}
 The key step is now to establish---by judiciously choosing $\{\nu_{\lambda_n}\}_{\lambda_n \in \Lambda_n}\subseteq \Rd$---the upper bound 
\begin{equation}\label{toshoww}
h_{n,N,\alpha,\delta}(\omega)\leq\max\{1,B_n\}\Big(1-\Big|\widehat{r_l}\Big(\frac{\omega}{N^{\alpha}\delta}\Big)\Big|^2\Big), 
\end{equation}
for $\omega \in \Rd,$ which upon noting that $C_n^{n+N-1}=\max\{ 1,B_n\}\,C_{n+1}^{n+N-1}$ yields \eqref{main_h:up} and thereby completes the proof. We start by defining $H_{A_{\lambda_n}}$,  for $\lambda_n\in \Lambda_n$, to be the orthant supporting $\widehat{g_{\lambda_n}}$, i.e., $\text{supp}(\widehat{g_{\lambda_n}}) \subseteq H_{A_{\lambda_n}}$, where $A_{\lambda_n}\in O(d)$, for $\lambda_n \in \Lambda_n$  (see Assumption \ref{ass}). Furthermore, for $\lambda_n \in \Lambda_n$, we choose the modulation factors according to
\begin{equation}\label{modwei}
\nu_{\lambda_n}:=A_{\lambda_n}\nu\in \Rd, 
\end{equation}
where the components  of $\nu \in \Rd$  are given by $\nu_k:=(1+2^{-1/2})\frac{\delta}{d}$, for $k\in \{1,\dots,d \}$. Invoking \eqref{eq:ass1} and \eqref{eq:ass2}, we get
\begin{align}
&\hspace{-0.25cm}h_{n,N,\alpha,\delta}(\omega)= \sum_{\lambda_n \in \Lambda_n}|\widehat{g_{\lambda_n}}(\omega)|^2\Big(1-\Big|\widehat{r_l}\Big(\frac{\omega-\nu_{\lambda_n}}{(N-1)^{\alpha}\delta}\Big)\Big|^2\Big)\nonumber\\
&\hspace{-0.3cm}=\sum_{\lambda_n \in \Lambda_n}|\widehat{g_{\lambda_n}}(\omega)|^2\mathds{1}_{S_{\lambda_n,\delta}}(\omega)\Big(1-\Big|\widehat{r_l}\Big(\frac{\omega-\nu_{\lambda_n}}{(N-1)^{\alpha}\delta}\Big)\Big|^2\Big),\label{ax1}
\end{align}
for $\omega \in \Rd,$ where $S_{\lambda_n,\delta}:=H_{A_{\lambda_n}}\hspace{-0.1cm}\backslash B_{\delta}(0)$. For the first canonical orthant $H=\{ x\in \Rd \ | \ x_k\geq 0,  \ k=1,$\mydots$,d\}$, we show in Lemma \ref{help} below that
\begin{equation}\label{help1}
\Big|\widehat{r_l}\Big(\frac{\omega-\nu}{(N-1)^{\alpha}\delta}\Big)\Big|\geq \Big|\widehat{r_l}\Big(\frac{\omega}{N^{\alpha}\delta}\Big)\Big|, 
\end{equation}
for  $\omega \in H\backslash B_{\delta}(0)$ and  $N\geq 2$. This will allow us to deduce 
\begin{equation}\label{help2}
\Big|\widehat{r_l}\Big(\frac{\omega-\nu_{\lambda_n}}{(N-1)^{\alpha}\delta}\Big)\Big|\geq\Big|\widehat{r_l}\Big(\frac{\omega}{N^{\alpha}\delta}\Big)\Big|,
\end{equation}
for $\omega \in S_{\lambda_n,\delta}$, $\lambda_{n}\in \Lambda_n$, and $N\geq 2$, where $S_{\lambda_n,\delta}=H_{A_{\lambda_n}}\hspace{-0.1cm}\backslash B_{\delta}(0)$, simply by noting that 
\begin{align}
&\Big|\widehat{r_l}\Big(\frac{\omega-\nu_{\lambda_n}}{(N-1)^{\alpha}\delta}\Big)\Big|=\bigg(1-\bigg| \frac{A_{\lambda_n}(\omega'-\nu)}{(N-1)^{\alpha}\delta}\bigg|\bigg)_{+}^l\nonumber\\
&=\Big(1-\Big| \frac{\omega'-\nu}{(N-1)^{\alpha}\delta}\Big|\Big)_{+}^l=\Big|\widehat{r_l}\Big(\frac{\omega'-\nu}{(N-1)^{\alpha}\delta}\Big)\Big|\label{ttt1}\\
&\geq\Big|\widehat{r_l}\Big(\frac{\omega'}{N^{\alpha}\delta}\Big)\Big|=\Big(1-\Big| \frac{\omega'}{N^{\alpha}\delta}\Big|\Big)_{+}^l\label{ttt3}\\
&=\bigg(1-\bigg| \frac{A_{\lambda_n}\omega'}{N^{\alpha}\delta}\bigg|\bigg)_{+}^l=\Big|\widehat{r_l}\Big(\frac{\omega}{N^{\alpha}\delta}\Big)\Big|,\label{ttt2}
\end{align}
 for $\omega=A_{\lambda_n}\omega'\in H_{A_{\lambda_n}}$\hspace{-0.1cm}$\backslash B_{\delta}(0)$, where $\omega'\in H\backslash B_{\delta}(0)$. Here, \eqref{ttt1} and 
 \eqref{ttt2} are thanks to $|\omega|=|A_{\lambda_n}\omega|$, which is by $A_{\lambda_n}\in O(d)$, and the inequality in \eqref{ttt3} is due to \eqref{help1}. Insertion of \eqref{help2} into \eqref{ax1} then yields
\begin{align}
h_{n,N,\alpha,\delta}(\omega)&\leq\sum_{\lambda_n \in \Lambda_n}|\widehat{g_{\lambda_n}}(\omega)|^2\mathds{1}_{S_{\lambda_n,\delta}}(\omega)\Big(1-\Big|\widehat{r_l}\Big(\frac{\omega}{N^{\alpha}\delta}\Big)\Big|^2\Big)\nonumber\\
&=\sum_{\lambda_n \in \Lambda_n}|\widehat{g_{\lambda_n}}(\omega)|^2\Big(1-\Big|\widehat{r_l}\Big(\frac{\omega}{N^{\alpha}\delta}\Big)\Big|^2\Big)\label{mfgsdjhf}\\
&\leq \max\{1,B_n \}\Big(1-\Big|\widehat{r_l}\Big(\frac{\omega}{N^{\alpha}\delta}\Big)\Big|^2\Big),\label{fgkkgkn}
\end{align}
for $\omega \in \Rd$, where in \eqref{mfgsdjhf} we employed  Assumption \ref{ass}, and \eqref{fgkkgkn} is thanks to \eqref{eq:tight}. This establishes \eqref{toshoww} and completes the proof of \eqref{main_h:up1} for  $\alpha=\log_2(\sqrt{d/(d-1/2)})$,   $d\geq1$.

It remains to show \eqref{help1}, which is accomplished through the following lemma.
\begin{lemma}\label{help}
Let $\alpha:=\log_2\big(\sqrt{{d/(d-1/2)}}\big)$, $\widehat{r_l}:\Rd\to \R$, $\widehat{r_l}(\omega):=(1-|\omega|)^{l}_{+}$, with $l>\lfloor d/2 \rfloor+1$, and define $\nu \in \mathbb{R}^d$ to have components $\nu_k=(1+2^{-1/2})\frac{\delta}{d}$, for $k\in \{1,\dots,d \}$. Then, 
\begin{equation}\label{now}
\Big|\widehat{r_l}\Big(\frac{\omega-\nu}{(N-1)^{\alpha}\delta}\Big)\Big|\geq \Big|\widehat{r_l}\Big(\frac{\omega}{N^{\alpha}\delta}\Big)\Big|, 
\end{equation}
for  $\omega \in H\backslash B_{\delta}(0)$ and $N\geq 2$.
\end{lemma}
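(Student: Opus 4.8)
The plan is to reduce the inequality \eqref{now} between two values of $\widehat{r_l}$ to a single scalar inequality. Since $\widehat{r_l}(\omega)=(1-|\omega|)^l_+$ depends only on $|\omega|$ and is non-increasing in $|\omega|$, and since the right-hand side of \eqref{now} vanishes whenever $|\omega|\geq N^\alpha\delta$ (in which case \eqref{now} is trivial because the left-hand side is non-negative), it suffices to establish the norm inequality
\[
\frac{|\omega-\nu|}{(N-1)^\alpha\delta}\leq \frac{|\omega|}{N^\alpha\delta},\qquad \omega\in H\backslash B_\delta(0),\ \ \delta\leq|\omega|<N^\alpha\delta .
\]
Squaring and setting $\beta:=\big(\tfrac{N-1}{N}\big)^{2\alpha}\in(0,1)$, this is equivalent to $g(\omega):=(\beta-1)|\omega|^2+2\langle\omega,\nu\rangle-|\nu|^2\geq 0$.

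First I would exploit that $\omega\in H$ has non-negative entries, so with $a:=1+2^{-1/2}$ one has $\langle\omega,\nu\rangle=\tfrac{a\delta}{d}\|\omega\|_1\geq \tfrac{a\delta}{d}\|\omega\|_2=\tfrac{a\delta}{d}|\omega|$, using $\|\cdot\|_1\geq\|\cdot\|_2$ on the orthant, together with $|\nu|^2=a^2\delta^2/d$. This lower-bounds $g(\omega)$ by $\delta^2 p(u)$ with $u:=|\omega|/\delta$ and $p(u):=(\beta-1)u^2+\tfrac{2a}{d}u-\tfrac{a^2}{d}$, reducing the problem to $p(u)\geq 0$ for $u\in[1,N^\alpha)$. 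Since $\beta<1$, $p$ is a downward parabola, so its minimum over the closed interval $[1,N^\alpha]$ is attained at an endpoint; it therefore suffices to verify $p(1)\geq 0$ and $p(N^\alpha)\geq 0$.

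The lower endpoint is immediate and pins down the constants. As $2a-a^2=\tfrac12$, we get $p(1)=\beta-1+\tfrac{1}{2d}$, so $p(1)\geq 0\iff \beta\geq 1-\tfrac{1}{2d}$; the defining identity $4^\alpha=d/(d-\tfrac12)$ yields $1-\tfrac{1}{2d}=2^{-2\alpha}$, whence the condition reads $\big(\tfrac{N-1}{N}\big)^{2\alpha}\geq(\tfrac12)^{2\alpha}$, i.e.\ exactly $N\geq 2$, which is the lemma's hypothesis.

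The upper endpoint is the crux and the delicate step, as it is saturated in the limit $d\to\infty$, so no crude estimate will do. Here $p(N^\alpha)\geq 0$ is equivalent to $F(N):=\tfrac{2aN^\alpha-a^2}{d}-\big(N^{2\alpha}-(N-1)^{2\alpha}\big)\geq 0$. I would first show that $F$ is increasing on $[2,\infty)$: in $F'(N)=\tfrac{2a\alpha}{d}N^{\alpha-1}+2\alpha\big((N-1)^{2\alpha-1}-N^{2\alpha-1}\big)$ the bracket is non-negative because $2\alpha-1\leq 0$ (as $\alpha\leq\tfrac12$ for all $d\geq1$) makes $x\mapsto x^{2\alpha-1}$ non-increasing, while the first term is positive. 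Hence it remains to check $F(2)\geq 0$. Multiplying by $d$ and substituting $\theta:=2^\alpha\in(1,\sqrt2]$, and using $2^{2\alpha}=\tfrac{2d}{2d-1}$ (so $d(2^{2\alpha}-1)=\tfrac{d}{2d-1}=\tfrac{\theta^2}{2}$), the inequality collapses to $\theta^2-4a\theta+2a^2\leq 0$, whose roots are $a(2\mp\sqrt2)$. The decisive computation $a(2-\sqrt2)=1$ shows the admissible interval is precisely $[1,\,3+2\sqrt2]$, which contains $\theta=2^\alpha\in(1,\sqrt2]$, closing the argument. I expect this monotonicity-in-$N$ reduction followed by the quadratic-in-$2^\alpha$ identity to be the main obstacle, precisely because both endpoint checks are made tight by the specific choices $a=1+2^{-1/2}$ and $\alpha=\log_2\sqrt{d/(d-1/2)}$.
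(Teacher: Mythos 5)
Your proof is correct and follows essentially the same route as the paper's: reduce via monotonicity of $\widehat{r_l}$ to a scalar norm inequality, bound the cross term by identifying the worst-case $\omega$ on a coordinate axis, arrive at a concave quadratic in $|\omega|/\delta$ whose two endpoint values must be checked, and handle the upper endpoint by monotonicity in $N$ followed by an algebraic verification at $N=2$. Your execution is slightly cleaner in two spots --- the $\|\cdot\|_1\geq\|\cdot\|_2$ estimate on the orthant replaces the paper's figure-supported claim that $\omega\mapsto|\omega-\nu|^2$ is maximized on the coordinate axes of the spherical segment $\Xi_\tau$, and your quadratic in $\theta=2^\alpha$ with roots $1$ and $3+2\sqrt{2}$ replaces the paper's separate auxiliary lemma (and your justification of the monotonicity in $N$ via $2\alpha\leq 1$ is more careful than the paper's appeal to $\alpha>0$) --- but these are variants of the same steps rather than a different argument.
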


\begin{proof}
The key idea of the proof is to employ a monotonicity argument. Specifically, thanks to $\widehat{r_l}$   monotonically decreasing in $|\omega|$, i.e., $\widehat{r_l}(\omega_1)\geq\widehat{r_l}(\omega_2)$, for $\omega_1,\omega_2\in \Rd$ with $ |\omega_2|\geq |\omega_1|$, \eqref{now} can be established simply by showing that  
\begin{equation}\label{toshow}
\kappa_N(\omega):=|\omega|^2\bigg|\frac{N-1}{N}\bigg|^{2\alpha}-\big|\omega-\nu|^2\geq 0,
\end{equation}
for   $\omega \in H\backslash B_{\delta}(0)$ and  $N\geq 2$. We first note that for $\omega\in H\backslash B_{\delta}(0)$ with $|\omega|> N^{\alpha}\delta$,  \eqref{now} is trivially satisfied as the RHS of \eqref{now} equals zero (owing to  $\big|\frac{\omega}{N^{\alpha}\delta}\big|> 1$   together with $\text{supp}(\widehat{r_l})\subseteq B_1(0)$). It hence suffices to prove \eqref{toshow} for $\omega \in H$ with $\delta\leq |\omega|\leq N^{\alpha}\delta$. To this end, fix $\tau \in [\delta,N^{\alpha}\delta]$, and define the spherical segment $$\Xi_{\tau}:=\{ \omega \in H \ | \ |\omega|=\tau\}.$$ We then have
\begin{align}
\kappa_N(\omega)&=\tau^2\bigg|\frac{N-1}{N}\bigg|^{2\alpha}-\big|\omega-\nu|^2\nonumber\\
&\geq\tau^2\bigg|\frac{N-1}{N}\bigg|^{2\alpha}-\big|\omega^\ast-\nu|^2\label{eq101},
\end{align}
for  $\omega \in \Xi_\tau$ and $N\geq 2$, where $$\omega^\ast=(\tau,0,\dots,0)\in \Xi_\tau.$$ The  inequality in \eqref{eq101} holds thanks to the mapping $$\omega\mapsto |\omega-\nu|^2, \quad \omega\in \Xi_\tau,$$  attaining its maxima along   the coordinate axes (see Figure \ref{fig:-1}). 
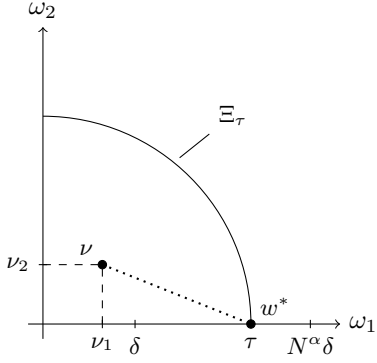
\begin{figure}
\begin{center}
\begin{tikzpicture}
	\begin{scope}[scale=.79]
	\draw[->] (-.25,0) -- (5,0) node[right] {$\omega_1$};
	\draw[->] (0,-.25) -- (0,5) node[above] {$\omega_2$};
	\draw (3.5cm,0cm) arc (0:90:3.5cm);
	\draw[dashed] (0,1) --  (1,1) -- (1,0);
		\draw (3.5 cm,2pt) -- (3.5 cm,-2pt) node[anchor=north] {\small$\tau$};
		\draw (1 cm,2pt) -- (1 cm,-2pt) node[anchor=north] {\small$\nu_1$};
		\draw (2pt,1cm) -- (-2pt,1cm) node[anchor=east] {\small$\nu_2$};		
		\draw (1.55 cm,2pt) -- (1.55 cm,-2pt) node[anchor=north] {\small$\delta$};
		\draw (4.5 cm,2pt) -- (4.5 cm,-2pt) node[anchor=north] {\small$N^\alpha \delta$};
		\draw (2.3cm,2.8cm) -- (2.8 cm,3.2cm) node[anchor=south west] {\small$\Xi_\tau$};
		\filldraw [black] (1,1) circle (2pt) node[anchor=south east] {\small$\nu$};
		\filldraw [black] (3.5,0) circle (2pt) node[anchor=south west] {\small$w^\ast$};
		\draw[thick,dotted] (1,1) -- (3.5,0);
	\end{scope}
\end{tikzpicture}
\end{center}
\vspace{0.3cm}
\caption{Illustration of \eqref{eq101} in dimension $d=2$. The mapping $\omega \mapsto |\omega-\nu|^2$, $\omega \in \Xi_\tau=\{ \omega=(\omega_1,\omega_2) \in \R^2 \ | \ |\omega|=\tau,\, \omega_1\geq 0,\, \omega_2 \geq 0\}$, computes the squared Euclidean distance between an element $\omega$ of the spherical segment $\Xi_\tau$ and the vector $\nu=(\nu_1,\nu_2)$ with components $\nu_k=(1+2^{-1/2})\frac{\delta}{2}$, $k\in \{ 1,2\}$. The mapping attains its maxima along  the coordinate axes, e.g., for $\omega^\ast=(\tau,0)\in \Xi_\tau$. }
\label{fig:-1}
\end{figure}
Inserting
\begin{align*}
 |\omega^\ast-\nu|^2&=\Big(\tau-\frac{\delta(1+2^{-1/2})}{d}\Big)^2+\frac{(d-1)\delta^2(1+2^{-1/2})^2}{d^2}\nonumber\\
&=\tau^2-\frac{\tau\delta(2+2^{1/2})}{d} + \frac{\delta^2(1+2^{-1/2})^2}{d}
\end{align*}
into \eqref{eq101} and rearranging terms yields 
\begin{align*}
&\kappa_N(\omega)\nonumber\\
&\geq\underbrace{\tau^2\bigg(\bigg|\frac{N-1}{N}\bigg|^{2\alpha}-1\bigg) +\frac{\tau\delta(2+2^{1/2})}{d} - \frac{\delta^2(1+2^{-1/2})^2}{d}}_{=:p_N(\tau)},
\end{align*}
for   $\omega \in \Xi_\tau$ and  $N\geq 2$. This inequality  shows that $\kappa_N(\omega)$ is lower-bounded---for $\omega \in \Xi_\tau$---by the $1$-D function $p_N(\tau)$. Now, $p_N(\tau)$ is quadratic in $\tau$, with the highest-degree coefficient $$\Big(\Big|\frac{N-1}{N}\Big|^{2\alpha}-1\Big)$$ negative owing to $$\alpha=\log_2\big(\sqrt{{d/(d-1/2)}}\big)>0, \quad\forall\, d \geq 1.$$
Therefore, thanks to $p_N$, $N\geq 2$, being concave, establishing $p_N(\delta)\geq0$ and $p_N(N^{\alpha}\delta)\geq0$, for $N\geq 2$, implies $p_N(\tau)\geq0$,  for $\tau\in [\delta,N^{\alpha}\delta]$ and $N\geq 2$ (see Figure \ref{fig:h0}), and thus  \eqref{toshow}, which completes the proof. It remains to show  that  $p_N(\delta)\geq0$ and $p_N(N^{\alpha}\delta)\geq0$, both for $N\geq 2$. We have
\begin{align}
p_N(\delta)&=\delta^2\bigg(\bigg|\frac{N-1}{N}\bigg|^{2\alpha}-1 +\frac{2+2^{1/2}}{d} - \frac{(1+2^{-1/2})^2}{d}\bigg)\nonumber\\
&\geq\delta^2\bigg(2^{-2\alpha} -\frac{d-1/2}{d}\bigg)=0\label{eq909},
\end{align}
where the inequality in \eqref{eq909} is by $$N\mapsto \Big|\frac{N-1}{N}\Big|^{2\alpha}, \quad N\geq 2,$$  monotonically increasing in $N$, and the equality  is thanks to $\alpha=\log_2\big(\sqrt{{d/(d-1/2)}}\big)$. Next, we have 
\begin{align}
&\frac{p_N(N^{\alpha}\delta)}{\delta^2}\nonumber\\
&=\big|N-1\big|^{2\alpha}-N^{2\alpha} +\frac{N^\alpha(2+2^{1/2})}{d} - \frac{(1+2^{-1/2})^2}{d}\nonumber\\
&\geq1-2^{2\alpha} +\frac{2^{\alpha}(2+2^{1/2})}{d} - \frac{(1+2^{-1/2})^2}{d}\label{eq907}\\
&=1-\frac{d}{d-1/2}+\frac{\sqrt{d}(2+2^{1/2})}{d\sqrt{d-1/2}} - \frac{(1+2^{-1/2})^2}{d}\geq 0\label{eq906}, 
\end{align}
for   $d\geq 1$ and  $N\geq 2$, where \eqref{eq907} is by $N\mapsto (N-1)^{2\alpha}-N^{2\alpha}+d^{-1}N^\alpha(2+2^{1/2})$, for $N\geq 2$,  monotonically increasing in $N$ (owing to $\alpha=\log_2\big(\sqrt{{d/(d-1/2)}}\big)>0$, for  $d\geq 1$), and the equality in \eqref{eq906} is thanks to $\alpha=\log_2\big(\sqrt{{d/(d-1/2)}}\big)$. The inequality in  \eqref{eq906} is established in Lemma \ref{helpaaa} below. This completes the proof.
\end{proof}
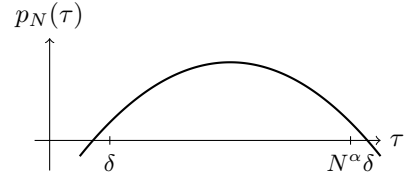
\begin{figure}
\begin{center}
\begin{tikzpicture}
	\begin{scope}[scale=.8]
	\draw[->] (-.25,0) -- (5.5,0) node[right] {$\tau$};
	\draw[->] (0,-.5) -- (0,1.7) node[above] {$p_N(\tau)$};

		\draw (1 cm,2pt) -- (1 cm,-2pt) node[anchor=north] {\small$\delta$};
		\draw (5 cm,2pt) -- (5 cm,-2pt) node[anchor=north] {\small$N^{\alpha}\delta$};
	\draw[thick,smooth, domain=.5:5.5,samples=100] plot (\x,{-(\x/2-1.5)*(\x/2-1.5)+1.3)});
	\end{scope}
\end{tikzpicture}
\end{center}
\vspace{0.2cm}
\caption{The function $p_N(\tau)$ is quadratic in $\tau$, with the coefficient of the highest-degree term  negative. Establishing $p_N(\delta)\geq 0$ and $p_N(N^{\alpha}\delta)\geq0$ therefore implies $p_N(\tau)\geq 0$,  $\tau \in [\delta,N^\alpha \delta]$. }
\label{fig:h0}
\end{figure}

\begin{lemma}\label{helpaaa}
For every $d\geq 1$ it holds that 
$$
1-\frac{d}{d-1/2}+\frac{\sqrt{d}(2+2^{1/2})}{d\sqrt{d-1/2}} - \frac{(1+2^{-1/2})^2}{d}\geq 0.
$$
\end{lemma}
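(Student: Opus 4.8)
The plan is to collapse this four-term expression, which superficially involves nested radicals in $d$, into a single elementary inequality by exploiting a couple of fortuitous algebraic identities. First I would simplify the terms individually: one has $1-\frac{d}{d-1/2}=-\frac{1}{2d-1}$, next $(1+2^{-1/2})^2=\frac{3}{2}+\sqrt{2}$, and finally, using $\sqrt{d}/d=1/\sqrt{d}$ and $d(d-1/2)=d(2d-1)/2$,
\[
\frac{\sqrt{d}(2+\sqrt{2})}{d\sqrt{d-1/2}}=\frac{(2+\sqrt{2})\sqrt{2}}{\sqrt{d(2d-1)}}=\frac{2+2\sqrt{2}}{\sqrt{d(2d-1)}}.
\]
Clearing the common denominator $d(2d-1)$, which is positive for $d\geq 1$, reduces the claim to
\[
(2+2\sqrt{2})\sqrt{d(2d-1)}\geq (4+2\sqrt{2})d-\Big(\tfrac{3}{2}+\sqrt{2}\Big).
\]

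The second step is to divide through by $2+2\sqrt{2}=2(1+\sqrt{2})$ and invoke the identity $3+2\sqrt{2}=(1+\sqrt{2})^2$. The coefficient of $d$ collapses to $\frac{4+2\sqrt{2}}{2+2\sqrt{2}}=\frac{2+\sqrt{2}}{1+\sqrt{2}}=\sqrt{2}$, while the constant simplifies, since $\frac{3}{2}+\sqrt{2}=\frac{(1+\sqrt{2})^2}{2}$, to $\frac{(1+\sqrt{2})^2/2}{2(1+\sqrt{2})}=\frac{1+\sqrt{2}}{4}$. The inequality then takes the clean form
\[
\sqrt{2d^2-d}\geq \sqrt{2}\,d-\frac{1+\sqrt{2}}{4}.
\]

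The third step is to square. Before doing so I would note that the right-hand side is positive for $d\geq 1$, since at $d=1$ it equals $\frac{3\sqrt{2}-1}{4}>0$ and it is increasing in $d$; squaring is therefore legitimate. After expanding, the $2d^2$ terms cancel, the cross term works out to $2\sqrt{2}\cdot\frac{1+\sqrt{2}}{4}=1+\frac{1}{\sqrt{2}}$, and the constant is $\frac{3+2\sqrt{2}}{16}$, so the claim reduces to the linear inequality $\frac{1}{\sqrt{2}}\,d\geq \frac{3+2\sqrt{2}}{16}$, that is, $d\geq \frac{3\sqrt{2}+4}{16}\approx 0.515$. This is satisfied for every $d\geq 1$, which completes the argument.

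I do not expect a genuine obstacle: the entire difficulty is bookkeeping the surds and spotting the two simplifications $3+2\sqrt{2}=(1+\sqrt{2})^2$ and $\frac{4+2\sqrt{2}}{2+2\sqrt{2}}=\sqrt{2}$, which together force the quadratic-in-$d$ contributions to cancel after squaring and leave only a trivially satisfied linear bound. Should one prefer to sidestep the radical manipulations, an alternative would be to regard the left-hand side as a smooth function of the single variable $d\in[1,\infty)$ and establish non-negativity via a derivative/monotonicity analysis together with its finite limit as $d\to\infty$; but the reduction sketched above is exact and noticeably shorter.
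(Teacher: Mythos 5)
Your proof is correct and follows essentially the same route as the paper's: clear the denominator $d(d-1/2)$ (equivalently $d(2d-1)$), isolate the square-root term, square, and observe that the quadratic-in-$d$ contributions cancel, leaving a linear inequality that holds trivially for all $d\geq 1$. Your version adds two touches the paper glosses over---the explicit check that the right-hand side is positive before squaring, and the simplification via $3+2\sqrt{2}=(1+\sqrt{2})^2$ that reduces the final bound to the transparent $d\geq(3\sqrt{2}+4)/16\approx 0.515$.
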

\begin{proof}
We start by multiplying the inequality by $d(d-1/2)$, which (after rearranging terms) yields 
\begin{equation}\label{sfklhjkwehkweh}
\sqrt{d(d-1/2)}\alpha\geq (d-1/2)\beta+d/2,\quad d\geq 1,
\end{equation}
where 
$$\alpha:=(2+2^{1/2}),\hspace{1cm} \beta:=(1+2^{-1/2})^2.$$ Squaring \eqref{sfklhjkwehkweh} yields (again, after rearranging terms) 
\begin{equation*}
d^2\underbrace{\Big(\alpha^2-\beta^2-\beta-\frac{1}{4}\Big)}_{=\,0}+d\underbrace{\Big(-\frac{\alpha^2}{2}+\beta^2+\frac{\beta}{2}\Big)}_{\geq \,4}-\underbrace{\frac{\beta^2}{4}}_{\geq\, 3}\geq 0,
\end{equation*}
for $d\geq 1,$ which completes the proof.
\end{proof}

We proceed to sharpen, for $d=1$, the exponent $\alpha=\log_2(\sqrt{d/(d-1/2)})=1/2$ to $\alpha=1$. The structure of the corresponding  proof is similar to that of the proof  of the general result $\alpha=\log_2\big(\sqrt{{d/(d-1/2)}}\big)$, for $d\geq1$. Specifically, we start by employing  the arguments leading to \eqref{finald1} with $N^\alpha$ replaced by $N$. With this replacement $h_{n,N,\alpha,\delta}$ in \eqref{ekhfkjehfkrhsssskh} becomes
$$
\hspace{-0.02cm}h_{n,N,\alpha,\delta}(\omega):=\sum_{\lambda_n \in \Lambda_n}|\widehat{g_{\lambda_n}}(\omega)|^2\Big(1-\Big|\widehat{r_l}\Big(\frac{\omega-\nu_{\lambda_n}}{(N-1)\delta}\Big)\Big|^2\Big),
$$
where, again, appropriate choice of the modulation factors $\{\nu_{\lambda_n}\}_{\lambda_n \in \Lambda_n}\subseteq \Rd$ will be key in establishing the inductive step.  We start by defining $\Lambda_n^+$ to be the set of  indices $\lambda_n\in \Lambda_n$ such that $\text{supp}(\widehat{g_{\lambda_n}}) \subseteq [\delta,\infty)$, and take $\Lambda_n^-$ to be the set of  indices $\lambda_n\in \Lambda_n$ so that $\text{supp}(\widehat{g_{\lambda_n}}) \subseteq (-\infty,-\delta]$ (see Assumption \ref{ass}). Clearly, $\Lambda_n=\Lambda_n^+\cup\Lambda_n^-$. Moreover, we define the modulation factors according to $\nu_{\lambda_n}:=\delta$, for  $\lambda_n \in \Lambda_n^+$, and  $\nu_{\lambda_n}:=-\delta$, for  $\lambda_n \in \Lambda_n^-$. We then get
\begin{align}
&h_{n,N,\alpha,\delta}(\omega)= \sum_{\lambda_n \in \Lambda_n}|\widehat{g_{\lambda_n}}(\omega)|^2\Big(1-\Big|\widehat{r_l}\Big(\frac{\omega-\nu_{\lambda_n}}{(N-1)\delta}\Big)\Big|^2\Big)\nonumber\\
&=\sum_{\lambda_n \in \Lambda_n^+}|\widehat{g_{\lambda_n}}(\omega)|^2\,\mathds{1}_{[\delta,\,\infty)}(\omega)\Big(1-\Big|\widehat{r_l}\Big(\frac{\omega-\delta}{(N-1)\delta}\Big)\Big|^2\Big)\label{fffsssf1}\\
&+\sum_{\lambda_n \in \Lambda_n^-}|\widehat{g_{\lambda_n}}(\omega)|^2\,\mathds{1}_{(-\infty,-\delta\,]}(\omega)\Big(1-\Big|\widehat{r_l}\Big(\frac{\omega+\delta}{(N-1)\delta}\Big)\Big|^2\Big)\nonumber\\
&\leq\max\{1,B_n \}\,\mathds{1}_{[\delta,\,\infty)}(\omega)\Big(1-\Big|\widehat{r_l}\Big(\frac{\omega-\delta}{(N-1)\delta}\Big)\Big|^2\Big)\label{bbb1}\\
&+\max\{1,B_n \}\, \mathds{1}_{(-\infty,-\delta\,]}(\omega)\Big(1-\Big|\widehat{r_l}\Big(\frac{\omega+\delta}{(N-1)\delta}\Big)\Big|^2\Big)\label{bbb2},
\end{align}
where  \eqref{fffsssf1} is thanks to Assumption \ref{ass}, and for the last step we employed \eqref{eq:tight}. We show in Lemma \ref{helpi} below that
\begin{equation}\label{help12}
\Big|\widehat{r_l}\Big(\frac{\omega-\delta}{(N-1)\delta}\Big)\Big|\geq \Big|\widehat{r_l}\Big(\frac{\omega}{N\delta}\Big)\Big|, 
\end{equation}
for  $\omega \in [\delta,\infty)$ and  $N\geq 2$. This will allow us to deduce 
\begin{equation}\label{help22}
\Big|\widehat{r_l}\Big(\frac{\omega+\delta}{(N-1)\delta}\Big)\Big|\geq\Big|\widehat{r_l}\Big(\frac{\omega}{N\delta}\Big)\Big|,
\end{equation}
for  $\omega \in (-\infty,-\delta]$ and $N\geq 2$, simply by noting that 
\begin{align}
&\Big|\widehat{r_l}\Big(\frac{\omega+\delta}{(N-1)\delta}\Big)\Big|=\bigg(1-\bigg| \frac{\omega+\delta}{(N-1)\delta}\bigg|\bigg)_{+}^l\nonumber\\
&=\Big(1-\Big| \frac{-(-\omega-\delta)}{(N-1)\delta}\Big|\Big)_{+}^l=\Big|\widehat{r_l}\Big(\frac{-\omega-\delta}{(N-1)\delta}\Big)\Big|\nonumber\\
&\geq\Big|\widehat{r_l}\Big(\frac{-\omega}{N\delta}\Big)\Big|=\Big(1-\Big| \frac{-\omega}{N\delta}\Big|\Big)_{+}^l\label{ttt32}=\Big|\widehat{r_l}\Big(\frac{\omega}{N\delta}\Big)\Big|,
\end{align}
 for $\omega \in (-\infty,-\delta]$. Here, the inequality in \eqref{ttt32} is due to \eqref{help12}. Insertion of \eqref{help12} into \eqref{bbb1} and of  \eqref{help22} into \eqref{bbb2} then yields
\begin{align}
&h_{n,N,\alpha,\delta}(\omega)\nonumber\\
&\leq\max\{1,B_n \}\,\mathds{1}_{(-\infty,\delta\,]\,\cup\,[\delta,\,\infty)}(\omega)\Big(1-\Big|\widehat{r_l}\Big(\frac{\omega}{N\delta}\Big)\Big|^2\Big)\nonumber\\
&\leq \max\{1,B_n \}\Big(1-\Big|\widehat{r_l}\Big(\frac{\omega}{N\delta}\Big)\Big|^2\Big),\nonumber
\end{align}
for $\omega \in \R$, where the last inequality is thanks to $0\leq \widehat{r_l}(\omega)\leq 1$, for $\omega \in \R$. This establishes \eqref{main_h:up1}---in the $1$-D case---for $\alpha=1$ and completes the proof of statement i) in Theorem \ref{thm3}.

It remains to prove \eqref{help12}, which is done through the following lemma.
\begin{lemma}\label{helpi}
Let $\widehat{r_l}:\R\to \R$, $\widehat{r_l}(\omega):=(1-|\omega|)^{l}_{+}$, with $l>1$. Then, 
\begin{equation}\label{now2}
\Big|\widehat{r_l}\Big(\frac{\omega-\delta}{(N-1)\delta}\Big)\Big|\geq \Big|\widehat{r_l}\Big(\frac{\omega}{N\delta}\Big)\Big|,
\end{equation}
for   $\omega \in [\delta,\infty)$ and  $N\geq 2$.
\end{lemma}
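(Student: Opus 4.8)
The plan is to exploit exactly the same monotonicity argument that underlies the proof of Lemma \ref{help}: the function $\widehat{r_l}(x)=(1-|x|)^l_+$ is nonnegative and monotonically decreasing in $|x|$, i.e.\ $\widehat{r_l}(x_1)\geq \widehat{r_l}(x_2)$ whenever $|x_1|\leq |x_2|$. The one-dimensional setting here is considerably simpler than the $d$-dimensional Lemma \ref{help}, since there is no spherical-segment maximization and no concavity/quadratic machinery; everything collapses to a single scalar inequality.

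First I would observe that for $\omega\in[\delta,\infty)$ and $N\geq 2$ we have $\omega-\delta\geq 0$ and $\omega>0$, so both arguments are nonnegative and the absolute values in $\big|\widehat{r_l}(\tfrac{\omega-\delta}{(N-1)\delta})\big|$ and $\big|\widehat{r_l}(\tfrac{\omega}{N\delta})\big|$ can be dropped, reducing \eqref{now2} to a comparison of $\widehat{r_l}$ evaluated at two nonnegative points. I would then split into two cases according to the size of $\omega$ relative to $N\delta$.

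In the first case, $\omega> N\delta$, the argument on the right satisfies $\tfrac{\omega}{N\delta}>1$, which lies outside $\text{supp}(\widehat{r_l})\subseteq B_1(0)$, so the right-hand side of \eqref{now2} equals zero; since $\widehat{r_l}\geq 0$, the inequality holds trivially. In the second case, $\delta\leq\omega\leq N\delta$, I would invoke the monotonicity of $\widehat{r_l}$ to reduce \eqref{now2} to the elementary inequality between the arguments,
\begin{equation*}
\frac{\omega-\delta}{(N-1)\delta}\leq \frac{\omega}{N\delta}.
\end{equation*}
Clearing the positive denominators (both $N-1>0$ and $N>0$ for $N\geq 2$) and simplifying, this is equivalent to $N(\omega-\delta)\leq (N-1)\omega$, i.e.\ $\omega\leq N\delta$, which holds by the case assumption. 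Applying monotonicity then yields \eqref{now2}.

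There is essentially no serious obstacle in this lemma; the only point requiring care is the bookkeeping of the absolute values and the clean handling of the support, specifically recognizing that the case $\omega>N\delta$ must be dispatched separately because the argument-comparison inequality $\tfrac{\omega-\delta}{(N-1)\delta}\leq\tfrac{\omega}{N\delta}$ is exactly what fails to be the relevant mechanism once the right-hand side has already been driven to zero. Once the case split is set up correctly, the remaining computation is a one-line cross-multiplication.
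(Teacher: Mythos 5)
Your proof is correct and follows essentially the same route as the paper's: dispatch the regime $\omega>N\delta$ via the support of $\widehat{r_l}$, then for $\delta\leq\omega\leq N\delta$ reduce the claim by monotonicity of $\widehat{r_l}$ to the argument comparison $\frac{\omega-\delta}{(N-1)\delta}\leq\frac{\omega}{N\delta}$, which cross-multiplies to $\omega\leq N\delta$. No gaps.
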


\begin{proof}
We first note that for  $\omega> N\delta$,  \eqref{now2} is trivially satisfied as the RHS of \eqref{now2} equals zero (owing to  $\big|\frac{\omega}{N\delta}\big|> 1$   together with $\text{supp}(\widehat{r_l})\subseteq B_1(0)$). It hence suffices to prove \eqref{now2} for $\delta\leq \omega \leq N\delta$. The key idea of the proof is to employ a monotonicity argument. Specifically, thanks to $\widehat{r_l}$   monotonically decreasing in $|\omega|$, i.e., $\widehat{r_l}(\omega_1)\geq\widehat{r_l}(\omega_2)$, for $\omega_1,\omega_2\in \R$ with $ |\omega_2|\geq |\omega_1|$, \eqref{now2} can be established simply by showing that  
\begin{equation*}
\Big|\frac{\omega-\delta}{(N-1)\delta}\Big|\leq \Big|\frac{\omega}{N\delta}\Big| ,\hspace{0.75cm} \forall \, \omega \in [\delta,N\delta\,],\ \forall  N\geq 2,
\end{equation*}
which, by $\omega \in [\delta,N\delta]$, is equivalent to 
\begin{equation}\label{toshow2}
\frac{\omega-\delta}{(N-1)\delta}\leq \frac{\omega}{N\delta} ,\hspace{0.75cm} \forall \, \omega \in[\delta,N\delta\,],\ \forall  N\geq 2.
\end{equation}
Rearranging terms in \eqref{toshow2}, we get $\omega\leq N\delta$, for  $\omega \in[\delta,N\delta\,]$ and $N\geq 2$, which completes the proof.
\end{proof}

\begin{remark}
What makes the improved exponent $\alpha$ possible in the $1$-D case is the absence of rotated orthants. Specifically, for $d=1$, the filters $\{g_{\lambda_n}\}_{\lambda_n \in \Lambda_n}$ satisfy either  $\text{supp}(\widehat{g_{\lambda_n}})\subseteq (-\infty,-\delta]$ or $\text{supp}(\widehat{g_{\lambda_n}})\subseteq [\delta,\infty)$, i.e., the support sets $\text{supp}(\widehat{g_{\lambda_n}})$ are located in one of the two half-spaces. 
\end{remark}

\section{Proof of statement  ii) in Theorem \ref{thm3}}\label{Sob1Ap}
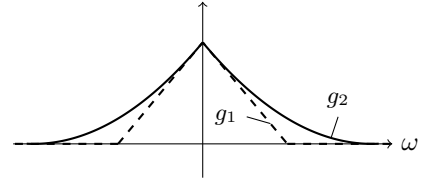
\begin{figure}
\begin{center}
\begin{tikzpicture}
	\begin{scope}[scale=.9]
	\draw[->] (-2.8,0) -- (2.8,0) node[right] {$\omega$};
	\draw[->] (0,-.5) -- (0,2.1) node[above] {};

			
			\draw (1cm,.3cm)--(.65 cm,.4cm)  node[anchor=east] {\small$g_1$};
			\draw (1.9 cm,.1cm) -- (2 cm,.4cm) node[anchor=south] {\small$g_2$};			
			
        \draw[thick,dashed] (1.25,0) -- (2.8,0);
        \draw[thick,dashed] (-1.25,0) -- (-2.8,0);
	\draw[thick,dashed, domain=-1.25:1.25,samples=100] plot (\x,{1.5*(1-2*1*abs(\x/2.5))});
	\draw[thick,smooth, domain=0:2.6,samples=100] plot (\x,{1.5*((1-abs(\x/2.5))^(2))});
	\draw[thick,smooth, domain=-2.6:0,samples=100] plot (\x,{1.5*((-1+abs(\x/2.5))^(2))});
	\end{scope}
\end{tikzpicture}
\end{center}
\caption{Illustration of \eqref{sdfknfldknfknff} in dimension $d=1$. The functions $g_1(\omega):=\max\{0, 1-2l|\omega|\}$ (dashed  line) and  $g_2(\omega):=(1-|\omega|)^{2l}_{+}$ (solid line) satisfy $g_1(\omega) \leq g_2(\omega)$, for $\omega\in \R$. Note that  $l>\lfloor d/2 \rfloor+1$.}
\label{fig:-so}
\end{figure}
We need to show that there exist constants $C_{1,s},C_{2,s}>0$ (that are independent of $N$)  such that
\begin{equation}\label{here1}
W_N(f)\leq C_{1,s}B_{\Omega}^N N^{-2s\alpha}, \quad \forall s\in (0, 1/2), \ \forall N\geq 1,
\end{equation}
and
\begin{equation}\label{here2}
W_N(f)\leq C_{2,s}B_{\Omega}^N N^{-\alpha}, \quad \forall s\in [1/2,\infty), \ \forall N\geq 1.
\end{equation}
Let us start by noting that 
\begin{equation}\label{sdfknfldknfknff}
\max\{0, 1-2l|\omega|\} \leq (1-|\omega|)^{2l}_{+},\quad \omega\in \Rd,
\end{equation}
where $l>\lfloor d/2 \rfloor+1$, see Figure \ref{fig:-so}. This implies
\begin{align}
1-\Big|\widehat{r_l}\Big(\frac{\omega}{N^{\alpha}\delta}\Big)\Big|^2&=1-\Big(1-\Big|\frac{\omega}{N^{\alpha}\delta}\Big|\Big)^{2l}_{+}\nonumber\\
&\leq 1- \max\bigg\{0,1- \frac{2l\,|\omega|}{N^{\alpha}\delta}\bigg\}\nonumber\\
&= 1 + \min\bigg\{0, \frac{2l\,|\omega|}{N^{\alpha}\delta} - 1\bigg\}\nonumber\\
&= \min\bigg\{1, \frac{2l\,|\omega|}{N^{\alpha}\delta} \bigg\}, \quad \forall \omega \in \Rd.\label{skfkhdjkwhejkhfhjf}
\end{align}
The key idea of the proof of  \eqref{here1} is to upper-bound the integral on the RHS of \eqref{main_h:up1} according to 
\begin{align}
&\int_{\Rd}|\widehat{f}(\omega)|^2 \Big(1-\Big|\widehat{r_l}\Big(\frac{\omega}{N^{\alpha}\delta}\Big)\Big|^2\Big)\mathrm d\omega\nonumber\\
&\leq \int_{\Rd}|\widehat{f}(\omega)|^2  \min\bigg\{1, \frac{2l\,|\omega|}{N^{\alpha}\delta} \bigg\}\mathrm d\omega\label{flksdjfdkljkfdsj111}\\
&= \int_{B_{\tau}(0)}\hspace{-0.25cm}|\widehat{f}(\omega)|^2 \frac{2l\,|\omega|}{N^{\alpha}\delta}\mathrm d\omega+ \int_{\Rd \backslash B_{\tau}(0)}\hspace{-0.5cm}|\widehat{f}(\omega)|^2  \mathrm d\omega\label{flksdjfdkljkfddddssj111},
\end{align}
where $\tau:=\frac{N^\alpha \delta}{2l}$. Here, the inequality in \eqref{flksdjfdkljkfdsj111} follows from \eqref{skfkhdjkwhejkhfhjf}, and \eqref{flksdjfdkljkfddddssj111} is owing to 
$$
\min\bigg\{1, \frac{2l\,|\omega|}{N^{\alpha}\delta} \bigg\} = \begin{cases} 
\frac{2l \, |\omega|}{N^\alpha \delta}, &|\omega| \leq \tau,\\
1, &|\omega| > \tau.
\end{cases}
$$
Now, the first integral in \eqref{flksdjfdkljkfddddssj111} satisfies
\begin{align}
&\int_{B_{\tau}(0)}\hspace{-0.25cm}|\widehat{f}(\omega)|^2 \frac{2l\,|\omega|}{N^{\alpha}\delta}\mathrm d\omega = \frac{2l}{N^{\alpha}\delta}\int_{B_{\tau}(0)}\hspace{-0.25cm}|\widehat{f}(\omega)|^2|\omega|^{1-2s}|\omega|^{2s}\mathrm d\omega\nonumber\\
&\leq \frac{2l \,\tau^{1-2s}}{N^{\alpha}\delta}\int_{B_{\tau}(0)}\hspace{-0.25cm}|\widehat{f}(\omega)|^2|\omega|^{2s}\mathrm d\omega\label{skjshdfkjjkhfdhfjsk12234}\\
&\leq \frac{2l \,\tau^{1-2s}}{N^{\alpha}\delta}\int_{B_{\tau}(0)}\hspace{-0.25cm}|\widehat{f}(\omega)|^2(1+|\omega|^2)^s\mathrm d\omega\nonumber\\
&  \leq \Big(\frac{2l}{N^\alpha \delta} \Big)^{2s} \int_{B_{\tau}(0)}|\widehat{f}(\omega)|^2(1+|\omega|^2)^s\,\mathrm d\omega \label{djkahdajkshjksahjshjsh},
\end{align}
where \eqref{skjshdfkjjkhfdhfjsk12234} is owing to $|\omega| \mapsto (1+|\omega|)^{1-2s}$ monotonically increasing in $|\omega|$ for $s\in (0,1/2)$. For the second integral in \eqref{flksdjfdkljkfddddssj111}, we have 
\begin{align}
&\int_{\Rd\backslash B_{\tau}(0)}\hspace{-0.25cm}|\widehat{f}(\omega)|^2 \mathrm d\omega = \int_{\Rd\backslash B_{\tau}(0)}\hspace{-0.25cm}|\widehat{f}(\omega)|^2|\omega|^{-2s}|\omega|^{2s}\mathrm d\omega\nonumber\\
&\leq\tau^{-2s} \int_{\Rd\backslash B_{\tau}(0)}\hspace{-0.25cm}|\widehat{f}(\omega)|^2 \underbrace{|\omega|^{2s}}_{\leq(1+|\omega|^2)^s}\mathrm d\omega\label{skffrkejshdfkjjkhfdhfjsk12234}\\
&\leq\tau^{-2s} \int_{\Rd\backslash B_{\tau}(0)}\hspace{-0.25cm}|\widehat{f}(\omega)|^2 (1+|\omega|^2)^s\mathrm d\omega\nonumber\\
&\leq \Big( \frac{2l}{N^\alpha \delta}\Big)^{2s} \int_{\Rd\backslash B_{\tau}(0)}|\widehat{f}(\omega)|^2(1+|\omega|^2)^s \,\mathrm d\omega\label{aslkjdslkjkdjkjl},
\end{align}
where \eqref{skffrkejshdfkjjkhfdhfjsk12234} is thanks to $$|\omega| \mapsto |\omega|^{-2s}, \quad \omega \in \Rd,$$ monotonically decreasing in $|\omega|$ for $s\in (0,1/2)$. Inserting \eqref{djkahdajkshjksahjshjsh} and \eqref{aslkjdslkjkdjkjl} into \eqref{flksdjfdkljkfddddssj111} establishes \eqref{here1} with $$C_{1,s}:= (2l)^{2s} \delta^{-2s}\| f\|_{H^s}^2.$$ Next, we show \eqref{here2} by noting that 
\begin{align}
&\int_{\Rd}|\widehat{f}(\omega)|^2 \Big(1-\Big|\widehat{r_l}\Big(\frac{\omega}{N^{\alpha}\delta}\Big)\Big|^2\Big)\mathrm d\omega\nonumber\\
&\leq \int_{\Rd}|\widehat{f}(\omega)|^2  \min\bigg\{1, \frac{2l\,|\omega|}{N^{\alpha}\delta} \bigg\}\mathrm d\omega\label{flksdjfdkljkfdsj13345511}\\
&\leq  \frac{2l}{N^{\alpha}\delta} \int_{\Rd}|\widehat{f}(\omega)|^2  |\omega| \,\mathrm d\omega \nonumber\\
&\leq  \frac{2l}{N^{\alpha}\delta} \int_{\Rd}|\widehat{f}(\omega)|^2  (1+|\omega|^2)^s \,\mathrm d\omega\nonumber = \frac{2l}{N^{\alpha}\delta} \| f\|_{H^s}^2 \nonumber,
\end{align}
where \eqref{flksdjfdkljkfdsj13345511} is by \eqref{skfkhdjkwhejkhfhjf}, and  the last inequality follows from $|\omega| \leq (1+|\omega|^2)^s$, for  $\omega \in \Rd$ and $s\in [1/2,\infty)$. This establishes \eqref{here2} with $$C_{2,s}:= (2l)  \,\delta^{-1}\| f\|_{H^s}^2$$ and thereby completes the proof.

\section{Proposition 1}\label{app:prop1}
\begin{proposition}\label{prop1}
Let $\Omega$ be the module-sequence \eqref{mods}. Then,
\begin{equation}\label{main_eq:decomp}
A_\Omega^N\|f\|_2^2 \leq\sum_{n=0}^{N-1}||| \Phi^n_\Omega(f)|||^2+W_N(f)\leq B_\Omega^N\|f\|_2^2,
\end{equation}
for all $f\in L^2(\Rd)$ and all $N\geq 1$, where $A_\Omega^N=\prod_{k=1}^N \min\{1,A_k\}$ and $B_\Omega^N=\prod_{k=1}^N \max\{1,B_k\}.
$
\end{proposition}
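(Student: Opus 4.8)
The plan is to reduce the $N$-layer statement to a single per-layer inequality and then telescope. First I would record the \emph{per-layer} estimate obtained by applying the frame condition \eqref{PFP} of layer $n+1$ (with bounds $A_{n+1},B_{n+1}$) to each feature map $U[q]f$, $q\in\Lambda^n$. Since the modulus leaves the $L^2$-norm invariant, $\|(U[q]f)\ast g_{\lambda_{n+1}}\|_2=\|U[(q,\lambda_{n+1})]f\|_2$, so summing the frame inequality over all $q\in\Lambda^n$ and using $\Lambda^{n+1}=\{(q,\lambda_{n+1})\mid q\in\Lambda^n,\,\lambda_{n+1}\in\Lambda_{n+1}\}$ yields, with the convention $W_0(f):=\|f\|_2^2$ (consistent with $\Lambda^0=\{e\}$, $U[e]f=f$),
\begin{equation*}
A_{n+1}W_n(f)\;\le\; |||\Phi^n_\Omega(f)|||^2+W_{n+1}(f)\;\le\; B_{n+1}W_n(f),\qquad n\ge 0 .
\end{equation*}
Here the middle term splits exactly into the output-filter contribution $|||\Phi^n_\Omega(f)|||^2=\sum_{q\in\Lambda^n}\|(U[q]f)\ast\chi_{n+1}\|_2^2$ and the propagated energy $W_{n+1}(f)$. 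This is the only place the frame property enters; note in particular that Assumption \ref{ass} is \emph{not} needed for this proposition.

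Next I would set $P_N(f):=\sum_{n=0}^{N-1}|||\Phi^n_\Omega(f)|||^2+W_N(f)$ and derive a clean one-step recursion in $N$. Splitting off the last layer gives $P_N(f)=\sum_{n=0}^{N-2}|||\Phi^n_\Omega(f)|||^2+\big(|||\Phi^{N-1}_\Omega(f)|||^2+W_N(f)\big)$, and bounding the bracket by the per-layer estimate above yields $P_N(f)\le \sum_{n=0}^{N-2}|||\Phi^n_\Omega(f)|||^2+B_N\,W_{N-1}(f)$. The key manoeuvre is then to replace $B_N$ by $\max\{1,B_N\}$ and to use $\max\{1,B_N\}\ge 1$ together with the nonnegativity of the already-extracted feature energies $|||\Phi^n_\Omega(f)|||^2$ to pull a single factor out in front of \emph{all} the terms, i.e. $P_N(f)\le \max\{1,B_N\}\,P_{N-1}(f)$. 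The lower bound is perfectly symmetric: from $A_N\ge\min\{1,A_N\}$ and $\min\{1,A_N\}\le 1$ one obtains $P_N(f)\ge \min\{1,A_N\}\,P_{N-1}(f)$.

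Finally I would iterate the two recursions down to the base case $N=1$, where the per-layer estimate with $n=0$ gives $A_1\|f\|_2^2\le P_1(f)\le B_1\|f\|_2^2$, hence $\min\{1,A_1\}\|f\|_2^2\le P_1(f)\le\max\{1,B_1\}\|f\|_2^2$. Telescoping then produces exactly $A_\Omega^N\|f\|_2^2\le P_N(f)\le B_\Omega^N\|f\|_2^2$ with $A_\Omega^N=\prod_{k=1}^N\min\{1,A_k\}$ and $B_\Omega^N=\prod_{k=1}^N\max\{1,B_k\}$, which is the claim.

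I expect the only subtle point---and the reason the bounds feature $\max\{1,B_k\}$ and $\min\{1,A_k\}$ rather than $B_k$ and $A_k$---to be this factoring step: when $B_N<1$ one cannot bound $\sum_{n\le N-2}|||\Phi^n_\Omega(f)|||^2+B_N W_{N-1}(f)$ by $B_N\,P_{N-1}(f)$, since the feature energies extracted in earlier layers must not be shrunk; replacing $B_N$ by $\max\{1,B_N\}\ge 1$ (and dually $A_N$ by $\min\{1,A_N\}\le 1$) is exactly what makes the clean recursion valid. Everything else---the norm invariance of the modulus, the reindexing of paths, and the telescoping product---is routine.
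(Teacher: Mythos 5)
Your proposal is correct and follows essentially the same route as the paper: the paper's inductive step is exactly your recursion $\min\{1,A_N\}P_{N-1}(f)\leq P_N(f)\leq \max\{1,B_N\}P_{N-1}(f)$, obtained by applying the layer-$N$ frame inequality to each $U[q]f$, $q\in\Lambda^{N-1}$, and absorbing the already-extracted feature energies via $\min\{1,A_N\}\leq 1\leq\max\{1,B_N\}$, followed by telescoping from the base case $N=1$. Your observation that Assumption~\ref{ass} is not needed here also matches the paper's own remark.
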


\begin{proof}
We proceed by induction over $N$ and  start with  the base case $N=1$ which follows directly from the frame property \eqref{PFP} according to 
\begin{align}
A_\Omega^1\| f\|_2^2&=\min\{1,A_1\}\|f\|_2^2\leq A_1\|f\|_2^2\nonumber\\
&\leq \underbrace{\| f\ast \chi_1\|_2^2+\sum_{\lambda_1\in \Lambda_1}\|f\ast g_{\lambda_1}\|_2^2}_{=\,||| \Phi^0_\Omega(f)|||^2+W_1(f)}\leq B_1\|f\|_2^2\nonumber\\
&\leq \max\{1,B_1\}\|f\|_2^2= B_\Omega^1\| f \|_2^2\nonumber,\hspace{0.75cm}\forall \, f\in L^2(\Rd). 
\end{align}
The inductive step is obtained as follows. Let $N>1$  and suppose that \eqref{main_eq:decomp} holds for $N-1$, i.e., 
\begin{align}
A_\Omega^{N-1}\|f\|_2^2 &\leq\sum_{n=0}^{N-2}||| \Phi^n_\Omega(f)|||^2+W_{N-1}(f)\nonumber\\
&\leq B_\Omega^{N-1}\|f\|_2^2, \quad \forall f\in L^2(\Rd).\label{indh2}
\end{align}
 We start by noting that
\begin{align}
&\sum_{n=0}^{N-1}||| \Phi^n_\Omega(f)|||^2+W_N(f)=\sum_{n=0}^{N-2}||| \Phi^n_\Omega(f)|||^2\nonumber\\
&+\sum_{q\,\in\, \Lambda^{N-1}}\|(U[q]f)\ast \chi_N \|_2^2+\sum_{q\,\in\, \Lambda^N}\|U[q]f\|_2^2\label{yf},
\end{align}
and proceed by examining the third term on the RHS of \eqref{yf}. Every path 
\begin{equation*}\label{eq:ee1}
\tilde{q} \in \Lambda^{N}=\underbrace{\Lambda_1\times\text{\mydots}\times\Lambda_{N-1}}_{=\Lambda^{N-1}}\times\Lambda_{N}
\end{equation*}
of length $N$ can  be decomposed into a path $q \in \Lambda^{N-1}$ of length $N-1$ and an index $\lambda_N \in \Lambda_N$ according to $\tilde{q}=(q,\lambda_N)$. Thanks to \eqref{aaaab} we have $U[\tilde{q}]=U[(q,\lambda_N)]=U_N[\lambda_N]U[q]$, which yields 
\begin{align}
&\sum_{q\,\in\, \Lambda^N} \|U[q]f\|_2^2=\sum_{q\,\in\, \Lambda^{N-1}}\sum_{\lambda_N\in\Lambda_N}\|(U[q]f)\ast g_{\lambda_N}\|_2^2\label{yf1}.
\end{align}
Substituting the third term on the RHS of \eqref{yf} by \eqref{yf1} and rearranging terms, we obtain
\begin{align}
&\sum_{n=0}^{N-1}||| \Phi^n_\Omega(f)|||^2+W_N(f)=\sum_{n=0}^{N-2}||| \Phi^n_\Omega(f)|||^2\nonumber\\
&+\sum_{q\,\in\, \Lambda^{N-1}}\Big(\underbrace{\|(U[q]f)\ast \chi_N \|_2^2+\sum_{\lambda_N\in \Lambda_N}\|(U[q]f)\ast g_{\lambda_N}\|_2^2}_{=:\rho_N(U[q]f)}\Big)\nonumber.
\end{align}
Thanks to the frame property \eqref{PFP} and $U[q]f \in L^2(\Rd)$, which is by \cite[Eq. 16]{Wiatowski_journal}, we have
$
A_N\|U[q]f\|_2^2\leq\rho_N(U[q]f)\leq B_N\| U[q]f\|_2^2,$
and thus
\begin{align}
&\min\{1,A_N\}\Big(\sum_{n=0}^{N-2}||| \Phi^n_\Omega(f)|||^2+W_{N-1}(f)\Big)\label{a1a1}\\
&\leq\sum_{n=0}^{N-1}||| \Phi^n_\Omega(f)|||^2+W_N(f)\nonumber\\
&\leq \max\{1,B_N\}\Big(\sum_{n=0}^{N-2}||| \Phi^n_\Omega(f)|||^2+W_{N-1}(f)\Big)\label{b1b1},
\end{align}
where we employed the identity $\sum_{q\,\in\, \Lambda^{N-1}}\| U[q]f\|_2^2=W_{N-1}(f)$. Invoking the induction hypothesis \eqref{indh2} in \eqref{a1a1} and \eqref{b1b1} and noting that $A_{\Omega}^{N}=\min\{1,A_N\}A_{\Omega}^{N-1}$ and $B_{\Omega}^{N}=\max\{1,B_N\}B_{\Omega}^{N-1}$ 
completes the proof. 
\end{proof}

\section{Proposition 2}\label{bum}
\begin{proposition}\label{prop2}
Let $\widehat{r_l}:\Rd\to \R$, $\widehat{r_l}(\omega):=(1-|\omega|)^{l}_{+}$, with $l>\lfloor d/2 \rfloor+1$, and let $\alpha:=\log_2(\sqrt{d/(d-1/2)})$. Then, we have 
\begin{equation}\label{elfknf}
\lim\limits_{N\to \infty} \int_{\Rd}\big|\widehat{f}(\omega)\big|^2\Big(1-\Big|\widehat{r_l}\Big(\frac{\omega}{N^{\alpha}\delta}\Big)\Big|^2\Big)\mathrm d\omega=0,
\end{equation}
for all  $f\in L^2(\Rd)$.
\end{proposition}
\begin{proof}
We start by setting
$$
d_{N,\alpha,\delta}(\omega):=\Big(1-\Big|\widehat{r_l}\Big(\frac{\omega}{N^{\alpha}\delta}\Big)\Big|^2\Big), \quad \omega \in \Rd, \, N\in \mathbb{N}.
$$
Let $f\in L^2(\Rd)$. For every $\varepsilon>0$ there exists $R>0$ such that 
$$
\int_{\Rd\backslash \overline{B_{R}(0)}} |\widehat{f}(\omega)|^2\mathrm d\omega\leq\varepsilon/2,
$$
where $\overline{B_{R}(0)}$ denotes the closed ball of radius $R$ centered at the origin. Next, we employ Dini's theorem  \cite[Theorem 7.3]{Fubini} to show that $(d_{N,\alpha,\delta})_{N\in \mathbb{N}}$ converges to zero uniformly on $\overline{B_{R}(0)}$. To this end, we note that (i)  $d_{N,\alpha,\delta}$ is continuous as a composition of continuous functions, (ii) $z_0(\omega)=0$, for $\omega\in \Rd$, is clearly continuous, (iii) $d_{N,\alpha,\delta}(\omega)\geq d_{N+1,\alpha,\delta}(\omega)$, for $\omega \in \Rd$ and $N\in \mathbb{N}$, and (iv) $d_{N,\alpha,\delta}$ converges to $z_0$  pointwise on  $\overline{B_{R}(0)}$, i.e., $$\lim\limits_{N\to \infty} d_{N,\alpha,\delta}(\omega)=z_0(\omega)=0, \quad \forall \omega \in \Rd.$$ This allows us to conclude that there exists $N_0\in \mathbb{N}$ (that depends on $\eps$) such that $d_{N,\alpha,\delta}(\omega)\leq \frac{\varepsilon}{2\| f\|_2^2},$ for $\omega \in \overline{B_{R}(0)}$ and  $N\geq N_0,$ and we therefore get
\begin{align*}
\int_{\Rd}\big|\widehat{f}(\omega)\big|^2d_{N,\alpha,\delta}(\omega)\mathrm d\omega=&\int_{\Rd\backslash \overline{B_R(0)}}\big|\widehat{f}(\omega)\big|^2\underbrace{d_{N,\alpha,\delta}(\omega)}_{\leq 1}\mathrm d\omega\\
+&\int_{ \overline{B_R(0)}}\big|\widehat{f}(\omega)\big|^2\underbrace{d_{N,\alpha,\delta}(\omega)}_{\leq \frac{\varepsilon}{2\| f\|_2^2}}\mathrm d\omega\\
\leq& \frac{\varepsilon}{2}+\frac{\varepsilon}{2\| f\|_2^2}\| \widehat{f}\|_2^2= \varepsilon,
\end{align*}
where in the last step we employed Parseval's formula. Since $\varepsilon>0$ was arbitrary, we have  \eqref{elfknf}, which completes the proof.
\end{proof}

\section{Proof of Theorem \ref{thm2}}\label{majabum}
We start by establishing \eqref{main_h:wave} in statement i). The structure of the proof is  similar to that of the proof of statement i) in Theorem \ref{thm3}, specifically we perform induction over $N$. Starting with the base case $N=1$, we first note that $\text{supp}(\widehat{\psi}\,)\subseteq[1/2,2],$  $\widehat{g_j}(\omega)=\widehat{\psi}(2^{-j}\omega),$ $j\geq 1,$ and $\widehat{g_j}(\omega)=\widehat{\psi}(-2^{-|j|}\omega),$ $j\leq -1,$ all by assumption, imply 
\begin{equation}\label{ee2}
 \text{supp}(\widehat{g_j})= \text{supp}(\widehat{\psi}(2^{-j}\cdot))\subseteq[2^{j-1},2^{j+1}], \quad j\geq 1,
\end{equation}
and 
\begin{equation}\label{ee3}
 \text{supp}(\widehat{g_j})=\text{supp}(\widehat{\psi}(-2^{-|j|}\cdot))\subseteq[-2^{|j|+1},-2^{|j|-1}],
\end{equation}
for $j\leq -1.$ We then get 
\begin{align}
W_1(f)&=\sum_{j\in \mathbb{Z}\backslash\{ 0\}}\|f\ast g_{j}\|^2_2\nonumber\\
&=\int_{\R}\sum_{j\in \mathbb{Z}\backslash\{ 0\}}|\widehat{g_{j}}(\omega)|^2|\widehat{f}(\omega)|^2\mathrm d\omega\label{ay1}\\
&=\int_{\R}\sum_{j\geq 1}|\widehat{\psi}(2^{-j}\omega)|^2|\widehat{f}(\omega)|^2\mathrm d\omega\nonumber\\&+\int_{\R}\sum_{j\leq -1}|\widehat{\psi}(-2^{-|j|}\omega)|^2|\widehat{f}(\omega)|^2\mathrm d\omega\nonumber\\
&=\int_{1}^\infty\sum_{j\geq 1}|\widehat{\psi}(2^{-j}\omega)|^2|\widehat{f}(\omega)|^2\mathrm d\omega\nonumber\\&+\int_{-\infty}^{-1}\sum_{j\leq -1}|\widehat{\psi}(-2^{-|j|}\omega)|^2|\widehat{f}(\omega)|^2\mathrm d\omega\label{my1}\\
&\leq \int_{\R\backslash [-1,1]}|\widehat{f}(\omega)|^2\mathrm d\omega\label{ay3aaa}\\
&\leq  \int_{\R}\big|\widehat{f}(\omega)\big|^2(1-|\widehat{r_l}(\omega)|^2)\mathrm d\omega\label{ay3},
\end{align}
where \eqref{ay1} is by Parseval's formula, and \eqref{my1} is thanks to \eqref{ee2} and \eqref{ee3}. The inequality in \eqref{ay3aaa} is owing to \eqref{eq:Waveletcondi}, and \eqref{ay3} is due to $\text{supp}(\widehat{r_l})\subseteq [-1,1]$ and $0\leq \widehat{r_l}(\omega)\leq 1$, for $\omega\in \R$.  The inductive step is obtained as follows.  Let $N>1$  and suppose that \eqref{main_h:wave} holds for $N-1$, i.e., 
\begin{equation}\label{ind_h:wave}
W_{N-1}(f)\leq \int_{\R}\big|\widehat{f}(\omega)\big|^2\Big(1-\Big|\widehat{r_l}\Big(\frac{\omega}{(5/3)^{N-2}}\Big)\Big|^2\Big)\mathrm d\omega,
\end{equation}
for   $f\in L^2(\R)$.  We start by noting that every path $\tilde{q} \in (\mathbb{Z}\backslash\{ 0\})^N$ of length $N$ can be decomposed into a path $q\,\in\,(\mathbb{Z}\backslash\{ 0\})^{N-1}$ of length $N-1$ and an index $j \in \mathbb{Z}\backslash\{ 0\}$ according to $\tilde{q}=(j, q)$. Thanks to \eqref{aaaab} we have  $U[\tilde{q}]=U[(j,q)]=U[q]U_1[j]$, which yields 
\begin{align}
W_N(f)&=\sum_{j\in\mathbb{Z}\backslash\{ 0\}}\,\sum_{q\,\in\, (\mathbb{Z}\backslash\{ 0\})^{N-1}}||U[q](U_1[j]f)||_2^2\nonumber\\
&=\sum_{j\in\mathbb{Z}\backslash\{ 0\}} W_{N-1}(U_1[j]f).\label{ayf1455}
\end{align}
We proceed by examining the term $W_{N-1}(U_1[j]f)$ inside the sum on the RHS of \eqref{ayf1455}. Invoking the induction hypothesis \eqref{ind_h:wave} and employing Parseval's formula, we get 
\begin{align}
 &W_{N-1}(U_1[j]f)\nonumber\\
 &\leq \int_{\R}\big|\widehat{U_1[j]f}(\omega)\big|^2\Big(1-\Big|\widehat{r_l}\Big(\frac{\omega}{(5/3)^{N-2}}\Big)\Big|^2\Big)\mathrm d\omega\nonumber\\
 &=\big( \| U_1[j]f\|_2^2-\|(U_1[j]f)\ast r_{l,N-2}\|_2^2\big)\nonumber\\
&=\big( \| f\ast g_{j}\|_2^2-\||f\ast g_{j}|\ast r_{l,N-2}\|_2^2\big),\label{aeq:f898}
\end{align}
where $r_{l,N-2}$ is the inverse Fourier transform of $\widehat{r_l}\big( \frac{\omega}{(5/3)^{N-2}}\big)$.  Next, we note that $\widehat{r_l}\big( \frac{\omega}{(5/3)^{N-2}}\big)$ is a positive definite radial basis function  \cite[Theorem 6.20]{Wendland} and hence by \cite[Theorem 6.18]{Wendland} $r_{l,N-2}(x)\geq 0$, for $x \in \R$. Furthermore, it follows from Lemma \ref{lemMallat} in Appendix \ref{app:prop2} that \begin{align}
&\||f\ast g_{j}|\ast r_{l,N-2}\|_2^2\geq \,\|f\ast g_{j}\ast (M_{\nu_{j}}r_{l,N-2})\|_2^2,\label{ayf6}
\end{align}
for  $\{\nu_{j}\}_{j \in \mathbb{Z}\backslash\{ 0\}}\subseteq \R$. Choosing the modulation factors  $\{\nu_{j}\}_{j \in \mathbb{Z}\backslash\{ 0\}}\subseteq \R$ appropriately (see \eqref{amodwei} below) will be  key in establishing the inductive step. 
Using  \eqref{aeq:f898} and \eqref{ayf6} to upper-bound the term $W_{N-1}(U_1[j]f)$ inside the sum on the RHS of \eqref{ayf1455} yields
\begin{align}
W_{N}(f)\leq& \sum_{j \in \Z\backslash\{0\}}\Big( \| f\ast g_{j}\|_2^2-\|f\ast g_{j}\ast (M_{\nu_{j}}r_{l,N-2})\|_2^2\Big)\nonumber\\
=&\ \int_{\R}|\widehat{f}(\omega)|^2  h_{l,N-2}(\omega)\mathrm d\omega,\label{aeq:f8998}
\end{align}
where 
\begin{align}
&h_{l,N-2}(\omega)\nonumber\\
&:=\sum_{j \in \mathbb{Z}\backslash\{0\}}|\widehat{g_{j}}(\omega)|^2\Big(1-\Big|\widehat{r_l}\Big(\frac{\omega-\nu_{j}}{(5/3)^{N-2}}\Big)\Big|^2\Big).\label{ekhfkjehfkrhkh}
\end{align}
In \eqref{aeq:f8998} we employed Parseval's formula together with $\widehat{M_\omega f}=T_{\omega} \widehat{f}$, for $f\in L^2(\R)$ and $\omega\in\R$. The key step is now to establish---by judiciously choosing $\{\nu_{j}\}_{j \in \Z\backslash\{ 0\}}\subseteq \R$---the upper bound 
\begin{equation}\label{atoshoww}
h_{l,N-2}(\omega)\leq\Big(1-\Big|\widehat{r_l}\Big(\frac{\omega}{(5/3)^{N-1}}\Big)\Big|^2\Big),\hspace{0.75cm} \forall \, \omega \in \R,
\end{equation}
which then yields \eqref{main_h:wave} and thereby completes the proof. To this end, we set $\eta:=\frac{4}{5}$, 
\begin{equation}\label{amodwei}
\nu_j:=2^{j}\eta, \quad j\geq 1, \hspace{.75cm} \nu_j:=-2^{|j|}\eta, \quad j\leq-1,
\end{equation}
and note that it suffices to prove \eqref{atoshoww} for $\omega\geq 0$, as
\begin{align}
&h_{l,N-2}(-\omega)=\sum_{j \in \mathbb{Z}\backslash\{0\}}|\widehat{g_{j}}(-\omega)|^2\Big(1-\Big|\widehat{r_l}\Big(\frac{-\omega-\nu_{j}}{(5/3)^{N-2}}\Big)\Big|^2\Big)\nonumber\\
&=\sum_{j\leq -1}|\widehat{g_{j}}(-\omega)|^2\Big(1-\Big|\widehat{r_l}\Big(\frac{\omega+\nu_{j}}{(5/3)^{N-2}}\Big)\Big|^2\Big)\label{i1}\\
&=\sum_{j\geq 1}|\widehat{g_{-j}}(-\omega)|^2\Big(1-\Big|\widehat{r_l}\Big(\frac{\omega+\nu_{-j}}{(5/3)^{N-2}}\Big)\Big|^2\Big)\nonumber\\
&=\sum_{j\geq 1}|\widehat{g_{j}}(\omega)|^2\Big(1-\Big|\widehat{r_l}\Big(\frac{\omega-\nu_{j}}{(5/3)^{N-2}}\Big)\Big|^2\Big)\label{i3}\\
&=h_{l,N-2}(\omega), \quad \forall \omega \geq0\label{ii3}.
\end{align}
Here, \eqref{i1} is thanks to $\widehat{g_j}(-\omega)=0$, for $j\geq1$ and $\omega\geq0$, which is by \eqref{ee2}, and \eqref{ii3} is owing to $\widehat{g_j}(\omega)=0$, for $j\leq-1$ and $\omega\geq0$, which is by \eqref{ee3}. Moreover, in \eqref{i1} we used  $\widehat{r_l}(-\omega)=\widehat{r_l}(\omega)$, for $\omega \in \R$, and \eqref{i3} is thanks to 
\begin{align*}
\widehat{g_{-j}}(-\omega)&=\widehat{\psi}(2^{-|-j|}\omega)=\widehat{\psi}(2^{-j}\omega)=\widehat{g_j}(\omega), 
\end{align*}
for $\omega \in \R$ and $j\geq 1$, as well as $$\nu_{-j}=-2^{j}\eta=-\nu_j, \quad j\geq 1.$$ Now, let $\omega \in [0,1]$, and note that 
\begin{align}\label{one}
h_{l,N-2}(\omega)&=\sum_{j \in \mathbb{Z}\backslash\{0\}}|\widehat{g_{j}}(\omega)|^2\Big(1-\Big|\widehat{r_l}\Big(\frac{\omega-\nu_{j}}{(5/3)^{N-2}}\Big)\Big|^2\Big)=0\nonumber\\
&\leq 1-\Big|\widehat{r_l}\Big(\frac{\omega}{(5/3)^{N-1}}\Big)\Big|^2, \quad \ \forall N\geq 2,
\end{align}
where the second equality in \eqref{one} is simply a consequence of $\widehat{g_j}(\omega)=0$, for $j\in \Z\backslash\{ 0\}$ and $\omega \in [0,1]$, which, in turn, is by  \eqref{ee2} and \eqref{ee3}. The inequality in \eqref{one}  is thanks to $0\leq \widehat{r_l}(\omega)\leq 1$, for $\omega\in \R$. Next, let $\omega \in [1,2]$. Then, we have
\begin{align}
&h_{l,N-2}(\omega)=|\widehat{g_1}(\omega)|^2\Big(1-\Big|\widehat{r_l}\Big(\frac{\omega-2\eta}{(5/3)^{N-2}}\Big)\Big|^2\Big)\label{flfkf}\\
&\leq |\widehat{g_1}(\omega)|^2\Big(1-\Big|\widehat{r_l}\Big(\frac{\omega-2\eta}{(5/3)^{N-2}}\Big)\Big|^2\Big)\nonumber\\
&+\underbrace{\big(1-|\widehat{g_1}(\omega)|^2\big)}_{\geq 0}\underbrace{\Big(1-\Big|\widehat{r_l}\Big(\frac{\omega-\eta}{(5/3)^{N-2}}\Big)\Big|^2\Big)}_{\geq 0}\label{slkls}\\
&= 1-\Big|\widehat{r_l}\Big(\frac{\omega-\eta}{(5/3)^{N-2}}\Big)\Big|^2+|\widehat{g_1}(\omega)|^2\Big(\Big|\widehat{r_l}\Big(\frac{\omega-\eta}{(5/3)^{N-2}}\Big)\Big|^2\nonumber\\
&-\Big|\widehat{r_l}\Big(\frac{\omega-2\eta}{(5/3)^{N-2}}\Big)\Big|^2\Big),\label{abs1}
\end{align}
where \eqref{flfkf} is thanks to $\widehat{g_j}(\omega)=0$, for $j\in \Z\backslash\{ 0,1\}$ and $\omega\in [1,2]$, which, in turn, is by \eqref{ee2} and \eqref{ee3}. Moreover, \eqref{slkls} is owing to $$|\widehat{g_1}(\omega)|^2\in [0,1],$$ which, in turn, is by \eqref{eq:Waveletcondi} and $0\leq\widehat{r_l}(\omega)\leq1$, for $\omega\in\R$. Next, fix $j\geq 2$ and  let $\omega \in [2^{j-1},2^{j}]$. Then, we have
\begin{align}
&h_{l,N-2}(\omega)=|\widehat{g_j}(\omega)|^2\Big(1-\Big|\widehat{r_l}\Big(\frac{\omega-2^{j}\eta}{(5/3)^{N-2}}\Big)\Big|^2\Big)\nonumber\\
&+\underbrace{|\widehat{g_{j-1}}(\omega)|^2}_{=(1-|\widehat{g_j}(\omega)|^2-|\widehat{\phi}(\omega)|^2)}\underbrace{\Big(1-\Big|\widehat{r_l}\Big(\frac{\omega-2^{j-1}\eta}{(5/3)^{N-2}}\Big)\Big|^2\Big)}_{\geq 0}\label{kerkjhferkh}\\
&\leq 1-\Big|\widehat{r_l}\Big(\frac{\omega-2^{j-1}\eta}{(5/3)^{N-2}}\Big)\Big|^2+|\widehat{g_j}(\omega)|^2\Big(\Big|\widehat{r_l}\Big(\frac{\omega-2^{j-1}\eta}{(5/3)^{N-2}}\Big)\Big|^2\nonumber\\
&-\Big|\widehat{r_l}\Big(\frac{\omega-2^{j}\eta}{(5/3)^{N-2}}\Big)\Big|^2\Big),\label{abs2}
\end{align}
where \eqref{kerkjhferkh} is thanks to i) $\widehat{g_{j'}}(\omega)=0$, for $j'\in \Z\backslash\{ 0,j,j-1\}$ and $\omega\in [2^{j-1},2^j]$, which, in turn, is by \eqref{ee2} and \eqref{ee3}, and ii) 
\begin{equation}\label{lsnkln}
|\widehat{\phi}(\omega)|^2+|\widehat{g_{j-1}}(\omega)|^2+|\widehat{g_j}(\omega)|^2=1,
\end{equation}
for $\omega\in [2^{j-1},2^{j}],$ which is a consequence of the Littlewood-Paley condition \eqref{eq:Waveletcondi} and of  \eqref{ee2} and \eqref{ee3}. It follows from \eqref{abs1} and \eqref{abs2} that for every $j\geq1$, we have
\begin{align}
&h_{l,N-2}(\omega)\leq 1-\Big|\widehat{r_l}\Big(\frac{\omega-2^{j-1}\eta}{(5/3)^{N-2}}\Big)\Big|^2\nonumber\\
&+|\widehat{g_j}(\omega)|^2\underbrace{\Big(\Big|\widehat{r_l}\Big(\frac{\omega-2^{j-1}\eta}{(5/3)^{N-2}}\Big)\Big|^2-\Big|\widehat{r_l}\Big(\frac{\omega-2^{j}\eta}{(5/3)^{N-2}}\Big)\Big|^2\Big)}_{=:s(\omega)},\nonumber
\end{align}
for $\omega\in [2^{j-1},2^j]$. Next, we divide the interval $[2^{j-1},2^{j}]$ into two intervals, namely $I_L:=[2^{j-1},\frac{3}{5}2^{j}]$ and $I_R:=[\frac{3}{5}2^{j},2^j]$, and note that $s(\omega)\geq0$, for $\omega \in I_L$, and $s(\omega)\leq0$, for $\omega \in I_R$, as $\widehat{r_l}$ is monotonically decreasing in $|\omega|$ and $|\omega-2^j\eta|\geq |\omega-2^{j-1}\eta|$, for  $\omega \in I_L$, and $|\omega-2^j\eta|\leq |\omega-2^{j-1}\eta|$, for   $\omega \in I_R$,
respectively (see Figure \ref{fig:h1}).
\begin{figure}
\begin{center}
\begin{tikzpicture}
	\begin{scope}[scale=1]
	\draw[->] (-.25,0) -- (5.5,0) node[right] {$\omega$};
	\draw[->] (0,-.5) -- (0,2.1) node[above] {};

		\draw (1 cm,2pt) -- (1 cm,-2pt) node[anchor=north] {\small$2^{j-1}$};
		\draw (1.8 cm,2pt) -- (1.8 cm,-2pt) node[anchor=north] {\small$\frac{3}{5}2^{j}$};
		\draw (5 cm,2pt) -- (5 cm,-2pt) node[anchor=north] {\small$2^{j}$};
			
			\draw (1.6 cm,1.1cm) -- (1.5 cm,1.4cm) node[anchor=south] {\small$h_3$};
			\draw (3.15 cm,1.2cm) -- (3.25 cm,.9cm) node[anchor=north] {\small$h_2$};			
			\draw (4.75 cm,.55cm)--(4.65 cm,.85cm)  node[anchor=south] {\small$h_1$};

        \draw (1,0) -- (5,0);
	\draw[thick,dashed, domain=.5:5.5,samples=100] plot (\x,{.4*abs(\x-.2)});
	\draw[thick,smooth, domain=.5:5.5,samples=100] plot (\x,{.4*abs(\x-3.4)});
	\draw[thick,dotted, domain=.5:5.5,samples=100] plot (\x,{.4*(1.8+0.6*\x)});
	\end{scope}
\end{tikzpicture}
\vspace{0.2cm}
\end{center}
\caption{The functions $h_1(\omega):=|\omega-2^j\eta|$ (solid line), $h_2(\omega):=|\omega-2^{j-1}\eta|$ (dashed line), and $h_3(\omega):=\frac{3}{5}\omega$ (dotted line) satisfy $h_2\leq h_1\leq h_3$ on $I_L=[2^{j-1},\frac{3}{5}2^j]$ and $h_1\leq h_2\leq h_3$ on $I_R=[\frac{3}{5}2^j,2^j]$.}
\label{fig:h1}
\end{figure}
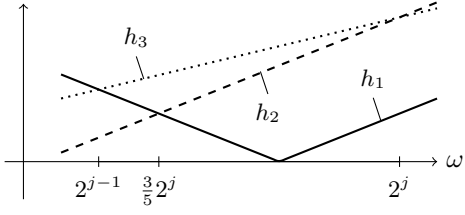
 For  $\omega\in I_L$, we  therefore have
\begin{align*}
h_{l,N-2}(\omega)&\leq1-\Big|\widehat{r_l}\Big(\frac{\omega-2^{j-1}\eta}{(5/3)^{N-2}}\Big)\Big|^2\nonumber+\underbrace{|\widehat{g_j}(\omega)|^2}_{\in \,[0,1]}\underbrace{s(\omega)}_{\geq 0}\\
&\leq 1-\Big|\widehat{r_l}\Big(\frac{\omega-2^{j-1}\eta}{(5/3)^{N-2}}\Big)\Big|^2\nonumber+s(\omega)\\
&= 1-\Big|\widehat{r_l}\Big(\frac{\omega-2^{j}\eta}{(5/3)^{N-2}}\Big)\Big|^2 \leq1-\Big|\widehat{r_l}\Big(\frac{\omega}{(5/3)^{N-1}}\Big)\Big|^2,
\end{align*}
where $|\widehat{g_j}(\omega)|^2\in [0,1]$ follows from \eqref{lsnkln}, and the last inequality is a consequence of $
|\omega-2^j\eta|\leq \frac{3\omega}{5}$, for  $\omega \in I_L$, see Figure \ref{fig:h1}. For $\omega\in I_R$, we have
\begin{align*}
&h_{l,N-2}(\omega)\leq1-\Big|\widehat{r_l}\Big(\frac{\omega-2^{j-1}\eta}{(5/3)^{N-2}}\Big)\Big|^2\nonumber+\underbrace{|\widehat{g_j}(\omega)|^2}_{\in \,[0,1]}\underbrace{s(\omega)}_{\leq 0}\\
&\leq 1-\Big|\widehat{r_l}\Big(\frac{\omega-2^{j-1}\eta}{(5/3)^{N-2}}\Big)\Big|^2\nonumber \leq 1-\Big|\widehat{r_l}\Big(\frac{\omega}{(5/3)^{N-1}}\Big)\Big|^2,
\end{align*}
where the last inequality now follows from 
$
|\omega-2^{j-1}\eta|\leq \frac{3\omega}{5}$, for  $\omega \in I_R$, see Figure \ref{fig:h1}. This completes the proof of  \eqref{main_h:wave}.

Next, we establish \eqref{main_sow1}. The proof is very similar to that of statement  ii) in Theorem \ref{thm2} in Section \ref{Sob1Ap}. We start by noting that \eqref{main_sow1} amounts to the existence of constants $C_{1,s}, C_{2,s}>0$ (that are independent of $N$) such that
\begin{equation}\label{here1a}
W_N(f)\leq  C_{1,s}(5/3)^{-2sN}, \quad \forall s\in (0, 1/2), \ \forall N\geq 1,
\end{equation}
and
\begin{equation}\label{here2a}
W_N(f)\leq  C_{2,s} (5/3)^{-N}, \quad \forall s\in [1/2,\infty), \ \forall N\geq 1.
\end{equation}
The key idea of the proof of  \eqref{here1a} is to upper-bound the integral on the RHS of \eqref{main_h:wave} according to 
\begin{align}
&\int_{\R}|\widehat{f}(\omega)|^2 \Big(1-\Big|\widehat{r_l}\Big(\frac{\omega}{(5/3)^{N-1}}\Big)\Big|^2\Big)\mathrm d\omega\nonumber\\
&\leq \int_{\R}|\widehat{f}(\omega)|^2  \min\bigg\{1, \frac{2l\,|\omega|}{(5/3)^{N-1}} \bigg\}\mathrm d\omega\label{flksdjfdkljkfdsj111a}\\
&= \int_{B_{\tau}(0)}\hspace{-0.25cm}|\widehat{f}(\omega)|^2 \frac{2l\,|\omega|}{(5/3)^{N-1}}\mathrm d\omega+ \int_{\R \backslash B_{\tau}(0)}\hspace{-0.5cm}|\widehat{f}(\omega)|^2  \mathrm d\omega\label{flksdjfdkljkfddddssj111a},
\end{align}
where $\tau:=\frac{(5/3)^{N-1} }{2l}$. Here, the inequality in \eqref{flksdjfdkljkfdsj111a} follows from \eqref{skfkhdjkwhejkhfhjf}, and \eqref{flksdjfdkljkfddddssj111a} is owing to 
$$
\min\bigg\{1, \frac{2l\,|\omega|}{(5/3)^{N-1}} \bigg\} = \begin{cases} 
\frac{2l \, |\omega|}{(5/3)^{N-1}}, &|\omega| \leq \tau,\\
1, &|\omega| > \tau.
\end{cases}
$$
Now, the first integral in \eqref{flksdjfdkljkfddddssj111a} satisfies
\begin{align}
&\int_{B_{\tau}(0)}\hspace{-0.25cm}|\widehat{f}(\omega)|^2 \frac{2l\,|\omega|}{(5/3)^{N-1}}\mathrm d\omega \nonumber\\
&= \frac{2l}{(5/3)^{N-1}}\int_{B_{\tau}(0)}\hspace{-0.25cm}|\widehat{f}(\omega)|^2|\omega|^{1-2s}|\omega|^{2s}\mathrm d\omega\nonumber\\
&\leq \frac{2l \,\tau^{1-2s}}{(5/3)^{N-1}}\int_{B_{\tau}(0)}\hspace{-0.25cm}|\widehat{f}(\omega)|^2\underbrace{|\omega|^{2s}}_{\leq (1+|\omega|^2)^s}\mathrm d\omega\label{skjshdfkjjkhfdhfjsk12234a}\\
&  \leq \Big(\frac{2l}{(5/3)^{N-1} } \Big)^{2s} \int_{B_{\tau}(0)}|\widehat{f}(\omega)|^2(1+|\omega|^2)^s\,\mathrm d\omega \label{djkahdajkshjksahjshjsha},
\end{align}
where \eqref{skjshdfkjjkhfdhfjsk12234a} is owing to $|\omega| \mapsto (1+|\omega|)^{1-2s}$ monotonically increasing in $|\omega|$ for $s\in (0,1/2)$. For the second integral in \eqref{flksdjfdkljkfddddssj111a}, we have 
\begin{align}
&\int_{\R\backslash B_{\tau}(0)}\hspace{-0.25cm}|\widehat{f}(\omega)|^2 \mathrm d\omega = \int_{\R\backslash B_{\tau}(0)}\hspace{-0.25cm}|\widehat{f}(\omega)|^2|\omega|^{-2s}|\omega|^{2s}\mathrm d\omega\nonumber\\
&\leq\tau^{-2s} \int_{\R\backslash B_{\tau}(0)}\hspace{-0.25cm}|\widehat{f}(\omega)|^2 \underbrace{|\omega|^{2s}}_{\leq (1+|\omega|^2)^s}\mathrm d\omega\label{skffrkejshdfkjjkhfdhfjsk12234a}\\
&\leq \Big( \frac{2l}{(5/3)^{N-1} }\Big)^{2s} \int_{\R\backslash B_{\tau}(0)}|\widehat{f}(\omega)|^2(1+|\omega|^2)^s \,\mathrm d\omega\label{aslkjdslkjkdjkjla},
\end{align}
where \eqref{skffrkejshdfkjjkhfdhfjsk12234a} is thanks to $|\omega| \mapsto |\omega|^{-2s}$ monotonically decreasing in $|\omega|$ for $s\in (0,1/2)$. Inserting \eqref{djkahdajkshjksahjshjsha} and \eqref{aslkjdslkjkdjkjla} into \eqref{flksdjfdkljkfddddssj111a} establishes \eqref{here1a} with $$C_{1,s}:= (2l)^{2s}(5/3)^{2s} \| f\|_{H^s}^2.$$ Next, we show \eqref{here2a} by noting that 
\begin{align}
&\int_{\R}|\widehat{f}(\omega)|^2 \Big(1-\Big|\widehat{r_l}\Big(\frac{\omega}{(5/3)^{N-1}}\Big)\Big|^2\Big)\mathrm d\omega\nonumber\\
&\leq \int_{\R}|\widehat{f}(\omega)|^2  \min\bigg\{1, \frac{2l\,|\omega|}{(5/3)^{N-1}} \bigg\}\mathrm d\omega\label{flksdjfdkljkfdsj13345511a}\\
&\leq  \frac{2l}{(5/3)^{N-1}} \int_{\R}|\widehat{f}(\omega)|^2  |\omega| \,\mathrm d\omega \nonumber\\
&\leq  \frac{2l}{(5/3)^{N-1}} \int_{\R}|\widehat{f}(\omega)|^2  (1+|\omega|^2)^s \,\mathrm d\omega\nonumber = \frac{2l}{(5/3)^{N-1}} \| f\|_{H^s}^2 \nonumber,
\end{align}
where \eqref{flksdjfdkljkfdsj13345511a} is by \eqref{skfkhdjkwhejkhfhjf}, and  the last inequality follows from $$|\omega| \leq (1+|\omega|^2)^s, \quad \forall \,\omega \in \R, \ \forall \, s\in [1/2,\infty).$$ This establishes \eqref{here2a} with $$C_{2,s}:= 2l\,(5/3) \| f\|_{H^s}^2$$ and thereby completes the proof of statement i).
We proceed to the proof of statement ii), again, effected by induction over $N$\hspace{-0.05cm}. Specifically, we first establish \eqref{main_g:wave}  by employing the same arguments as those leading to \eqref{aeq:f8998} with $(5/3)^{N-2}$ replaced by $(3/2)^{N-2}R$. With this replacement $h_{l,N-2}$ in \eqref{ekhfkjehfkrhkh} becomes
\begin{align}
&h_{l,N-2}(\omega)\nonumber\\
&:=\sum_{k \in \mathbb{Z}\backslash\{0\}}|\widehat{g_{k}}(\omega)|^2\Big(1-\Big|\widehat{r_l}\Big(\frac{\omega-\nu_{k}}{(3/2)^{N-2}R}\Big)\Big|^2\Big),\label{ajgkejrgkjhr}
\end{align}
where, again, judicious choice of the modulation factors $\{ \nu_k\}_{k\in \mathbb{Z}\backslash\{ 0\}}\subseteq \R$ (see \eqref{lregjhekjh} below) will be key in establishing the inductive step. Here, we note that the functions $\widehat{g_k}$ in \eqref{ajgkejrgkjhr} satisfy  $\widehat{g_k}(\omega)=\widehat{g}(\omega-R(k+1)),$ for $k\geq 1$, $\widehat{g_k}(\omega)=\widehat{g}(\omega+R(|k|+1)),$ for $k\leq -1$,  both by assumption, as well as   
\begin{align}
\text{supp}(\widehat{g_k})&=\text{supp}(\widehat{g}(\cdot-R(k+1)))\nonumber\\
&\subseteq[Rk,R(k+2)], \quad k\geq1,\label{ee4}
\end{align}
and 
\begin{align}
 \text{supp}(\widehat{g_k})&= \text{supp}(\widehat{g}(\cdot+R(|k|+1)))\nonumber\\
 &\subseteq[-R(|k|+2),-R|k|],\quad k\leq-1,\label{ee5}
\end{align}
where \eqref{ee4} and \eqref{ee5} follow from $$\text{supp}(\widehat{g})\subseteq[-R,R],$$ which again is by assumption. It remains to establish the equivalent of \eqref{atoshoww}, namely 
\begin{equation}\label{qqtoshoww4}
h_{l,N-2}(\omega)\leq1-\Big|\widehat{r_l}\Big(\frac{\omega}{(3/2)^{N-1}R}\Big)\Big|^2, \quad\forall \omega\in \R.
\end{equation}
To this end, we set $\eta:=\frac{2}{3}R$, 
\begin{equation}\label{lregjhekjh}
\nu_k:=Rk+\eta, \quad \forall k\geq 1, \hspace{0.3cm}  \nu_k:=-\nu_{|k|}, \quad \forall k\leq-1,
\end{equation}
and note that it suffices to establish \eqref{qqtoshoww4} for $\omega\geq 0$, thanks to 
\begin{align}
&h_{l,N-2}(-\omega)=\sum_{k \in \mathbb{Z}\backslash\{0\}}|\widehat{g_{k}}(-\omega)|^2\Big(1-\Big|\widehat{r_l}\Big(\frac{-\omega-\nu_{k}}{(3/2)^{N-2}R}\Big)\Big|^2\Big)\nonumber\\
&=\sum_{k\leq -1}|\widehat{g_{k}}(-\omega)|^2\Big(1-\Big|\widehat{r_l}\Big(\frac{\omega+\nu_{k}}{(3/2)^{N-2}R}\Big)\Big|^2\Big)\label{qi1}\\
&=\sum_{k\geq 1}|\widehat{g_{-k}}(-\omega)|^2\Big(1-\Big|\widehat{r_l}\Big(\frac{\omega+\nu_{-k}}{(3/2)^{N-2}R}\Big)\Big|^2\Big)\nonumber\\
&=\sum_{k\geq 1}|\widehat{g_{k}}(\omega)|^2\Big(1-\Big|\widehat{r_l}\Big(\frac{\omega-\nu_{k}}{(3/2)^{N-2}R}\Big)\Big|^2\Big)\label{qi3}\\
&=h_{l,N-2}(\omega), \quad \forall \omega \geq0\label{qqi3}.
\end{align}
Here, \eqref{qi1} follows from $\widehat{g_k}(-\omega)=0$, for $k\geq1$ and $\omega\geq0$, which, in turn, is by \eqref{ee4}, and \eqref{qqi3} is owing to $\widehat{g_k}(\omega)=0$, for $k\leq-1$ and $\omega\geq0$, which is by \eqref{ee5}. Moreover, in \eqref{qi1} we used $\widehat{r_l}(-\omega)=\widehat{r_l}(\omega)$, for $\omega \in \R$, and \eqref{qi3} is thanks to  $\nu_{-k}=-\nu_k$, for $k\geq 1$, and 
\begin{align*}
\widehat{g_{-k}}(-\omega)&=\widehat{g}(-\omega+R(|\!-\!k|+1))=\widehat{g}(-(\omega-R(k+1)))\nonumber\\
&=\widehat{g}(\omega-R(k+1))=\widehat{g_k}(\omega), \quad\forall \, \omega \in \R, \ \forall \, k\geq 1,
\end{align*}
where we used $\widehat{g}(-\omega)=\widehat{g}(\omega)$, for $\omega \in \R$, which is by assumption. Now, let $\omega \in [0,R]$, and note that 
\begin{equation}\label{qone}
h_{l,N-2}(\omega)=0\leq 1-\Big|\widehat{r_l}\Big(\frac{\omega}{(3/2)^{N-1}R}\Big)\Big|^2, 
\end{equation}
for $N\geq 2,$ where the equality in \eqref{qone} is  a consequence of \eqref{ee4} and \eqref{ee5}, and the inequality  is thanks to $0\leq \widehat{r_l}(\omega)\leq 1$, for $\omega\in \R$. Next, let $\omega \in [R,2R]$. Then, we have
\begin{align}
&h_{l,N-2}(\omega)=|\widehat{g_1}(\omega)|^2\Big(1-\Big|\widehat{r_l}\Big(\frac{\omega-\nu_1}{(3/2)^{N-2}R}\Big)\Big|^2\Big)\label{qhwehjheh}\\
&\leq|\widehat{g_1}(\omega)|^2\Big(1-\Big|\widehat{r_l}\Big(\frac{\omega-\nu_1}{(3/2)^{N-2}R}\Big)\Big|^2\Big)\nonumber\\
&+\underbrace{(1-|\widehat{g_{1}}(\omega)|^2)}_{\geq 0}\underbrace{\Big(1-\Big|\widehat{r_l}\Big(\frac{\omega-\eta}{(3/2)^{N-2}R}\Big)\Big|^2\Big)}_{\geq 0}\label{aqkerkjhferkh}\\
&= 1-\Big|\widehat{r_l}\Big(\frac{\omega-\eta}{(3/2)^{N-2}R}\Big)\Big|^2+|\widehat{g_1}(\omega)|^2\Big(\Big|\widehat{r_l}\Big(\frac{\omega-\eta}{(3/2)^{N-2}R}\Big)\Big|^2\nonumber\\
&-\Big|\widehat{r_l}\Big(\frac{\omega-\nu_1}{(3/2)^{N-2}R}\Big)\Big|^2\Big),\label{bibi1}
\end{align}
where \eqref{qhwehjheh} is thanks to $\widehat{g_k}(\omega)=0$, for $k\in \Z\backslash\{ 0,1\}$ and $\omega\in [0,R]$, which, in turn, is by \eqref{ee4} and \eqref{ee5}. Moreover, \eqref{aqkerkjhferkh} is owing to $|\widehat{g_1}(\omega)|^2\in [0,1]$, which, in turn, is by \eqref{eq:Gaborcondi}, and $0\leq \widehat{r_l}(\omega)\leq 1$, for $\omega \in \R$. Next, fix $k\geq 2$, and  let $\omega \in [Rk,R(k+1)]$. Then, we have\vspace{0.09cm}
\begin{align}
&h_{l,N-2}(\omega)=|\widehat{g_k}(\omega)|^2\Big(1-\Big|\widehat{r_l}\Big(\frac{\omega-\nu_k}{(3/2)^{N-2}R}\Big)\Big|^2\Big)\nonumber\\
&+\underbrace{|\widehat{g_{k-1}}(\omega)|^2}_{=(1-|\widehat{g_k}(\omega)|^2-|\widehat{\phi}(\omega)|^2)}\underbrace{\Big(1-\Big|\widehat{r_l}\Big(\frac{\omega-\nu_{k-1}}{(3/2)^{N-2}R}\Big)\Big|^2\Big)}_{\geq 0}\label{qkerkjhferkh}\\
&\leq 1-\Big|\widehat{r_l}\Big(\frac{\omega-\nu_{k-1}}{(3/2)^{N-2}R}\Big)\Big|^2+|\widehat{g_k}(\omega)|^2\Big(\Big|\widehat{r_l}\Big(\frac{\omega-\nu_{k-1}}{(3/2)^{N-2}R}\Big)\Big|^2\nonumber\\
&-\Big|\widehat{r_l}\Big(\frac{\omega-\nu_k}{(3/2)^{N-2}R}\Big)\Big|^2\Big),\label{bibi2}
\end{align}
where \eqref{qkerkjhferkh} is thanks to i) $\widehat{g_{k'}}(\omega)=0$, for $k'\in \Z\backslash\{ 0,k,k-1\}$ and $\omega\in [Rk,R(k+1)]$, which, in turn, is by \eqref{ee4} and \eqref{ee5}, and ii) 
\begin{equation}\label{qlsnkln}
|\widehat{\chi}(\omega)|^2+|\widehat{g_{k-1}}(\omega)|^2+|\widehat{g_k}(\omega)|^2=1,
\end{equation}
for $\omega\in [Rk,R(k+1)],$ which is a consequence of the Littlewood-Paley condition \eqref{eq:Gaborcondi} and of \eqref{ee4} and \eqref{ee5}. It follows from \eqref{bibi1} and \eqref{bibi2} that for $k\geq 1$, 
\begin{align}
&h_{l,N-2}(\omega)\leq 1-\Big|\widehat{r_l}\Big(\frac{\omega-\nu_{k-1}}{(3/2)^{N-2}R}\Big)\Big|^2\nonumber\\
&+|\widehat{g_k}(\omega)|^2\underbrace{\Big(\Big|\widehat{r_l}\Big(\frac{\omega-\nu_{k-1}}{(3/2)^{N-2}R}\Big)\Big|^2-\Big|\widehat{r_l}\Big(\frac{\omega-\nu_k}{(3/2)^{N-2}R}\Big)\Big|^2\Big)}_{=:s(\omega)},\nonumber
\end{align}
for $\omega \in [Rk,R(k+1)]$, where $\nu_0:=\eta$. Next, we divide the interval $[Rk,R(k+1)]$ into two intervals, namely $$I_L:=[Rk,R(k+1/6)]$$ and $$I_R:=[R(k+1/6),R(k+1)],$$ and note that $s(\omega)\geq0$, for $\omega \in I_L$, and $s(\omega)\leq0$, for $\omega \in I_R$, as $\widehat{r_l}$ is monotonically decreasing in $|\omega|$ and $|\omega-\nu_{k}|\geq |\omega-\nu_{k-1}|$, for $\omega \in I_L$, and $|\omega-\nu_k|\leq |\omega-\nu_{k-1}|$, for  $\omega \in I_R$,
respectively (see Figure \ref{fig:h2}). 
\begin{figure}
\begin{center}
\begin{tikzpicture}
	\begin{scope}[scale=1]
	\draw[->] (-.25,0) -- (5.5,0) node[right] {$\omega$};
	\draw[->] (0,-.5) -- (0,2.1) node[above] {};

		\draw (1 cm,2pt) -- (1 cm,-2pt) node[anchor=north] {\small$Rk$};
		\draw (2 cm,2pt) -- (2 cm,-.5pt) node[anchor=north] { \ \ \ \ \small$R(k+\frac{1}{6})$};
		\draw (5 cm,2pt) -- (5 cm,-2pt) node[anchor=north] {\small$R(k+1)$};
		

			\draw (1.6 cm,1.175cm) -- (1.5 cm,1.475cm) node[anchor=south] {\small$h_3$};
			\draw (3.15 cm,1.125cm) -- (3.25 cm,.825cm) node[anchor=north] {\small$h_2$};			
			\draw (4.75 cm,.45cm)--(4.65 cm,.75cm)  node[anchor=south] {\small$h_1$};

	\draw[thick,dashed, domain=.5:5.5,samples=100] plot (\x,{.4*abs(\x-.33)});
	\draw[thick,smooth, domain=.5:5.5,samples=100] plot (\x,{.4*abs(\x-3.67)});
	\draw[thick,dotted, domain=.5:5.5,samples=100] plot (\x,{.4*(2.16+0.5*\x)});
	\end{scope}
\end{tikzpicture}
\vspace{0.2cm}
\end{center}
\caption{The functions $h_1(\omega):=|\omega-\nu_k|$ (solid line), $h_2(\omega):=|\omega-\nu_{k-1}|$ (dashed line), and $h_3(\omega)=\frac{2}{3}\omega$ (dotted line) satisfy 
$h_2\leq h_1\leq h_3$ on $I_L=[Rk,R(k+\frac{1}{6})]$ and $h_1\leq h_2\leq h_3$ on $I_R=[R(k+\frac{1}{6}),R(k+1)]$.}
\label{fig:h2}
\end{figure}
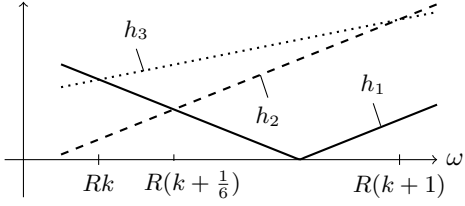
For  $\omega\in I_L$, we therefore have
\begin{align*}
&h_{l,N-2}(\omega)\leq1-\Big|\widehat{r_l}\Big(\frac{\omega-\nu_{k-1}}{(3/2)^{N-2}R}\Big)\Big|^2\nonumber+\underbrace{|\widehat{g_k}(\omega)|^2}_{\in \,[0,1]}\underbrace{s(\omega)}_{\geq 0}\nonumber\\
&\leq 1-\Big|\widehat{r_l}\Big(\frac{\omega-\nu_{k-1}}{(3/2)^{N-2}R}\Big)\Big|^2\nonumber+s(\omega)\\
&=1-\Big|\widehat{r_l}\Big(\frac{\omega-\nu_{k}}{(3/2)^{N-2}R}\Big)\Big|^2 \leq1-\Big|\widehat{r_l}\Big(\frac{\omega}{(3/2)^{N-1}R}\Big)\Big|^2,
\end{align*}
where $|\widehat{g_k}(\omega)|^2\in [0,1]$ follows  from \eqref{qlsnkln}, and the last inequality is by 
$
|\omega-\nu_k|\leq \frac{2\omega}{3}$, for  $\omega \in I_L$ (see Figure \ref{fig:h2}). For the interval $\omega\in I_R$, we have
\begin{align*}
&h_{l,N-2}(\omega)\leq1-\Big|\widehat{r_l}\Big(\frac{\omega-\nu_{k-1}}{(3/2)^{N-2}R}\Big)\Big|^2\nonumber+\underbrace{|\widehat{g_k}(\omega)|^2}_{\in \,[0,1]}\underbrace{s(\omega)}_{\leq 0}\\
&\leq 1-\Big|\widehat{r_l}\Big(\frac{\omega-\nu_{k-1}}{(3/2)^{N-2}R}\Big)\Big|^2\nonumber \leq 1-\Big|\widehat{r_l}\Big(\frac{\omega}{(3/2)^{N-1}R}\Big)\Big|^2,
\end{align*}
where the last inequality is by 
$
|\omega-\nu_{k-1}|\leq \frac{2\omega}{3}$, for  $\omega \in I_R$ (see Figure \ref{fig:h2}). This completes the proof of \eqref{main_g:wave}.

Next, we establish \eqref{main_sog1}. The proof is very similar to that of statement  ii) in Theorem \ref{thm3} in Appendix \ref{Sob1Ap}. We start by noting that \eqref{main_sog1} amounts to the existence of constants $C_{1,s},C_{2,s}>0$ (that are independent of $N$) such that
\begin{equation}\label{here1ab}
W_N(f)\leq C_{1,s} (3/2)^{-2sN}, \quad \forall s\in (0, 1/2), \ \forall N\geq 1,
\end{equation}
and
\begin{equation}\label{here2ab}
W_N(f)\leq C_{2,s} (3/2)^{-N}, \quad \forall s\in [1/2,\infty), \ \forall N\geq 1.
\end{equation}
The key idea of the proof of  \eqref{here1ab} is to upper-bound the integral on the RHS of \eqref{main_g:wave} according to 
\begin{align}
&\int_{\R}|\widehat{f}(\omega)|^2 \Big(1-\Big|\widehat{r_l}\Big(\frac{\omega}{(3/2)^{N-1}R}\Big)\Big|^2\Big)\mathrm d\omega\nonumber\\
&\leq \int_{\R}|\widehat{f}(\omega)|^2  \min\bigg\{1, \frac{2l\,|\omega|}{(3/2)^{N-1}R} \bigg\}\mathrm d\omega\label{flksdjfdkljkfdsj111ab}\\
&= \int_{B_{\tau}(0)}\hspace{-0.25cm}|\widehat{f}(\omega)|^2 \frac{2l\,|\omega|}{(3/2)^{N-1}R}\mathrm d\omega+ \int_{\R \backslash B_{\tau}(0)}\hspace{-0.5cm}|\widehat{f}(\omega)|^2  \mathrm d\omega\label{flksdjfdkljkfddddssj111ab},
\end{align}
where $$\tau:=\frac{(3/2)^{N-1}R }{2l}.$$ Here, the inequality in \eqref{flksdjfdkljkfdsj111ab} follows from \eqref{skfkhdjkwhejkhfhjf}, and \eqref{flksdjfdkljkfddddssj111ab} is owing to 
$$
\min\bigg\{1, \frac{2l\,|\omega|}{(3/2)^{N-1}R} \bigg\} = \begin{cases} 
\frac{2l \, |\omega|}{(3/2)^{N-1}R}, &|\omega| \leq \tau,\\
1, &|\omega| > \tau.
\end{cases}
$$
Now, the first integral in \eqref{flksdjfdkljkfddddssj111ab} satisfies
\begin{align}
&\int_{B_{\tau}(0)}\hspace{-0.25cm}|\widehat{f}(\omega)|^2 \frac{2l\,|\omega|}{(3/2)^{N-1}R}\mathrm d\omega \nonumber\\
&= \frac{2l}{(3/2)^{N-1}R}\int_{B_{\tau}(0)}\hspace{-0.25cm}|\widehat{f}(\omega)|^2|\omega|^{1-2s}|\omega|^{2s}\mathrm d\omega\nonumber\\
&\leq \frac{2l \,\tau^{1-2s}}{(3/2)^{N-1}R}\int_{B_{\tau}(0)}\hspace{-0.25cm}|\widehat{f}(\omega)|^2\underbrace{|\omega|^{2s}}_{\leq (1+|\omega|^2)^s}\mathrm d\omega\label{skjshdfkjjkhfdhfjsk12234ab}\\
&  \leq \Big(\frac{2l}{(3/2)^{N-1}R } \Big)^{2s} \int_{B_{\tau}(0)}|\widehat{f}(\omega)|^2(1+|\omega|^2)^s\,\mathrm d\omega \label{djkahdajkshjksahjshjshab},
\end{align}
where \eqref{skjshdfkjjkhfdhfjsk12234ab} is owing to $$|\omega| \mapsto (1+|\omega|)^{1-2s}$$ monotonically increasing in $|\omega|$ for $s\in (0,1/2)$. For the second integral in \eqref{flksdjfdkljkfddddssj111ab}, we have 
\begin{align}
&\int_{\R\backslash B_{\tau}(0)}\hspace{-0.25cm}|\widehat{f}(\omega)|^2 \mathrm d\omega = \int_{\R\backslash B_{\tau}(0)}\hspace{-0.25cm}|\widehat{f}(\omega)|^2|\omega|^{-2s}|\omega|^{2s}\mathrm d\omega\nonumber\\
&\leq\tau^{-2s} \int_{\R\backslash B_{\tau}(0)}\hspace{-0.25cm}|\widehat{f}(\omega)|^2 \underbrace{|\omega|^{2s}}_{\leq (1+|\omega|^2)^s}\mathrm d\omega\label{skffrkejshdfkjjkhfdhfjsk12234ab}\\
&\leq \Big( \frac{2l}{(3/2)^{N-1}R }\Big)^{2s} \int_{\R\backslash B_\tau(0)}|\widehat{f}(\omega)|^2(1+|\omega|^2)^s \,\mathrm d\omega\label{aslkjdslkjkdjkjlab},
\end{align}
where \eqref{skffrkejshdfkjjkhfdhfjsk12234ab} is thanks to $$|\omega| \mapsto |\omega|^{-2s}, \quad \omega \in \Rd,$$ monotonically decreasing in $|\omega|$ for $s\in (0,1/2)$. Inserting \eqref{djkahdajkshjksahjshjshab} and \eqref{aslkjdslkjkdjkjlab} into \eqref{flksdjfdkljkfddddssj111ab} establishes \eqref{here1ab} with $$C_{1,s}:= (2l)^{2s}(3/2)^{2s} R^{-2s} \| f\|_{H^s}^2.$$ Next, we show \eqref{here2ab} by noting that 
\begin{align}
&\int_{\R}|\widehat{f}(\omega)|^2 \Big(1-\Big|\widehat{r_l}\Big(\frac{\omega}{(3/2)^{N-1}R}\Big)\Big|^2\Big)\mathrm d\omega\nonumber\\
&\leq \int_{\R}|\widehat{f}(\omega)|^2  \min\bigg\{1, \frac{2l\,|\omega|}{(3/2)^{N-1}R} \bigg\}\mathrm d\omega\label{flksdjfdkljkfdsj13345511ab}\\
&\leq  \frac{2l}{(3/2)^{N-1}R} \int_{\R}|\widehat{f}(\omega)|^2  |\omega| \,\mathrm d\omega \nonumber\\
&\leq  \frac{2l}{(3/2)^{N-1}R} \int_{\R}|\widehat{f}(\omega)|^2  (1+|\omega|^2)^s \,\mathrm d\omega\nonumber\\
& = \frac{2l}{(3/2)^{N-1}R} \| f\|_{H^s}^2 \nonumber,
\end{align}
where \eqref{flksdjfdkljkfdsj13345511ab} is by \eqref{skfkhdjkwhejkhfhjf}, and  the last inequality follows from $|\omega| \leq (1+|\omega|^2)^s$, for  $\omega \in \R$ and $s\in [1/2,\infty)$. This establishes \eqref{here2ab} with $C_{2,s}:= 2l\,(3/2) R^{-1} \| f\|_{H^s}^2$ and thereby completes the proof of statement ii).

\section{Proof of Corollary \ref{cor1}}\label{aca}
We start with statement i) and note that $A_\Omega^N=B_\Omega^N=1$,  $N\in \mathbb{N}$, by assumption. Let $f\in L^2(\Rd)$ with $$\text{supp}(\widehat{f})\subseteq B_L(0).$$ Then, by Proposition \ref{prop1} in Appendix \ref{app:prop1} together with $$\lim\limits_{N\to \infty}W_N(f)=0,\quad \forall f\in L^2(\Rd),$$  which follows from Proposition \ref{prop2} in Appendix \ref{bum}, we have
\begin{align}\vspace{-0.5cm}
\| f \|_2^2&=|||\Phi_\Omega(f)|||^2=\sum_{n=0}^\infty||| \Phi_\Omega^n(f)|||^2\nonumber\\
&\geq\sum_{n=0}^N||| \Phi_\Omega^n(f)|||^2=\| f\|^2_2-W_{N+1}(f)\label{o2}  \\
&\geq \int_{\Rd}|\widehat{f}(\omega)|^2\Big|\widehat{r_l}\Big(\frac{\omega}{(N+1)^\alpha\delta}\Big)\Big|^2\mathrm d\omega\label{o3}\\
&=\int_{B_L(0)}|\widehat{f}(\omega)|^2\Big|\widehat{r_l}\Big(\frac{\omega}{(N+1)^\alpha\delta}\Big)\Big|^2\mathrm d\omega\label{o4},
\end{align}
 where \eqref{o2} is by the lower bound in \eqref{main_eq:decomp}, \eqref{o3} is thanks to Parseval's formula and \eqref{main_h:up1}, and \eqref{o4} follows from $f$ being $L$-band-limited. Next, thanks to $\widehat{r_l}$  monotonically decreasing in $|\omega|$, we get
\begin{equation}\label{o5}
\Big|\widehat{r_l}\Big(\frac{\omega}{(N+1)^\alpha\delta}\Big)\Big|^2\geq \Big|\widehat{r_l}\Big(\frac{L}{(N+1)^\alpha\delta}\Big)\Big|^2,
\end{equation}
for  $\omega \in B_L(0)$. Employing \eqref{o5} in \eqref{o4}, we obtain
\begin{align}
\| f\|_2^2&\geq \Big|\widehat{r_l}\Big(\frac{L}{(N+1)^\alpha\delta}\Big)\Big|^2 \| f\|_2^2\label{o6}\\
&=\Big( 1-\frac{L}{(N+1)^\alpha\delta}\Big)_{+}^{2l}\| f\|_2^2\nonumber\\
&=\Big( 1-\frac{L}{(N+1)^\alpha\delta}\Big)^{2l}\| f\|_2^2\geq (1-\varepsilon)\| f\|_2^2\label{o7},
\end{align}
where in \eqref{o6} we used Parseval's formula, the equality in \eqref{o7} is due to $L\leq (N+1)^\alpha\delta$, which, in turn, is by \eqref{hihih}, and the inequality in \eqref{o7} is also by \eqref{hihih} (upon rearranging terms). This establishes \eqref{bubu_main} and thereby completes the proof. 

The proof of statement ii) is very similar to that of statement i). Again, we start by noting that $A_\Omega^N=B_\Omega^N=1$,  $N\in \mathbb{N}$, by assumption. Let $f\in L^2(\R)$ with $\text{supp}(\widehat{f})\subseteq B_L(0)$. Then, by Proposition \ref{prop1} in Appendix \ref{app:prop1} together with $\lim\limits_{N\to \infty}W_N(f)=0$, for $f\in L^2(\R)$, we have
\begin{align}
\| f \|_2^2&=|||\Phi_\Omega(f)|||^2=\sum_{n=0}^\infty||| \Phi_\Omega^n(f)|||^2\nonumber\\
&\geq\sum_{n=0}^N||| \Phi_\Omega^n(f)|||^2=\| f\|^2_2-W_{N+1}(f)\label{ao2}  \\
&\geq \int_{\R}|\widehat{f}(\omega)|^2\Big|\widehat{r_l}\Big(\frac{\omega}{a^N\delta}\Big)\Big|^2\mathrm d\omega\label{ao3}\\
&=\int_{B_L(0)}|\widehat{f}(\omega)|^2\Big|\widehat{r_l}\Big(\frac{\omega}{a^N\delta}\Big)\Big|^2\mathrm d\omega\label{ao4},
\end{align}
 where \eqref{ao2} is by the lower bound in \eqref{main_eq:decomp}, \eqref{ao3} is thanks to Parseval's formula and \eqref{main_h:wave} and \eqref{main_g:wave}, and \eqref{ao4} follows from $f$ being $L$-band-limited. Next, thanks to $\widehat{r_l}$  monotonically decreasing in $|\omega|$, we get
\begin{equation}\label{ao5}
\Big|\widehat{r_l}\Big(\frac{\omega}{a^N\delta}\Big)\Big|^2\geq \Big|\widehat{r_l}\Big(\frac{L}{a^N\delta}\Big)\Big|^2,\hspace{0.75cm}\forall \, \omega \in B_L(0). 
\end{equation}
Employing \eqref{ao5} in \eqref{ao4} yields
\begin{align}
\| f\|_2^2&\geq \Big|\widehat{r_l}\Big(\frac{L}{a^N\delta}\Big)\Big|^2 \| f\|_2^2\label{ao6}=\Big( 1-\frac{L}{a^N\delta}\Big)_{+}^{2l}\| f\|_2^2\\
&=\Big( 1-\frac{L}{a^N\delta}\Big)^{2l}\| f\|_2^2\geq (1-\varepsilon)\| f\|_2^2\label{ao7},
\end{align}
where in \eqref{ao6} we used Parseval's formula, the equality in \eqref{ao7} is by $L\leq a^N\delta$, which, in turn, is by \eqref{hihih2}, and the inequality in \eqref{ao7} is also due to \eqref{hihih2} (upon rearranging terms). This establishes \eqref{bubu_main} and thereby completes the proof of ii).

\section{Proof of Corollary \ref{cor3}}\label{acas}
The proof is very similar to that of Corollary \ref{cor1} in Appendix \ref{aca}. We start with statement i). Let $f\in H^s(\Rd)\backslash\{ 0\}$ and $\eps \in (0,1)$ and note that, by \eqref{here1} and \eqref{here2} together with $B_\Omega^N=1$,  $N\in \mathbb{N}$, which is by assumption, we have 
\begin{equation}\label{sflhdshfdshfshfdsjk2}
W_N(f)\leq \frac{(2l)^{\gamma}\|f\|^2_{H^s}}{\delta^\gamma N^{\gamma \alpha}}, \quad \forall s >0,
\end{equation}
where $\gamma = \min\{1,2s\}$. By Proposition \ref{prop1} in Appendix \ref{app:prop1} with $A_\Omega^N = B_\Omega^N =1$, $N \in \mathbb{N}$,  and $\lim\limits_{N\to \infty}W_N(f)=0$, $f\in L^2(\Rd)$, which follows from Proposition \ref{prop2} in Appendix \ref{bum}, we have \vspace{0.1cm}
\begin{align}
\| f \|_2^2&=|||\Phi_\Omega(f)|||^2=\sum_{n=0}^\infty||| \Phi_\Omega^n(f)|||^2\nonumber\\
&\geq\sum_{n=0}^N||| \Phi_\Omega^n(f)|||^2=\| f\|^2_2-W_{N+1}(f)\label{wo2}  \\
&\geq \|f\|_2^2 - \frac{(2l)^{\gamma}\|f\|^2_{H^s}}{\delta^\gamma (N+1)^{\gamma \alpha}}\label{wo3}\\
&\geq\| f\|_2^2 - \varepsilon \| f\|_2^2 = (1-\varepsilon)\| f\|_2^2, \label{wo4}
\end{align}
 where \eqref{wo2} is by the lower bound in \eqref{main_eq:decomp}, \eqref{wo3} is thanks to \eqref{sflhdshfdshfshfdsjk2}, and \eqref{wo4} follows from \eqref{hihihs}. This establishes \eqref{bubu_main} and thereby completes the proof of i).

The proof of statement ii) is very similar to that of statement i). Let $f\in H^s(\R)\backslash\{ 0\}$ and $\eps \in (0,1)$ and note that, by \eqref{here1a}, \eqref{here2a}, \eqref{here1ab}, and \eqref{here2ab}, we have 
\begin{equation}\label{sflhdshfdshfshfdsjk2a}
W_N(f)\leq \frac{(2l)^{\gamma} \|f\|^2_{H^s}}{\delta^\gamma a^{\gamma (N-1)}}, \quad \forall s >0,
\end{equation}
where $\gamma = \min\{1,2s\}$. By Proposition \ref{prop1} in Appendix \ref{app:prop1} with $A_\Omega^N = B_\Omega^N =1$, $N \in \mathbb{N}$,  and $\lim\limits_{N\to \infty}W_N(f)=0$, $f\in L^2(\R)$, which follows from Proposition \ref{prop2} in Appendix \ref{bum}, we have \vspace{0.1cm}
\begin{align}
\| f \|_2^2&=|||\Phi_\Omega(f)|||^2=\sum_{n=0}^\infty||| \Phi_\Omega^n(f)|||^2\nonumber\\
&\geq\sum_{n=0}^N||| \Phi_\Omega^n(f)|||^2=\| f\|^2_2-W_{N+1}(f)\label{wo2a}  \\
&\geq \|f\|_2^2 - \frac{(2l)^{\gamma}\|f\|^2_{H^s}}{\delta^\gamma a^{\gamma N}}\label{wo3a}\\
&\geq\| f\|_2^2 - \varepsilon \| f\|_2^2 = (1-\varepsilon)\| f\|_2^2, \label{wo4a}
\end{align}
 where \eqref{wo2a} is by the lower bound in \eqref{main_eq:decomp}, \eqref{wo3a} is thanks to \eqref{sflhdshfdshfshfdsjk2a}, and \eqref{wo4a} follows from \eqref{hihihs2}. This establishes \eqref{bubu_main} and thereby completes the proof of ii). 

\section*{Acknowledgments}
\addcontentsline{toc}{section}{Acknowledgment}
The authors are thankful to Verner Vla\v{c}i{\'c} for pointing out errors in arguments on the spectral decay of Sobolev functions and on the volume of tubes in an earlier version of the manuscript.

\vspace{-1cm}
\begin{IEEEbiographynophoto}{Thomas Wiatowski} was born in Strzelce Opolskie, Poland, on December 20, 1987, and received the BSc in Mathematics and the MSc in Mathematics from Technical University of Munich, Germany, in 2010 and 2012, respectively. In 2012 he was a  researcher at the Institute of Computational Biology at the Helmholtz Zentrum in Munich, Germany. He joined ETH Zurich in 2013, where he graduated  with the Dr.\ sc.\ degree in 2017. His research interests are in deep machine learning, mathematical signal processing, and applied harmonic analysis.
\end{IEEEbiographynophoto}
\vspace{-1.cm}
\begin{IEEEbiographynophoto}{Philipp Grohs}  received the MSc and  PhD degrees in Applied Mathematics from 
Vienna University of Technology in 2006 and 2007, respectively.  He is currently an Associate Professor of Mathematics at the
University of Vienna. His research
interests are in signal processing, numerical analysis, computational
harmonic analysis, machine
learning, and computational geometry. He is the recipient of the 2014 ETH
Latsis Prize. Since 2017 he has been serving as an associate editor of the IEEE Transactions on Information Theory. He is a member of the board of directors of the Austrian Mathematical Society.
\end{IEEEbiographynophoto}

\vspace{-1.cm}
\begin{IEEEbiographynophoto}{Helmut B\"olcskei} was born in M\"odling, Austria, on May 29, 1970, and received the Dipl.-Ing.\ and Dr.\ techn.\ degrees in electrical engineering from Vienna University of Technology, Vienna, Austria, in 1994 and 1997, respectively. In 1998 he was with Vienna University of Technology. From 1999 to 2001 he was a postdoctoral researcher in the Information Systems Laboratory, Department of Electrical Engineering, and in the Department of Statistics, Stanford University, Stanford, CA. He was in the founding team of Iospan Wireless Inc., a Silicon Valley-based startup company (acquired by Intel Corporation in 2002) specialized in multiple-input multiple-output (MIMO) wireless systems for high-speed Internet access, and was a co-founder of Celestrius AG, Zurich, Switzerland. From 2001 to 2002 he was an Assistant Professor of Electrical Engineering at the University of Illinois at Urbana-Champaign. He has been with ETH Zurich since 2002, where he is a Professor of Electrical Engineering. He was a visiting researcher at Philips Research Laboratories Eindhoven, The Netherlands, ENST Paris, France, and the Heinrich Hertz Institute Berlin, Germany. His research interests are in information theory, mathematical signal processing, machine learning, and statistics.

He received the 2001 IEEE Signal Processing Society Young Author Best Paper Award, the 2006 IEEE Communications Society Leonard G. Abraham Best Paper Award, the 2010 Vodafone Innovations Award, the ETH ``Golden Owl'' Teaching Award, is a Fellow of the IEEE, a 2011 EURASIP Fellow, was a Distinguished Lecturer (2013-2014) of the IEEE Information Theory Society, an Erwin Schr\"odinger Fellow (1999-2001) of the Austrian National Science Foundation (FWF), was included in the 2014 Thomson Reuters List of Highly Cited Researchers in Computer Science, and is the 2016 Padovani Lecturer of the IEEE Information Theory Society. He served as an associate editor of the IEEE Transactions on Information Theory, the IEEE Transactions on Signal Processing, the IEEE Transactions on Wireless Communications, and the EURASIP Journal on Applied Signal Processing. He was editor-in-chief of the IEEE Transactions on Information Theory during the period 2010-2013. He served on the editorial board of the IEEE Signal Processing Magazine  and is currently on the editorial boards of ``Foundations and Trends in Networking'' and ``Foundations and Trends in Communications and Information Theory''.  He was TPC co-chair of the 2008 IEEE International Symposium on Information Theory and the 2016 IEEE Information Theory Workshop and currently serves as $2$nd vice president of the IEEE Information Theory Society. He has been a delegate of the president of ETH Zurich for faculty appointments since 2008.
\end{IEEEbiographynophoto}

\end{document}